\documentclass{article}
\usepackage[a4paper]{geometry}
\usepackage[utf8]{inputenc}
\usepackage{amsmath,amsfonts,amssymb,latexsym,epsfig}
\usepackage{amsthm}
\usepackage{subcaption}
\usepackage{comment}
\usepackage{xcolor}
\usepackage{caption}
\usepackage{makecell}
\usepackage{url}
\usepackage{multirow}
\usepackage{fullpage}

\usepackage{algorithm}
\usepackage{algpseudocode}

\newtheorem{theorem}{Theorem}[section]
\newtheorem{lemma}[theorem]{Lemma}

\newtheorem{proposition}[theorem]{Proposition}

\newcommand{\review}[1]{\textcolor{black}{#1}}

\newcommand{\lk}{f_y}
\newcommand{\pr}{g}
\newcommand{\pot}{U}

\newcommand{\prox}{\mathrm{prox}}
\newcommand{\R}{\mathbb{R}}

\newcommand{\norm}[1]{\left\lVert#1\right\rVert}

\title{Accelerated Bayesian imaging by relaxed proximal-point Langevin sampling}
\author{Teresa Klatzer $^{1,3}$ \and Paul Dobson$^{2,3}$\and  Yoann Altmann$^{4}$ \and Marcelo Pereyra$^{2,3}$ \and Jes\'{u}s Mar\'{i}a Sanz-Serna$^{5}$ \and Konstantinos C. Zygalakis$^{1,3}$}
\date{\today}

\begin{document}

\maketitle

\begin{abstract}
This paper presents a new accelerated proximal Markov chain Monte Carlo methodology to perform Bayesian inference in imaging inverse problems with an underlying convex geometry. The proposed strategy takes the form of a stochastic relaxed proximal-point iteration that admits two complementary interpretations. For models that are smooth or regularised by Moreau-Yosida smoothing, the algorithm is equivalent to an implicit midpoint discretisation of an overdamped Langevin diffusion targeting the posterior distribution of interest. This discretisation is asymptotically unbiased for Gaussian targets and shown to converge in an accelerated manner for any target that is $\kappa$-strongly log-concave (i.e., requiring in the order of $\sqrt{\kappa}$ iterations to converge, similarly to accelerated optimisation schemes), comparing favorably to [M. Pereyra, L. Vargas Mieles, K.C. Zygalakis, SIAM J. Imaging Sciences, 13,2 (2020), pp. 905-935] which is only provably accelerated for Gaussian targets and has bias. For models that are not smooth, the algorithm is equivalent to a Leimkuhler–Matthews discretisation of a Langevin diffusion targeting a Moreau-Yosida approximation of the posterior distribution of interest, and hence achieves a significantly lower bias than conventional unadjusted Langevin strategies based on the Euler-Maruyama discretisation. For targets that are $\kappa$-strongly log-concave, the provided non-asymptotic convergence analysis also identifies the optimal time step which maximizes the convergence speed. The proposed methodology is demonstrated through a range of experiments related to image deconvolution with Gaussian and Poisson noise, with assumption-driven and data-driven convex priors. \review{Source codes for the numerical experiments of this paper are available from \url{https://github.com/MI2G/accelerated-langevin-imla}.}
\end{abstract}

\footnotetext[1]{School of Mathematics, University of Edinburgh, Edinburgh, Scotland, UK}
\footnotetext[2]{School of Mathematical and Computer Sciences, Heriot-Watt University, Edinburgh, Scotland, UK}
\footnotetext[3]{Maxwell Institute for Mathematical Sciences, Bayes Centre, 47 Potterrow, Edinburgh, Scotland, UK}
\footnotetext[4]{School of Engineering and Physical Sciences, Heriot-Watt University, Edinburgh Scotland, UK}
\footnotetext[5]{Departamento de Matemáticas, Universidad Carlos III de Madrid, Leganés (Madrid), Spain}

\section{Introduction}
The problem of estimating an unknown image from noisy measurements is central to imaging sciences (see, e.g., \cite{Kaipio:stat-inv-problems05,chambolle_pock_2016,arridge_maass_oktem_schonlieb_2019}, for examples related to image denoising, inpainting, tomographic reconstruction, medical imaging, resolution enhancement and computer vision). Estimation by direct inversion of the forward model relating the unknown image to the data is not
usually possible, since the inverse problem is often severely ill-conditioned or ill-posed. The literature describes a range of
mathematical frameworks to incorporate regularisation and formulate well-posed solutions (see, e.g., \cite{Kaipio:stat-inv-problems05,chambolle_pock_2016,arridge_maass_oktem_schonlieb_2019}). 

We consider imaging methodology rooted in the Bayesian statistical framework \cite[Chapter 3]{Kaipio:stat-inv-problems05}. The Bayesian statistical framework is an intrinsically probabilistic paradigm in which statistical models represent the data observation process and the prior information available, and solutions are delivered using inference techniques from Bayesian decision theory \cite{robert2007bayesian}. In particular, the Bayesian decision-theoretic approach provides a range of powerful strategies to quantify and summarise the uncertainty in the solutions delivered, which is important for applications that use the restored images as evidence for decision-making or in subsequent scientific analysis (see, e.g., \cite{durmus2017, pereyra-map-2017, adler2018deep, repetti2019scalable, zhou2020bayesian, Yao_2022, MDAMZ}). 

Applying Bayesian strategies to imaging inverse problems involves some significant modelling and computational challenges. This paper focuses on the computational aspects of performing Bayesian inference in imaging problems with an underlying convex geometry (i.e., with a posterior distribution that is log-concave), for which we can provide detailed theoretical guarantees. We consider both assumption-driven \cite{chambolle_pock_2016} and data-driven strategies \cite{goujon2022crrnn,kobler-vn-2017,mukherjee2021icnn}, with particular attention to situations with no or very limited ground truth data available, where adopting a large data-driven model would not be possible. 

Modern approaches for Bayesian computation in imaging inverse problems can be broadly classified into the following three groups. The first group encompasses approximation methods, such as message-passing and variational Bayes algorithms (see, e.g., \cite[Section 3]{pereyra:hal-01312917, Yao_2022}) which include the increasingly prevalent denoising diffusion approaches and other data-driven strategies based on large machine learning models (see, e.g., \cite{kawar2022denoising,adler2018deep}). Strategies in this first group are computationally efficient and can support a variety of inferences but are relatively model specific, offer limited guarantees, and can exhibit convergence issues. The second group encompasses deterministic and stochastic optimisation strategies \cite[Section 4]{pereyra:hal-01312917}, including proximal convex optimisation approaches \cite{chambolle_pock_2016}. These are used predominantly to compute maximum-a-posteriori solutions but can also be useful for some basic forms of uncertainty quantification (see, e.g., \cite{pereyra-map-2017, repetti-pey-2019}). Within their limited scope of action, optimisation approaches scale efficiently to large problems and offer strong guarantees on the solutions delivered. The third group encompasses Monte Carlo integration methods \cite{pereyra:hal-01312917}, including modern Markov chain Monte Carlo (MCMC) strategies based on approximations of the Langevin diffusion \cite{pereyra:hal-01312917, durmus2017,8625467,PVZ20,MDAMZ,Laumont2021}. Monte Carlo strategies are very versatile and can support complex Bayesian inferences, at the expense of a significantly higher computational cost. Some modern MCMC strategies also provide strong guarantees on the delivered solutions \cite{durmus2017,Laumont2021}.

In this paper, we focus on MCMC methodology for Bayesian inference in imaging problems with an underlying convex geometry. Efficient MCMC simulation in this context is challenging for three main reasons: the high dimensionality of the solution space, lack of smoothness, and poor conditioning (i.e., the posterior density is highly anisotropic). Modern MCMC methods typically address the high dimensionality involved by mimicking a Langevin diffusion process that exploits first-order (gradient) information to scale to large problems very efficiently \cite[Section 2]{pereyra:hal-01312917}. Unfortunately, Bayesian imaging models often violate the regularity conditions that Langevin strategies require in order to be effective. This difficulty can be addressed by exploiting ideas and techniques from non-smooth convex optimisation as proposed in \cite{PereyraMarcelo2016PMcM}, resulting in proximal MCMC methods such as the Moreau Yoshida Unadjusted Langevin Algorithm (MYULA) \cite{durmus2017} and the Split-and-Augmented Gibbs sampler \cite{8625467,pereyra2023split} (see Section \ref{subsec:nonsmooth} for more details). It is also possible to derive proximal MCMC methods from a reflected Langevin diffusion that naturally incorporates constraints in the solution space \cite{MDAMZ}. 

MCMC methods derived from the Langevin diffusion process can become pathologically slow when the posterior distribution of interest is highly anisotropic, as often encountered in problems that are severely ill-conditioned or ill-posed. Indeed, similarly to gradient descent methods, standard Langevin sampling methods have a computational cost in the order of $\kappa$ for targets that are $\kappa$-strongly log-concave. This phenomenon can be demonstrated empirically on simple Gaussian models (see \cite[Figure 2]{PVZ20}) as well as established theoretically by conducting a non-asymptotic convergence analysis (see, e.g., \cite{durmus2017}). This analysis also reveals that the difficulties in dealing with $\kappa \gg 1$ stem from the choice of the discrete-time approximation of the Langevin diffusion used, based on a Euler-Maruyama scheme, and not from the continuous-time process itself.

Overcoming this type of behaviour with respect to the condition number $\kappa$ is very important and there are currently two main strategies to improve the convergence speed when $\kappa \gg 1$. One strategy is to exploit the structure of the problem in order to use some form of preconditioning \cite{CorbineauKCTP19,vono2022-admm,pereyra2023split}. This can be highly effective in some cases, but it is difficult to apply generally. A promising alternative is to leverage ideas from the numerical analysis of stochastic differential equations in order to develop MCMC methods from better discrete-time approximation of the Langevin diffusion that are inherently more robust to $\kappa \gg 1$. A prime example of this approach is \review{to use a stochastic second kind orthogonal Runge-Kutta-Chebyshev method} (SKROCK) \cite{PVZ20}, which exhibits \emph{acceleration}\footnote{We use the term acceleration in a manner akin to the convex optimisation literature; other works, e.g., \cite{Jordan21}, use this term to describe an improvement w.r.t. the problem dimension.} in that it only requires in the order of $\sqrt{\kappa}$ gradient evaluations to converge and hence dramatically outperforms MYULA when $\kappa$ is large. However, theoretically analysing SKROCK is notoriously difficult and current convergence results only hold for Gaussian target densities.

In this paper, we develop and study the Implicit Midpoint Langevin Algorithm (IMLA) for computation in Bayesian imaging models that are log-concave. Each iteration of IMLA can be viewed as a relaxed proximal-point step with a stochastic perturbation. We show that IMLA is asymptotically unbiased for Gaussian target distributions and converges in a provably accelerated manner for all distributions that are $\kappa$-strongly log-concave and smooth (or that are regularised by smoothing, e.g., by using a Moreau-Yosida approximation as described in Section 2), providing a rigorous alternative to SKROCK \cite{PVZ20} which is only provably accelerated for Gaussian models and has bias.

{The remainder of this paper is structured as follows: Section \ref{sec:Problem} introduces the class of Bayesian models considered as well as the overdamped Langevin diffusion process and its numerical approximations for models that are smooth and non-smooth. Section \ref{sec:discrete} presents the proposed IMLA method and provides detailed theoretical convergence guarantees for Gaussian and strongly log-concave models, including the optimal time-step choice to optimise convergence speed; the presented theory is illustrated with a range of toy examples. Following on from this, Section \ref{sec:Numerical} demonstrates IMLA on two challenging image restoration problems and by comparisons with SKROCK. Conclusions and perspectives for future work are finally reported in Section \ref{sec:conclusion}.}




\section{Problem Statement}\label{sec:Problem}

\subsection{Bayesian inference and imaging inverse problems}\label{subsec:BayesianInference}

We consider imaging inverse problems where we seek to estimate the unknown image $x \in \mathcal{X}$ from an observation $y$, related to $x$ by a forward statistical model with likelihood function $p(y|x)$. Typical examples include:
\begin{itemize}
    \item Gaussian likelihood $y \sim \mathcal{N}(Ax,\sigma^2 \mathbb{I})$:
    \begin{equation}\label{eq:GaussianLikelihood} p(y|x) \propto \exp\left( -\frac{||Ax-y||^2_2}{2\sigma^2}\right), \qquad \mathcal{X} = \R^d,\end{equation}
    \item Poisson likelihood $y \sim \mathcal{P}(Ax,\beta)$:
    \begin{equation}\label{eq:PoissonLikelihood}
    p(y|x) \propto \exp\left\{\sum_{i=1}^p[(Ax)_i + \beta -y_i\log((Ax)_i + \beta)]\right\}, \quad \mathcal{X} =\{x: (Ax)_i\geq 0\},
    \end{equation}
\end{itemize}
where $A \in \R^{p\times d }$ is a linear operator which models physical properties of the observation process, $\sigma>0$ in \eqref{eq:GaussianLikelihood}
is related to the intensity of the additive observation noise, and $\beta >0$ in \eqref{eq:PoissonLikelihood} represents a background Poisson noise level.

The operator $A \in \R^{p\times d }$ is usually rank-deficient or $A^TA$ has a poor condition number $\kappa \gg 1$, making the imaging problem ill-conditioned or ill-posed. To make the problem well-posed and deliver meaningful solutions one needs to incorporate regularisation \cite{Kaipio:stat-inv-problems05}. The Bayesian statistical framework introduces regularisation by defining a prior distribution $p(x)$ describing expected properties of the unknown image $x$. Observed and prior information are then combined by using Bayes' Theorem to obtain the model's posterior distribution, with density given by \cite{Kaipio:stat-inv-problems05}
\begin{equation}
    \pi (x) := p(x|y) = \frac{p(y|x)p(x)}{\int_{\mathcal{X}}p(y|x)p(x)dx}.
\end{equation}
For Bayesian computation, it is convenient to reformulate $\pi$ as follows:
\begin{equation}\label{eq:def_pi}
    \pi(x) \propto e^{-\pot(x)} = e^{-f_y(x) - g(x)}\, ,
\end{equation} 
with $f_y: \mathcal{X} \rightarrow \R$ and $g : \mathcal{X} \rightarrow \R $; representing the data likelihood and the prior information respectively. 

In this paper, we assume that $f_{y}$ and $g$ satisfy the following conditions:
\begin{enumerate}
    \item $f_{y}$ is convex with Lipschitz continuous gradient.
    \item $g$ is proper, convex and lower semi-continuous, but potentially non-smooth.
\end{enumerate}
Requiring $g$ to be convex on $\mathcal{X}$ significantly restricts the class of data-driven priors that can be considered. However, convexity provides some significant benefits in terms of guaranteeing the well-posedness of $p(x|y)$ and of posterior moments of interest (i.e., existence and stability w.r.t. changes in $y$, see \cite{hosseini-17} for details). Convexity is also central to the analysis of the convergence properties of the Bayesian computation methods used to perform inference w.r.t. $p(x|y)$: convexity allows for faster convergence rates, tighter non-asymptotic bounds, and stronger guarantees on the delivered solutions. 
Choices of $g$ that respect convexity include, e.g., the $\ell_1$-norm on some sparsifying basis, {the} total-variation pseudo-norm, {as well as} a number of recently proposed data-driven priors such as the convex ridge neural network regularizer \cite{goujon2022crrnn}, {the} convex variant of the variational network \cite{kobler-vn-2017}, or {the} input convex neural network \cite{mukherjee2021icnn}. Moreover, with regards to the Lipschitz condition on $\nabla f_y$, {even though} this condition is {not satisfied} in the case of {some} important non-Gaussian likelihoods (e.g. Poisson with $\beta = 0$, Binomial, and Geometric),  an appropriate approximation of the likelihood resolves this issue (see, e.g., \cite{MDAMZ}).

Drawing inferences about $x$ from $y$ by operating directly with $p(x|y)$ is usually not possible in imaging sciences because of the high dimensionality involved. As a result, imaging methods compute summaries of $p(x|y)$, such as Bayesian point estimators, posterior probabilities, moments, and expectations of interest. In particular, for the class of log-concave models considered here, a common choice is to compute the maximum-a-posteriori (MAP) estimator
\begin{equation}\label{eq:map-def}
    \hat{x}_{\mathrm{MAP}} = \arg\max_{x\in\mathcal{X}} \pi(x) = \arg\min_{x\in\mathcal{X}} \{f_y(x)+g(x)\},
\end{equation}
which is the optimal Bayesian estimator of $x$ w.r.t. to the $(f_y+g)$-Bregman loss \cite{pereyra-map} and can be efficiently computed by proximal convex optimisation techniques \cite{chambolle_pock_2016, Combettes2011}. Other Bayesian point estimators and more complicated Bayesian analyses such as uncertainty quantification, calibration of model parameters, model selection and hypothesis testing, are generally beyond the scope of optimisation algorithms as they require calculating probabilities and moments with respect to $\pi$. However, as mentioned previously, posterior probabilities and moments of interest can be computed efficiently by using proximal MCMC methods derived from the Langevin diffusion process \cite{langevinmeetsmoreau18}. These algorithms can be applied directly to perform inference w.r.t. $\pi$ (see, e.g., \cite{10.1093/mnras/sty2004,MDAMZ}), or embedded in more complex inference machinery to calibrate unknown model parameters \cite{vidal2019maximum} or perform model selection \cite{Cai2022}.

\subsection{Sampling using the overdamped Langevin equation}\label{subsec:Lang_cont}

The fastest provably accurate Bayesian computation algorithms to perform inference w.r.t. log-concave models such as $\pi$ are derived from the overdamped\footnote{\review{We use the term overdamped here to distinguish from the underdamped Langevin SDE also called the kinetic Langevin SDE, the overdamped Langevin SDE is obtained in the large friction limit of the underdamped Langevin SDE \cite{Lelievre2010}.}} Langevin SDE (OLSDE)\review{\cite{roberts1996,RobertChristianP.2004MCsm}}.  Let $\bar{\pi}$ be a smooth density that is log-concave on $\mathcal{X}$ and consider the OLSDE
\begin{equation}\label{eq:OLSDE}
    dX_t= \nabla \log\bar{\pi}(X_t) dt + \sqrt{2}dW_t,
\end{equation}
where $\{W_t\}_{t\geq 0}$ is a $d$-dimensional Brownian motion. Under mild conditions on $\bar{\pi}$, this SDE is well-posed and converges provably quickly to $\bar{\pi}$ (its unique invariant distribution), providing a foundation for fast and rigorous MCMC algorithms to perform computations for $\bar{\pi}$.

The OLSDE does not have an analytic solution in general so we 
consider numerical approximations of this SDE. The most widely used approximation is the Unadjusted Langevin Algorithm (ULA) given by
\begin{equation} \label{eq:ULA}
X_{n+1}=X_{n}+\delta \nabla \log{\bar{\pi}(X_{n})}+\sqrt{2\delta}\xi_{n},
\end{equation}
where $\delta >0$ and $\{\xi_{n}\}_{n \geq 1}$ is a sequence of $d$-dimensional standard Gaussian random variables. Despite its simplicity, ULA can be remarkably fast and scale efficiently to very large problems \cite{durmus2017}. However, a necessary condition to establish the convergence of ULA is that $\delta < L^{-1}$ where $L$ is the Lipschitz constant of $\nabla \log\bar\pi(x)$ \cite{durmus2017}. Unfortunately, in problems that are ill-conditioned, as commonly encountered in imaging sciences, this leads to $\delta$ being very small in comparison to the time scales required for convergence to $\pi$, which can be shown to be of order $\kappa = L/m$ when $\pi$ is $m$-strongly log-concave. The slow convergence leads to highly correlated samples and to an inefficient exploration of the solution space. This issue can be addressed by designing specialized stochastic integrators such as the SKROCK algorithm \cite{abdulle18,PVZ20}, which behave in an accelerated manner in the sense that the iterates converge to $\pi$ with a time scale of order $\sqrt{\kappa}$. Because often in Bayesian imaging problems $\sqrt{\kappa} \ll \kappa$, SKROCK can lead to very significant reductions in computing time w.r.t. ULA (e.g., \cite{PVZ20,pereyra2023split} report speed-up factors of the order of \review{ $20$ to $30$}). 

Despite this remarkable empirical performance, very little is known about the theoretical convergence properties of SKROCK, with existing results only covering the case where $\bar\pi$ is Gaussian. As a result, whenever SKROCK is applied to a new imaging problem it is necessary to benchmark its performance against a more reliable method (e.g., ULA) on some test problems. In addition, the lack of theory for SKROCK makes it difficult to embed it within more complex inference schemes such as \cite{vidal2019maximum}, or to provide practitioners with principled strategies to set its internal parameters. The method that we present in Section \ref{sec:discrete} seeks to address these issues by providing a new accelerated Langevin MCMC scheme that is empirically as computationally efficient as SKROCK, but has a much stronger theoretical footing. Before concluding this section, we briefly discuss the case where $\pi$ is not smooth.

\subsection{Non-smooth distributions}\label{subsec:nonsmooth}
The Bayesian models encountered in imaging sciences are often not smooth (e.g., because of the use of non-smooth norms or because of constraints on the solution space), and this presents some challenges to the application of gradient-based strategies such as ULA and SKROCK that require smoothness. For models with an underlying convex geometry, proximal MCMC methods \cite{PereyraMarcelo2016PMcM,langevinmeetsmoreau18} deal with the lack of smoothness by replacing $\pi$ with a smooth approximation $\pi^\lambda(x)$ such as 
\begin{equation}\label{eq:MYposterior}
    \pi^\lambda(x) \propto \exp\left(-\lk(x)-\pr^\lambda(x)\right)
\end{equation}
where $g^\lambda$ is the Moreau-Yosida envelope of $g$ defined by
\begin{equation*}
    g^\lambda(x) = \min_{u\in\R^d} \left\{g(u)+\frac{1}{2\lambda}\lVert x-u\rVert^2\right\}.
\end{equation*}
Approximations based on other splittings can be considered too (see, e.g., \cite{vidal2019maximum}). The density $\pi^\lambda$ is log-concave and Lipschitz smooth, with gradient
    \begin{align}\label{eq:moreauIdentity}
        \nabla \log \pi^\lambda(x) &= -\nabla \lk(x) - \nabla \pr^\lambda(x)\\
        &=  -\nabla \lk(x) - \frac{1}{\lambda}(x-\prox_{\theta g}^\lambda(x)),
    \end{align}
    with Lipschitz constant $L\leq L_f+\lambda^{-1}$. Here $\prox_{g}^\lambda$ is the proximal operator of $g$ given by
    \begin{equation}\label{eq:prox}
        \prox_g^\lambda (x)  = \arg\min_{u\in \R^d} \left\{ g(u) + \frac{1}{2\lambda}\lVert x-u\rVert^2\right\}.
    \end{equation}
Since the OLSDE with $\pi$ replaced by $\pi^\lambda$ is well-defined, we can consider sampling algorithms derived from discretisations of this SDE. {In  particular, applying an OLSDE-based algorithm such as ULA or SKROCK to $\pi^\lambda$ is considered in \cite{langevinmeetsmoreau18,PVZ20} as a means of performing approximate computation w.r.t. $\pi$. Indeed, since $\pi^\lambda$ converges to $\pi$ in Total Variation as $\lambda \rightarrow 0$ and can be made arbitrarily close to $\pi$ by reducing $\lambda$, see \cite[Proposition~1]{langevinmeetsmoreau18}, the samples generated by targeting $\pi^\lambda$ reliably approximate $\pi$.}

It is worth mentioning at this point that {both ULA and SKROCK} produce biased Monte Carlo samples because of the error stemming from the discretisation of the OLSDE. Of course, there is also non-asymptotic bias stemming from the fact that the algorithms run for a finite number of iterations. In principle, the asymptotic bias related to the discretisation error could be removed by using a Metropolis-Hastings (MH) correction step, as in the Metropolis Adjusted Langevin Algorithm (MALA) \cite{pereyra:hal-01312917}. However, in high-dimensional problems that are ill-conditioned or ill-posed, introducing an accept-reject mechanism can lead to drastically slower convergence and to a large increase in the non-asymptotic error (see, e.g., \cite{langevinmeetsmoreau18}). The development of unadjusted proximal Langevin methods with smaller bias is therefore valuable, particularly in imaging problems that require computing tail probabilities or high-order moments that are more sensitive to the estimation bias than the posterior mean.



\section{Proposed method: IMLA} \label{sec:discrete}
We are now ready to present our proposed accelerated Bayesian computation method to perform inference w.r.t. $\pi$, for which we are able to provide detailed convergence results including non-asymptotic convergence bounds and an optimal choice of the discretisation time step to maximize the convergence speed. The method is based on the following recursion:
\begin{equation}\label{eq:IMLAprox}
\textrm{IMLA}:\quad   X_{n+1} = \left(1-\frac{1}{\theta}\right)X_n + \frac{1}{\theta}\prox_{\pot}^{\delta\theta}(X_n+\theta\sqrt{2\delta}\xi_n)\,,
\end{equation}
for $\theta\in [0,1]$, $\delta>0$ and $\{\xi_n\}_{n\geq 1}$ a sequence of independent $d$-dimensional standard Gaussian random variables. Recursion \eqref{eq:IMLAprox} is a stochastic modification of a relaxed proximal point algorithm to compute the MAP solution $\hat{x}_{MAP}$ (definition see \eqref{eq:map-def}), or equivalently to minimize the potential $U(\cdot) = -\log \pi(\cdot)$ \cite{Combettes2011}.

In order to analyze \eqref{eq:IMLAprox} it is useful to express it explicitly as a minimisation problem.\footnote{Recall that $\pot$ is defined in \eqref{eq:def_pi}.}
\begin{equation}\label{eq:implicitscheme}
    \begin{aligned}
        X_{n+1}&=\mathrm{arg min}_{x\in \mathbb{R}^d} F(x;X_n;\xi_{n+1}),\\
        F(x;u,z) &:= \theta^{-1}\pot(\theta x+(1-\theta) u) +\frac{1}{2\delta}\lVert x-u-\sqrt{2\delta}z\rVert^2\,.
    \end{aligned}
\end{equation}
\review{To obtain this formulation note that \eqref{eq:implicitscheme} expresses $X_{n+1}$ in terms of the proximal operator for the function $U$ precomposed with an affine transformation, that is, for $X_n$ fixed, set $\tilde{U}(x) = U(\theta x+(1-\theta) X_n)$ then \eqref{eq:implicitscheme} is equivalent to the expression
\begin{equation}\label{eq:implicitscheme_prox_form}
    X_{n+1}=\prox_{\tilde{U}}^{\delta/\theta}(X_n+\sqrt{2\delta}\xi_{n+1}).
\end{equation}
We now use the following property of proximal operators (see \cite[Section~2.2]{parikh2014proximal} for details) that if $\varphi(x)=\psi(\alpha x+\beta)$ for $\alpha,\lambda>0$ and $\beta \in\R^d$ then
\begin{equation*}
    \prox_\varphi^\lambda (x) = \frac{1}{\alpha}( \prox^{\alpha^2\lambda}_\psi(\alpha x+\beta)-\beta).
\end{equation*}
Setting $\psi=U, \alpha =\theta, \lambda =\delta/\theta$ and $\beta = (1-\theta) X_n$ we have
\begin{equation}\label{eq:prox_precomposed}
    \prox_{\tilde{U}}^{\delta/\theta}(X_n+\sqrt{2\delta}\xi_{n+1}) = \frac{1}{\theta}\left( \prox^{\theta \delta}_U( X_n+\sqrt{2\delta}\theta\xi_{n+1})-(1-\theta) X_n\right).
\end{equation}
By combining \eqref{eq:implicitscheme_prox_form} and \eqref{eq:prox_precomposed} we have that \eqref{eq:implicitscheme} is equivalent to \eqref{eq:IMLAprox}.}
Note that the scheme \eqref{eq:implicitscheme} is well defined when $\pot$ is convex or {under} the weaker condition {$\pot \in \mathcal{C}^{2}$} with the second order derivative $\delta^{-1} I + \theta \nabla^2\pot \succeq 0$. 

When $U \in \mathcal{C}^1$, \review{$X_{n+1}$ is the minimiser of $F$ if and only if $\nabla F(X_{n+1};X_n,\xi_{n+1}) = 0$} and therefore \review{by rearranging $\nabla F(X_{n+1};X_n,\xi_{n+1})=0$ the expressions} \eqref{eq:IMLAprox} and \eqref{eq:implicitscheme} are also equivalent to the following implicit (midpoint) approximation of an OLSDE \eqref{eq:OLSDE} targeting the posterior density of interest $\pi(x) \propto \exp\{-\pot(x)\}$:
\begin{equation}\label{eq:thetamethod}
     X_{n+1}=X_n -\delta\nabla U\left(\theta X_{n+1}+(1-\theta) X_n\right) + \sqrt{2\delta} \xi_{n+1}\, .
\end{equation}
The recursion \eqref{eq:thetamethod} belongs to the class of numerical integration methods commonly known as ``$\theta$-methods'', which combine explicit and implicit first-order gradient steps \cite{HNW91}. We are particularly interested in the case $\theta=1/2$, where \eqref{eq:thetamethod} is a stochastic modification of the implicit midpoint method, motivating the name of our proposed algorithm: Implicit Midpoint Langevin Algorithm (IMLA). Observe that when $\theta=0$, \eqref{eq:thetamethod} reduces to ULA, and when $\theta=1$, this corresponds to the implicit Euler scheme applied to OLSDE \cite{hodgkinson2021implicit}(ILA).

Moreover, by using the convexity of $U$, the identity \eqref{eq:moreauIdentity}, and setting $\theta = 1/2$, we find that the sequence of iterates $Y_n = X_n+\theta\sqrt{2\delta}\xi_n$ is described by the following non-Markovian\footnote{\review{This scheme is non-Markovian since $Y_{n+1}$ depends on the noise term $\xi_{n+2}$ which is correlated with the noise introduced in $Y_{n+2}$.}} scheme:
\begin{equation}\label{eq:LM}
    Y_{n+1}=Y_n -\delta\nabla U^{\delta/2}\left(Y_n\right) + \sqrt{2\delta} {\left(\frac{\xi_{n+1}}{2}+\frac{\xi_{n+2}}{2}\right)}\, ,
\end{equation}
which corresponds to a Leimkuhler–Matthews (LM) discretisation \cite{LM12} of a Langevin diffusion targeting the Moreau-Yosida approximation $\pi^{\delta/2}$ of $\pi$ given by \eqref{eq:MYposterior}, which is well-posed even when $U$ is not smooth. Accordingly, the post-processed  \cite{V15} iterates $X_n = Y_n - \sqrt{\delta/2}\xi_n$ converge to an approximation of $\pi$ that stems from correcting $\pi^{\delta/2}$ by performing a deconvolution of $\pi^{\delta/2}$ with the inverse of a Gaussian smoothing kernel of bandwidth $\delta/2$.

With regards to the convergence properties of the method, both the midpoint approximation \eqref{eq:thetamethod} with $\theta = 1/2$ and the LM approximation \eqref{eq:LM} are known to be exact for Gaussian target distributions. That is, despite the discretisation of the OLSDE, the asymptotic bias vanishes completely as $n$ increases \cite{KCZ11,LM12}. For more general log-concave distributions that are not Gaussian, the methods exhibit some bias, but are usually considerably more accurate than the conventional ULA schemes based on an Euler-Maruyama approximation of the OLSDE, such as MYULA \cite{durmus2017}. 

Moreover, while a small bias is desirable, a primary concern in Bayesian computation for imaging problems is to achieve a fast convergence rate without incurring excessive bias. And while the dimension of the problem can affect the rate of convergence, the main challenge is to mitigate the impact of the high anisotropy of the target density $\pi$, which is inherent to imaging problems that are ill-conditioned or ill-posed. More precisely, as mentioned previously, the convergence rate of ULA strategies deteriorates as $\kappa$ for target densities that are $\kappa$-strongly log-concave \cite{durmus2017}. The accelerated algorithm SKROCK \cite{PVZ20} empirically achieves a significantly better rate that only deteriorates as $\sqrt{\kappa}$, but this can only be verified theoretically in the case of a Gaussian target distribution.

Sections \ref{subsec:Gaussian} and \ref{subsec:Convex} below present a detailed non-asymptotic convergence analysis of \eqref{eq:IMLAprox} which shows that the method achieves acceleration for all smooth distributions tested that are strongly log-concave. The analysis also allows for identifying the optimal time step $\delta$ that maximizes the convergence speed of the method. It is worth mentioning at this point that our proposed method has similarities to the implicit schemes proposed in \cite{hodgkinson2021implicit, Wibisono18}, as they all coincide  in the Gaussian case. We have chosen to use \eqref{eq:IMLAprox} as opposed to the schemes described in \cite{hodgkinson2021implicit, Wibisono18} because \eqref{eq:IMLAprox} is well-posed for models that are not smooth, and because our analysis shows that \eqref{eq:IMLAprox} has better non-asymptotic convergence bounds.

It is also worth mentioning at this point that if $\pot$ is $L$-smooth and $m$-strongly convex, then $F$ is strongly convex with conditioning number $\frac{1 + \theta L \delta}{1 + \theta m \delta}$ meaning that from a computational point of view, for any $\delta$, one step of the algorithm is at most as expensive as minimizing $U$ directly to compute a MAP solution. In practice, a step of the algorithm is much cheaper because it is possible to use past iterations to warm-start computations. This, combined with an efficient solver, allows implementing \eqref{eq:IMLAprox} with a computational cost that is comparable to explicit accelerated methods from the state of the art such as SKROCK \cite{PVZ20}.

\subsection{Analysis in the Gaussian case}\label{subsec:Gaussian}
We start our analysis for our proposed method from the representation \eqref{eq:thetamethod} in the case where 
\begin{equation} \label{eq:quad}
\pi(x) \propto \exp \left( -\frac{1}{2}x^{T}\Sigma^{-1} x \right), \quad \Sigma=\text{diag}(\sigma^{2}_{1}, \cdots, \sigma^{2}_{d}),
\end{equation}
and study its convergence in the $2$-Wasserstein distance, as a function of the parameter $\theta$ and the condition number $\kappa=\sigma^{2}_{\text{max}}/\sigma^{2}_{\text{min}}.$ In this case,  because $\log{\pi}$ is quadratic, we can solve directly for $X_{n+1}$ in \eqref{eq:thetamethod}. As the covariance matrix is diagonal, each co-ordinate in \eqref{eq:OLSDE} becomes independent from the rest. {Note that the same analysis could be done for any covariance matrix $\Sigma$ using singular value decomposition}.

One step of IMLA for the $i$-th coordinate in the case of \eqref{eq:quad} is given by 
\begin{equation} \label{eqn:generalNumMethod}
    X^{i}_{n+1}=R_{1}(z_{i})X^{i}_{n}+\sqrt{2\delta }R_{2}(z_{i})\xi^{i}_{n}, \quad \xi^{i}_{n}\sim \mathcal{N}(0,1),
\end{equation}
where $z_{i}=-\Delta t /\sigma^{2}_{i}$ and 
$X_{0}=(X^{1}_{0},\cdots,X^{d}_{0})$ is a deterministic initial condition, while
\begin{equation}\label{eq:R1R2}
R_{1}(z) =\frac{1+(1-\theta)z}{1-\theta z}, \quad R_{2}(z)=\frac{1}{1-\theta z}.
\end{equation}
Using the fact that Gaussian distributions are closed under 
linear transformations, and assuming that the initial 
condition  $X_{0}$ is deterministic, we derive the 
distribution of $X_{n}$ for any $\delta > 0$. In fact, we have 
the following proposition \cite{PVZ20}
\begin{proposition} \label{prop:first}
	Let $\pi (x) \propto \exp{(-\frac{1}{2} x^T \Sigma^{-1} x)}$ with $\Sigma = \text{diag}(\sigma_1^2,...,\sigma_d^2)$, and let $Q_{n}$ be the probability measure associated with $n$ iterations of the generic Markov kernel \eqref{eqn:generalNumMethod}. Then the 2-Wasserstein distance between $\pi$ and $Q_{n}$ is given by
	\begin{equation}\label{eqn:wassersteinDistanceFinal}
	W_2(\pi;Q_n )^2 = \sum_{i=1}^d \left(D_n(z_i,x_0^i) + B_n(z_i,\sigma_i) \right)
	\end{equation}
	where
	\[
	D_n(z,x)  = (R_1(z))^{2n} x^2, \quad 
	B_n(z,\sigma) = \left[\sigma - \sqrt{2\delta} R_2(z) \left( \frac{1 - (R_1(z))^{2n}}{1 - (R_1(z))^2} \right)^{1/2} \right]^2. 
	\]
	{In addition the following bound holds
	\begin{equation}\label{eq:bbound}
	W_{2}(\pi;Q_{n} )  \leq  W_{2}(\pi;\tilde{\pi} )+C^{n} W_{2}(\tilde{\pi},Q_{0} )
	\end{equation}
	where 
	\begin{equation}\label{eq:numericalinvariant}
	\tilde{\pi} = \mathcal{N}\left(0,2\delta (R_{2}(z))^{2}\left[\frac{1}{1-R^{2}_{1}(z)}\right]   \right)
	\end{equation}
	is the numerical invariant measure and
	\begin{equation} \label{eq:contb}
	C=\sqrt{\max_{1\leq i \leq d}R_{1}(z_{i})^{2}}.
	\end{equation}
	}
\end{proposition}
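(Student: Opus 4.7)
The plan is to exploit the fact that, in the Gaussian setting with diagonal covariance, the recursion \eqref{eqn:generalNumMethod} decouples across coordinates into $d$ independent scalar AR(1) processes, and then to combine one-dimensional Wasserstein formulas with a coupling argument. Since $\pi$ is a product measure and each component $Q_n^i$ of $Q_n$ remains Gaussian (Gaussians being stable under affine maps), the $2$-Wasserstein distance between $\pi$ and $Q_n$ factorises as
\begin{equation*}
W_2(\pi;Q_n)^2 = \sum_{i=1}^d W_2(\pi^i; Q_n^i)^2,
\end{equation*}
so the whole proof reduces to a one-dimensional computation that is then summed in $i$.

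First, I would iterate \eqref{eqn:generalNumMethod} starting from the deterministic $X_0^i$ to obtain the explicit representation
\begin{equation*}
X_n^i = R_1(z_i)^n X_0^i + \sqrt{2\delta}\,R_2(z_i)\sum_{k=0}^{n-1} R_1(z_i)^k \xi_{n-1-k}^i,
\end{equation*}
from which one reads off that $X_n^i$ is Gaussian with mean $R_1(z_i)^n X_0^i$ and variance $2\delta R_2(z_i)^2(1-R_1(z_i)^{2n})/(1-R_1(z_i)^2)$, under the stability assumption $|R_1(z_i)|<1$ that I would verify separately for $\theta\in[0,1]$ and $z_i<0$ using the explicit formula \eqref{eq:R1R2}. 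I would then apply the well-known closed-form identity
\begin{equation*}
W_2\bigl(\mathcal N(\mu_1,s_1^2),\mathcal N(\mu_2,s_2^2)\bigr)^2 = (\mu_1-\mu_2)^2 + (s_1-s_2)^2
\end{equation*}
with $(\mu_1,s_1)=(0,\sigma_i)$ and $(\mu_2,s_2)$ as just computed. This directly yields $W_2(\pi^i;Q_n^i)^2 = D_n(z_i,X_0^i) + B_n(z_i,\sigma_i)$, and summing over $i$ gives \eqref{eqn:wassersteinDistanceFinal}.

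For the contraction bound \eqref{eq:bbound}, I would first identify $\tilde{\pi}$ in \eqref{eq:numericalinvariant} as the stationary law of the Markov chain: it is the unique fixed point of the map $\nu\mapsto \nu\ast\mathcal N(0,2\delta R_2(z)^2)\circ (R_1(z)\cdot)^{-1}$, and this is seen by taking $n\to\infty$ in the explicit expression for the law of $X_n^i$. The triangle inequality in $W_2$ then gives $W_2(\pi;Q_n)\le W_2(\pi;\tilde{\pi}) + W_2(\tilde{\pi};Q_n)$, so only the second term needs to be controlled. Using a synchronous coupling, i.e.\ driving one chain started from $\tilde{\pi}$ and another from $Q_0$ by the same Gaussian noise sequence, one has $X_n^i - \tilde X_n^i = R_1(z_i)(X_{n-1}^i - \tilde X_{n-1}^i)$ coordinatewise, and iterating yields $\|X_n-\tilde X_n\|\le C^n \|X_0-\tilde X_0\|$ with $C$ as in \eqref{eq:contb}. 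Taking expectations of the squared norm and minimising over couplings produces $W_2(\tilde\pi;Q_n)\le C^n W_2(\tilde\pi;Q_0)$, which closes the argument.

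The only nontrivial step is verifying the stability $|R_1(z_i)|<1$ for all admissible $\theta$ and $\delta$, since both the Gaussian representation of $Q_n^i$ and the existence and uniqueness of $\tilde\pi$ rely on it; once stability is in hand, the rest is essentially bookkeeping with explicit Gaussians and an elementary synchronous-coupling contraction.
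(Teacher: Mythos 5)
Your proposal is correct, and it is essentially the standard argument: the paper itself does not prove this proposition but imports it from \cite{PVZ20}, where the proof proceeds exactly as you describe (tensorisation of $W_2$ over the independent coordinates, the explicit Gaussian law of the scalar AR(1) iterates, the closed-form $W_2$ between one-dimensional Gaussians, and a synchronous coupling for the contraction bound). One point deserves more care than a passing ``verify separately'': the stability condition $|R_1(z_i)|<1$ is \emph{not} automatic for all $\theta\in[0,1]$ and $\delta>0$ --- for $\theta<1/2$ it fails once $\delta(1-2\theta)/\sigma_{\min}^2>2$, which is precisely the stability barrier the paper discusses for the explicit case --- so it should be stated as a hypothesis for \eqref{eq:bbound} and \eqref{eq:numericalinvariant} (note that the exact identity \eqref{eqn:wassersteinDistanceFinal} remains valid without it, since the ratio there is just the finite geometric sum $\sum_{k=0}^{n-1}R_1^{2k}$).
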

A closer look at the bound \eqref{eq:bbound} reveals that the constant $C$ controls how fast the numerical algorithm converges to its equilibrium behaviour and the term $W_{2}(\pi;\tilde{\pi} )$ characterizes the bias of the numerical algorithm at equilibrium. The first step in analyzing the convergence of the algorithm is thus to understand the behaviour of the constant $C$ for different values of the parameter $\delta$. Surprisingly, this analysis will also be relevant in the general (non-Gaussian) log-concave case, since the convergence to equilibrium of this numerical scheme is controlled by the same constant. The analysis of the behaviour of the constant $C$ as a function of $\delta$ for different values of $\theta$ is presented 
in detail in the Appendix, and can be used to determine the optimal number of steps $n$ required such that $W_{2}(\pi,Q_{n})^{2}<\epsilon^{2}$ according to the following proposition: 
\begin{proposition} \label{prop:summary}
Let $Q_{n}$ be the probability measure associated with the $n$-th iteration of the generic Markov kernel \eqref{eqn:generalNumMethod} starting at $X_{0}$. Then the number of steps $n$ required such that $W_{2}(\pi,Q_{n})^{2}<\epsilon^{2}$ is given by 
\begin{equation} \label{eq:n_steps}
n \approx
\begin{cases}
	\frac{\sqrt{\kappa}}{2}\big[\log\big(W_2(\pi,Q_0)\big)-\log(\epsilon)\big]		, & \theta=\frac{1}{2} \\
& \\
            \min\left( \frac{d\sigma_{\max}^{2}}{\epsilon^2}, \frac{\sqrt{d\kappa}\sigma_{\max}}{2\epsilon}\right) \big[\log(W_2(\pi,Q_0))-\log(2^{-1}\epsilon)\big], & \theta=1
		 \end{cases}
\end{equation}
with $\delta$ given by
\begin{equation*}
    \delta = \begin{cases}
        \delta_\ast, & \theta=\frac{1}{2}\\
         \max\left(\frac{\epsilon^2}{d},\frac{2\epsilon\sigma_{\min}}{\sqrt{d}}\right) & \theta=1.
    \end{cases}
\end{equation*}
where $\delta_\ast=\frac{2}{\sqrt{Lm}}=2\sigma_{\min}\sigma_{\max}$.
\end{proposition}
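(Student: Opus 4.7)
The plan is to apply the decomposition~\eqref{eq:bbound} and to choose $\delta$ so that each of the two summands is at most $\epsilon/2$. This reduces the task to (i) analysing the bias $W_2(\pi;\tilde\pi)$ as a function of $\delta$ and $\theta$, and (ii) analysing the contraction factor $C(\delta)$ in~\eqref{eq:contb}; the step count $n$ then comes from solving $C^{n}W_2(\tilde\pi,Q_0)\le\epsilon/2$ for $n$ and invoking $-\log C\ge 1-C$ (or a tangent at $C=1$).

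\emph{Case $\theta=1/2$.} A direct substitution of $R_1,R_2$ from~\eqref{eq:R1R2} (with $\theta=1/2$) into~\eqref{eq:numericalinvariant} simplifies the invariant per-coordinate variance to exactly $\sigma_i^{2}$, hence $\tilde\pi=\pi$ and the bias term vanishes, so only the contraction condition is binding. What remains is therefore to minimise $C(\delta)$. Since $R_1(z)=(1+z/2)/(1-z/2)$ is monotone on $(-\infty,0]$, its modulus over the finite set $\{z_i=-\delta/\sigma_i^{2}\}$ is attained at one of the two extreme indices $i\in\{1,d\}$; balancing $|R_1(-\delta/\sigma_{\min}^{2})|=|R_1(-\delta/\sigma_{\max}^{2})|$ forces $\delta_\ast=2\sigma_{\min}\sigma_{\max}$ and yields $C=(\sqrt\kappa-1)/(\sqrt\kappa+1)$. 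Combining $-\log C\ge 2/\sqrt\kappa$ with the contraction condition produces $n\approx(\sqrt\kappa/2)\log(W_2(\pi,Q_0)/\epsilon)$.

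\emph{Case $\theta=1$.} Here $R_1(z)=R_2(z)=1/(1-z)$, and~\eqref{eq:numericalinvariant} gives per-coordinate invariant variance $\sigma_i^{2}/(1+\delta/(2\sigma_i^{2}))$, hence
\[W_2(\pi;\tilde\pi)^{2}=\sum_{i=1}^{d}\sigma_i^{2}\Bigl(1-\bigl(1+\delta/(2\sigma_i^{2})\bigr)^{-1/2}\Bigr)^{2}.\]
The function $x\mapsto 1-(1+x)^{-1/2}$ admits two elementary bounds, namely a tangent bound $x/2$ (sharp for small $x$) and a sublinear bound $\sqrt{x}/2$ (useful for large $x$), which yield respectively $W_2(\pi;\tilde\pi)^{2}\le d\delta^{2}/(16\sigma_{\min}^{2})$ and $W_2(\pi;\tilde\pi)^{2}\le d\delta/8$. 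Imposing either of these to be at most $\epsilon^{2}/4$ produces the two candidate step sizes $\delta=2\epsilon\sigma_{\min}/\sqrt{d}$ and $\delta=\epsilon^{2}/d$; a larger $\delta$ reduces the iteration count, so the optimal choice is their maximum. Since $C=1/(1+\delta/\sigma_{\max}^{2})$ gives $-\log C\approx\delta/\sigma_{\max}^{2}$, substituting the two candidate $\delta$'s into $n\approx(\sigma_{\max}^{2}/\delta)\log(\cdot)$ recovers the two branches of the min in~\eqref{eq:n_steps}.

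The main obstacle is verifying that $\delta_\ast$ is in fact a global minimiser of $C(\delta)$ in the $\theta=1/2$ case — not merely a balancing point between the two extreme coordinates — which requires analysing the pointwise maximum of the two rational functions $\delta\mapsto|R_1(-\delta/\sigma_{\min}^{2})|$ and $\delta\mapsto|R_1(-\delta/\sigma_{\max}^{2})|$ over $(0,\infty)$; this monotonicity argument carries the bulk of the analytical work and is carried out in the Appendix. A secondary technicality is controlling the factor $W_2(\tilde\pi,Q_0)$ inside the logarithm uniformly in $\delta$: for $\theta=1/2$ this is immediate since $\tilde\pi=\pi$, while for $\theta=1$ the triangle inequality $W_2(\tilde\pi,Q_0)\le W_2(\pi,Q_0)+W_2(\pi,\tilde\pi)$ together with the bias bound $W_2(\pi,\tilde\pi)\le\epsilon/2$ absorbs the difference into the constant, which accounts for the factor $2^{-1}\epsilon$ appearing inside the logarithm in~\eqref{eq:n_steps}.
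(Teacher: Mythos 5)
Your proposal is correct and follows essentially the same route as the paper: the bias-plus-contraction decomposition \eqref{eq:bbound}, zero bias and optimisation of $C$ at the crossing point $\delta_\ast$ of the two monotone rational functions for $\theta=1/2$ (this is the paper's Lemma~\ref{lem:C}), and for $\theta=1$ the same two elementary bounds on the per-coordinate bias $\phi(\sigma,\delta)$ followed by taking the largest admissible $\delta$ and $|\log C|\approx\delta/\sigma_{\max}^{2}$ (the paper's Proposition~\ref{prop:theta_1_bias}). The only (immaterial) differences are that you derive the $O(\sqrt{d\delta})$ bias bound from the direct inequality $1-(1+x)^{-1/2}\le\sqrt{x}/2$ rather than the paper's case split on $\delta\sigma_i^{-2}\lessgtr 4$, and that the $2^{-1}\epsilon$ inside the logarithm really comes from splitting the error budget into two halves rather than from the $W_2(\tilde\pi,Q_0)$ versus $W_2(\pi,Q_0)$ substitution.
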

\begin{proof}
The proof of this proposition can be found in the Appendix \ref{proof-prop-2}.
\end{proof}
{In principle, we are able to repeat the analysis in Proposition \ref{prop:summary} for all the different values of $\theta$, but the two cases of interest are the ones where $\theta=1/2$ or $\theta=1$. In the case $\theta=1/2$, the expression for $n$ does not depend on $d$, while in the case of $\theta=1$ it does depend on $d$. For $\theta=1$, the dependence on $\epsilon$ is worse than logarithmic and the expression of $n$ remains bounded as $\kappa\uparrow \infty$. }
{These bounds show that $\theta =1/2$ will outperform $\theta = 1$ unless $d\sigma_{\max}^2/(\sqrt{\kappa} \epsilon^2)$ is small.}




 In order to visually illustrate the results of Proposition \ref{prop:summary} we now proceed with the following numerical experiment: We choose $X_0 = 1/\sqrt{d}$ (so that $\norm{X_0} = 1$) and $\sigma_i$ geometrically spaced between $1$ and $1/\sqrt{\kappa}$ {to mimic an ill-conditioned posterior distribution}. We also plot results for $\theta=0$ to compare with an explicit method in addition to the cases $\theta=1/2$ and $\theta=1$. 
In the case of $\theta=1/2$ since $W_{2}(\tilde{\pi},\pi)=0$ we have performed all our calculations for $\delta=\delta_{*}$, because this choice maximizes the convergence speed to equilibrium.  For $\theta=1$, we numerically determined the largest $\delta$ for which the bias $W_2(\pi,\tilde \pi)$ (which has a closed-form expression) is $\leq \epsilon/2$. An explicit calculation is presented in the Appendix and shows the following dependence 
\begin{equation}\label{eq:delta_theta1}
\delta = \max\left(\frac{\epsilon^2}{d},\frac{2\epsilon}{\sqrt{d\kappa}}\right).
\end{equation}
Similarly, for $\theta = 0$, we numerically computed the largest $\delta$ such that $C<1$ and $W_2(\pi,\tilde \pi)\leq \epsilon/2$, and subsequently determined the smallest $n$ for which $W_2(\pi,Q_n)$ given by \eqref{eqn:wassersteinDistanceFinal} is $\leq \epsilon$.

\begin{figure}[t]
\centering
\begin{subfigure}{.49\textwidth}
\includegraphics[width=\linewidth]{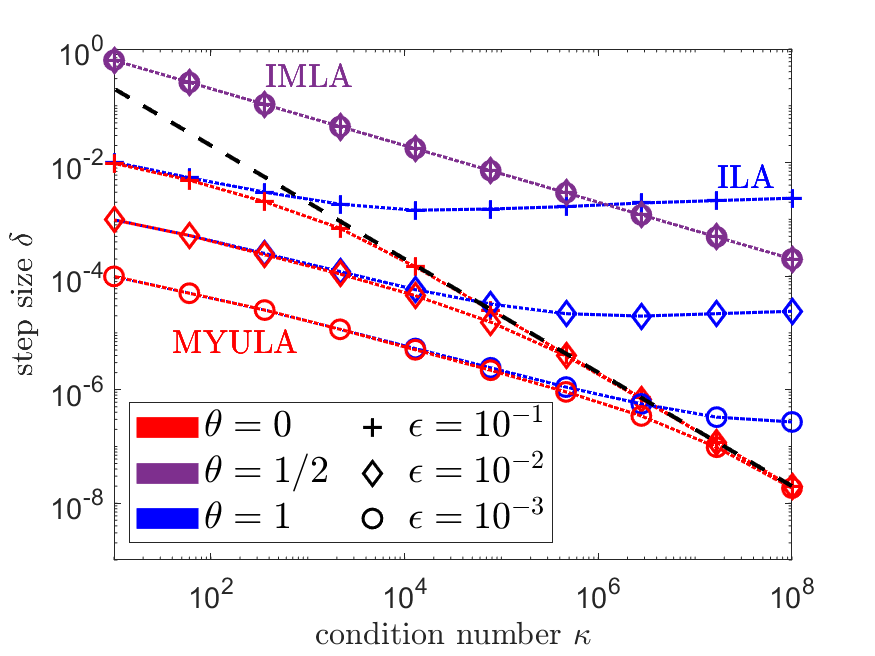}
\caption{$\delta$ against $\kappa$}
\label{figure1}
\end{subfigure}
\begin{subfigure}{.49\textwidth}
\includegraphics[width=\linewidth]{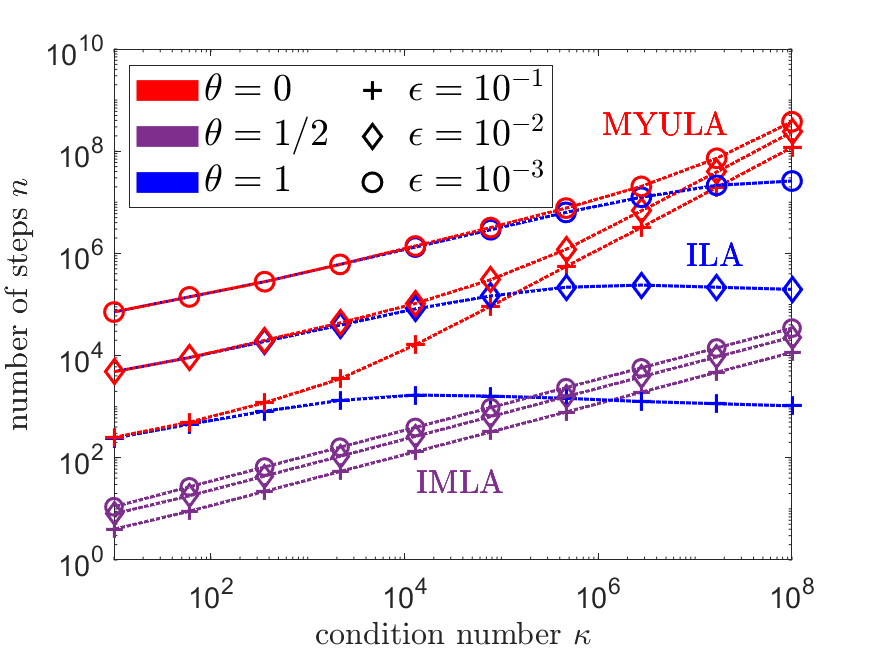}
\caption{$n$ against $\kappa$}
\label{figure2}
\end{subfigure}
\caption{(a) step-size $\delta$ as a function of the condition number $\kappa$ for different choices of $\theta$ (the black dotted line corresponds to $\kappa\delta =2$, the stability limit of the explicit integrator) and level of accuracy $\epsilon$. (b) Number of steps $n$ such that  $W_{2}(\pi,Q_{n})^{2}<\epsilon^{2}$ as a function of the condition number $\kappa$ for different choices of $\theta$ and $\epsilon$.}
\label{fig:conv-plot}
\end{figure}

{We summarise the results of these experiments\footnote{We have used $d=100$ and $\epsilon = 0.1, 0.01, 0.001$.} in Figures \ref{fig:conv-plot}(\subref*{figure1}) and  \ref{fig:conv-plot}(\subref*{figure2}). In Figure \ref{fig:conv-plot}(\subref*{figure1}) we plot the step size $\delta$ as a function of the conditioning number $\kappa$. We observe that the behavior is as predicted by \eqref{eq:delta_theta1} in Proposition \ref{prop:summary}: For $\theta = 1$ and fixed $\epsilon$, the step size $\delta$ decreases as $1/\sqrt{\kappa}$ initially, and eventually stops decreasing once $\delta$ becomes the order of $1/\kappa$ for increasing $\kappa$. The behaviour of $\delta$ as $\epsilon$ changes is as follows: On the left-hand side of the figure, dividing $\epsilon$ by 10 results in a division by 10 of $\delta$, whereas on the right-hand side of the figure, dividing $\epsilon$ by 10 leads to a decrease in $\delta$ by a factor of 100. The figure shows that the value of $\delta$ for $\theta = 1$ is smaller than the optimal value for $\theta = 1/2$ except for $\epsilon = 0.1$ and very large $\kappa$. In the case where $\theta = 0$, $\epsilon$ is fixed and $\kappa$ is varying there are two regimes: For small $\kappa$, the value of $\delta$ is chosen to ensure that the bias is sufficiently small, and the results are very similar\footnote{For $\theta=0$ the bias can be shown similarly to the proof of the proposition \eqref{eq:delta_theta1} to have a bound proportional to $\sqrt{d\kappa}\delta$ provided that $\delta$ is bounded away from the stability limit $2/\kappa$.} to those corresponding to $\theta=1$. However, for $\kappa$ large, the value of $\delta$ is really determined by the stability limit.}

{Figure \ref{fig:conv-plot}(\subref*{figure2}) gives $n$ as a function of $\kappa$ for the runs in Figure \ref{fig:conv-plot}(\subref*{figure1}). We again observe that the behavior is as predicted by \eqref{eq:delta_theta1} in Proposition \ref{prop:summary}: For $\theta=1/2$, increasing $\kappa$ and fixed $\epsilon$, $n$ increases in the order of $\sqrt{\kappa}$ and changes insignificantly depending on $\epsilon$. This is not the case for $\theta = 1$ or $\theta = 0$; in these cases, a reduction in $\epsilon$ implies a reduction in $\delta$ in order to decrease the bias, which in turn increases the contractivity constant $C\approx 1-\delta$. As a consequence, in the case of $\theta = 1$, $n$ initially increases like $\sqrt{\kappa}$ and then becomes constant as predicted. Only when $\epsilon = 0.1$ and $\kappa$ is very large, $\theta = 1$ improves on $\theta =1/2$. In the case $\theta =0$ and $\kappa$ large, the number of steps required behaves as $1/\kappa$ and is up to four orders of magnitude larger than the number of steps for $\theta= 1/2$. Therefore, for large $\kappa$, the implicit integrator with $\theta=1/2$ is to be preferred to the explicit algorithm even if its computational complexity per step is substantially higher.}

\subsection{Analysis in the strongly log-concave case}\label{subsec:Convex}

We extend the Gaussian non-asymptotic analysis to the strongly log-concave case. The following theorem is the analogue of Proposition \ref{prop:first}.

\begin{theorem}\label{thm:nonasymconv}
    Let $\pot=-\log \pi$ and suppose that $\pot\in C^2$, $m$-strongly convex and has gradient which is $L$-Lipschitz. For any probability distribution, $Q_0$, let $Q_n$ denote the probability distribution of $X_{n}$ where $X_0\sim Q_0$ and $X_k$ is given according to the scheme \eqref{eq:implicitscheme}. We assume that each step of the scheme \eqref{eq:implicitscheme} is solved to a tolerance of $\varepsilon$, i.e. that $\lVert \nabla F(X_{n+1};X_n,\xi_n)\rVert \leq \varepsilon$ for every $n$. Then for any initial probability distribution $Q_0$ with finite second moments we have
\begin{align}\label{eq:non_asm_bound}
W_2(Q_n,\pi)\leq & C^{n}W_2( Q_0,\pi) + \frac{1-C^{n+1}}{1-C}\frac{\frac{1}{2}\delta^2 L^{\frac{3}{2}}\sqrt{d} +\frac{2}{3}L\delta^{\frac{3}{2}}\sqrt{2d}+\varepsilon\delta}{1+\theta\delta m}
\end{align}
and
\begin{equation}\label{eq:Cnonlinear}
    C=\sqrt{\max_{z\in [m\delta,L\delta]}R_{1}(-z)^{2}}.
\end{equation}
\end{theorem}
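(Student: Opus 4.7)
The strategy is to establish a one-step recursion
$$W_2(Q_{n+1},\pi)\leq C\,W_2(Q_n,\pi)+\tau,\qquad \tau:=\frac{\tfrac{1}{2}\delta^2 L^{3/2}\sqrt{d}+\tfrac{2}{3}L\delta^{3/2}\sqrt{2d}+\varepsilon\delta}{1+\theta\delta m},$$
and then iterate, summing the resulting geometric series. Writing $\Phi_\delta$ for the IMLA Markov kernel from \eqref{eq:implicitscheme} and $P_\delta$ for the OLSDE semigroup (which preserves $\pi$), the triangle inequality gives
$$W_2(\Phi_\delta Q_n,\pi)=W_2(\Phi_\delta Q_n, P_\delta \pi)\leq \underbrace{W_2(\Phi_\delta Q_n,\Phi_\delta\pi)}_{\text{contraction}} + \underbrace{W_2(\Phi_\delta\pi, P_\delta\pi)}_{\text{local bias}},$$
so it is enough to bound each piece separately.

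\textbf{Contraction via synchronous coupling.} I would couple two chains $(X_n)$ and $(X_n')$ by sharing the same Gaussian increments $\xi_{n+1}$ and set $D_n:=X_n-X_n'$. Applying the integral mean-value theorem to $\nabla\pot$ between $a_n:=\theta X_{n+1}+(1-\theta)X_n$ and $a_n':=\theta X'_{n+1}+(1-\theta)X'_n$ yields $\nabla\pot(a_n)-\nabla\pot(a_n')=H_n(a_n-a_n')$, with $H_n:=\int_0^1\nabla^2\pot(ta_n+(1-t)a_n')\,dt$ satisfying $mI\preceq H_n\preceq LI$. Substituting into \eqref{eq:thetamethod} and rearranging gives the linear recursion $(I+\theta\delta H_n)D_{n+1}=(I-(1-\theta)\delta H_n)D_n$. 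Since the resulting iteration operator is a rational function of the symmetric matrix $H_n$, its spectral norm equals
$$\max_{\lambda\in\mathrm{spec}(H_n)}\bigl|\tfrac{1-(1-\theta)\delta\lambda}{1+\theta\delta\lambda}\bigr|\leq \max_{z\in[m\delta,L\delta]}|R_1(-z)|=C,$$
so taking $L^2$-norms and infima over couplings delivers $W_2(\Phi_\delta Q_n,\Phi_\delta\pi)\leq C\,W_2(Q_n,\pi)$. The inexact solve (tolerance $\varepsilon$ on $\nabla F$) perturbs each $X_{n+1}$ by at most $\varepsilon\delta/(1+\theta\delta m)$ because $F$ is strongly convex with parameter $(1+\theta\delta m)/\delta$; this contributes an additive term of the same size to the recursion.

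\textbf{Local discretisation error.} To bound $W_2(\Phi_\delta\pi,P_\delta\pi)$, I draw $X_0\sim\pi$ and, along a common Brownian path, compute both the exact flow $\tilde X_\delta=P_\delta X_0$ and a single IMLA step $X_1$ using $\sqrt{2\delta}\xi_{n+1}=\sqrt{2}W_\delta$. Rewriting $\delta\nabla\pot(a_0)=\int_0^\delta\nabla\pot(a_0)\,ds$ and subtracting the Itô form of $\tilde X_\delta$ produces
$$X_1-\tilde X_\delta=\int_0^\delta\bigl[\nabla\pot(\tilde X_s)-\nabla\pot(\theta X_1+(1-\theta)X_0)\bigr]\,ds,$$
and $L$-Lipschitzness of $\nabla\pot$ reduces the question to an $L^2$-bound on $\tilde X_s-\theta X_1-(1-\theta)X_0$. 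Expanding both $\tilde X_s$ and $\theta X_1+(1-\theta)X_0$ around $X_0$ and using It\^o isometry splits the result into a drift contribution of order $\delta^2 L\|\nabla\pot(X_0)\|_{L^2}$ and a Brownian contribution of order $L\delta\|W_s-\theta W_\delta\|_{L^2}$. Using the stationary identity $\mathbb{E}_\pi\|\nabla\pot\|^2=\mathbb{E}_\pi[\Delta\pot]\leq Ld$ (integration by parts against $\pi$) yields the $\tfrac{1}{2}\delta^2 L^{3/2}\sqrt{d}$ contribution, while $\int_0^\delta \mathbb{E}\|W_s-\theta W_\delta\|^2\,ds$ yields the $\tfrac{2}{3}L\delta^{3/2}\sqrt{2d}$ contribution.

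\textbf{Iteration and main obstacle.} Combining the two pieces delivers the one-step recursion and, upon iterating and summing the geometric series with ratio $C$, yields \eqref{eq:non_asm_bound}. The contraction step is essentially a matrix-calculus identity once the mean-value reformulation is in place; the main obstacle is the local-error analysis. There one has to arrange the Taylor expansion of the implicit update so that the cancellations specific to the $\theta$-method (and especially the midpoint $\theta=1/2$) are visible, and then track the explicit constants $\tfrac{1}{2}$ and $\tfrac{2}{3}\sqrt{2}$ through the It\^o-isometry estimates and the stationary bound on $\|\nabla\pot\|$, while making sure the perturbation coming from $\theta X_1+(1-\theta)X_0$ not being $\pi$-distributed only contributes at higher order in $\delta$.
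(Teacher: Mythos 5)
Your proposal is correct and uses all the same key ingredients as the paper's proof: a synchronous coupling, the integral mean-value theorem producing a Hessian average $\overline H$ with spectrum in $[m,L]$, the spectral bound $\|(I+\theta\delta \overline H)^{-1}(I-(1-\theta)\delta\overline H)\|\le C$, the prefactor $1/(1+\theta\delta m)$ on the bias and inexactness terms, stationarity of the diffusion together with $\mathbb{E}_\pi\|\nabla\pot\|^2\le Ld$ (the paper cites Dalalyan's Lemma 3 for this), and the $\sqrt{2sd}$ Brownian increment bound. The one structural difference is organisational: the paper runs a \emph{single} coupling between the numerical chain and the stationary diffusion $L_t$, so contraction, discretisation bias and the $\varepsilon$-tolerance error all appear in one recursion for $D_k=X_k-L_0^k$, with the bias integral comparing $\nabla\pot(L_s^k)$ to $\nabla\pot(\theta L_\delta^k+(1-\theta)L_0^k)$ — a purely exact-flow quantity. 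You instead split via the triangle inequality into a numerical-vs-numerical contraction plus a one-step local bias $W_2(\Phi_\delta\pi,\pi)$. That is equally valid and yields the same constants, but note that in your local-error step the comparison point $\theta X_1+(1-\theta)X_0$ involves the implicit iterate $X_1$, so $\|X_1-\tilde X_\delta\|$ reappears on the right-hand side; you must absorb the resulting $\theta\delta\overline H(X_1-\tilde X_\delta)$ term into the left to recover the $1/(1+\theta\delta m)$ factor (the paper's single-coupling formulation does this automatically). Your strong-convexity argument for the $\varepsilon\delta/(1+\theta\delta m)$ perturbation matches the paper's bound $\|E_k\|\le\delta\varepsilon$ composed with $R_2(-\delta\overline H)$.
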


\begin{proof}
The proof is  an adaptation of  \cite[Theorem 2]{hodgkinson2021implicit} for the numerical scheme \eqref{eq:implicitscheme}  and is deferred to the Appendix \ref{proof-theorem}.
\end{proof}

The bound we have derived in \eqref{eq:non_asm_bound} has two terms, the numerical contraction rate and the bias. The rate of contraction $C$ we obtain in Theorem \ref{thm:nonasymconv} is sharp. The constant $C$ here is the same as the contraction rate for the Gaussian problem. This is a significant improvement upon the conclusions found in \cite{hodgkinson2021implicit}. Indeed, \cite[Theorem 2]{hodgkinson2021implicit} derives a contraction rate of $\kappa_\delta C^n$ for the numerical scheme 
\begin{equation}\label{eq:old_theta}
    X_{k+1}=X_{k}-\theta\delta\nabla \pot( X_{k+1})+(1-\theta)\nabla \pot(X_k) +\sqrt{2\delta}\xi_k.
\end{equation}
Here, $\kappa_\delta$ is given by $\kappa_\delta=\frac{2+\theta \delta L}{2+\theta \delta m}$. 
On the other hand, the bias term in \eqref{eq:non_asm_bound} is not sharp and is consistent with the bias found in \cite[Theorem 2]{hodgkinson2021implicit}. As in the Gaussian case, the following proposition uses the bound in \eqref{eq:non_asm_bound} to determine the optimal number of steps to achieve accuracy $\epsilon$.
\begin{proposition}\label{prop:n_nonlinear}
    Let $Q_n$ be the probability measure associated with the $n$-th iteration of the Markov kernel \eqref{eq:thetamethod} with $\theta=1/2$ starting at $X_0$. Then for any $\epsilon>0$ there exists $\delta>0$ such that for 
    \begin{equation*}
        n\approx \max\left\{\frac{\sqrt{\kappa}}{4}, \frac{2\kappa\sqrt{\kappa d}}{2\sqrt{m}\epsilon}, \frac{64}{9}\frac{d\kappa}{m \epsilon^2}\right\} [\log(\epsilon/2)-\log(W_2(\pi,Q_0))]
    \end{equation*}
    we have $W_2(Q_n,\pi)\leq \epsilon$.
\end{proposition}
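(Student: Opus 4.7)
\textbf{Proof plan for Proposition \ref{prop:n_nonlinear}.}

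The plan is to instantiate the non-asymptotic bound \eqref{eq:non_asm_bound} with $\theta=1/2$ and tune the step-size $\delta$ so that the contraction term and the bias term each contribute at most $\epsilon/2$. The smallest $n$ guaranteeing $W_2(Q_n,\pi)\le \epsilon$ will then emerge as the maximum of the three scalings appearing in the statement, corresponding respectively to the contraction rate and to the two bias terms $\tfrac12\delta^2L^{3/2}\sqrt{d}$ and $\tfrac23 L\delta^{3/2}\sqrt{2d}$ in \eqref{eq:non_asm_bound} (the tolerance term $\varepsilon\delta$ being absorbed by choosing $\varepsilon$ small enough).

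First I would analyse the constant $C$ in \eqref{eq:Cnonlinear}. With $\theta=1/2$ one has $R_1(-z)=(1-z/2)/(1+z/2)$, and the calculation already performed in the Gaussian case (see the discussion preceding Proposition \ref{prop:summary}) shows that $|R_1(-z)|$ is minimised on $[m\delta,L\delta]$ precisely at the endpoint $z=m\delta$ as long as $\delta\le\delta_\ast=2/\sqrt{mL}$. In that regime
\begin{equation*}
C=\frac{1-m\delta/2}{1+m\delta/2},\qquad 1-C=\frac{m\delta}{1+m\delta/2},\qquad \log(1/C)\ge m\delta,
\end{equation*}
and in particular $C\le (\sqrt{\kappa}-1)/(\sqrt{\kappa}+1)$ with equality when $\delta=\delta_\ast$, which immediately yields the first regime $n\gtrsim \sqrt{\kappa}/4$ once one notes that $\log(1/C)\approx 2/\sqrt{\kappa}$ at the optimum.

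Next I would parametrise $\delta=c\delta_\ast$ with $c\in(0,1]$, so that $1/(1-C)\asymp \sqrt{\kappa}/(2c)$ and $\log(1/C)\asymp 2c/\sqrt{\kappa}$. Substituting into the bias term of \eqref{eq:non_asm_bound} and demanding that each of the two dominant contributions be at most $\epsilon/2$ gives two upper bounds on $c$: the first, from $\tfrac12\delta^2L^{3/2}\sqrt{d}$, yields $c\lesssim \epsilon\sqrt{m}/(\sqrt{d}\,\kappa)$; the second, from $\tfrac23L\delta^{3/2}\sqrt{2d}$, yields $c\lesssim m\epsilon^2/(d\kappa^{3/2})$. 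Using the contraction requirement $C^nW_2(\pi,Q_0)\le \epsilon/2$, which translates into $n\ge \log(2W_2(\pi,Q_0)/\epsilon)/\log(1/C)\approx \sqrt{\kappa}/(2c)\cdot\log(2W_2(\pi,Q_0)/\epsilon)$, each upper bound on $c$ translates into a lower bound on $n$, producing respectively the $\kappa\sqrt{\kappa d}/(\sqrt{m}\,\epsilon)$ and the $d\kappa/(m\epsilon^2)$ scalings. Taking the maximum over the three requirements (pure contraction, bias\,1, bias\,2) yields the formula in the statement.

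The main technical obstacle is controlling $1/(1-C)$ and $\log(1/C)$ uniformly in the chosen $\delta$ so that both factors are simultaneously tight; one must verify that we are indeed in the regime $\delta\le\delta_\ast$, in which the expression for $C$ above is valid, and keep track of the constants $(1+m\delta/2)$ to avoid losing factors of $\sqrt{\kappa}$. A secondary subtlety is making the tolerance $\varepsilon$ in the inner proximal/root-finding problem small enough (for instance $\varepsilon\le \epsilon m/4$ or similar) that the $\varepsilon\delta/(1-C)$ contribution is absorbed into the $\epsilon/2$ bias budget, so that the conclusion depends only on $\epsilon,\kappa,d,m$. With those two points handled, the rest of the argument is a bookkeeping exercise using the inequality $\log(1/C)\ge 1-C$ and elementary algebra to combine the three binding constraints into the stated $\max$.
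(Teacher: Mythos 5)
Your plan is correct and follows essentially the same route as the paper's own proof: instantiate the bound \eqref{eq:non_asm_bound} with $\theta=1/2$, use Lemma \ref{lem:C} to get $C=(1-m\delta/2)/(1+m\delta/2)$ for $\delta\le\delta_\ast$, allocate $\epsilon/2$ to the bias and $\epsilon/2$ to the contraction term, and split into cases according to which of the two bias contributions ($\delta^2L^{3/2}\sqrt{d}$ versus $L\delta^{3/2}\sqrt{2d}$) is binding; your reparametrisation $\delta=c\delta_\ast$ is only cosmetic, and your remark about absorbing the inner-solver tolerance $\varepsilon\delta$ is a point the paper glosses over. The one arithmetic wrinkle: your constraint $c\lesssim m\epsilon^2/(d\kappa^{3/2})$ actually yields $n\approx \sqrt{\kappa}/(2c)\sim d\kappa^2/(m\epsilon^2)$ rather than the $d\kappa/(m\epsilon^2)$ appearing in the statement, but the paper's own Case 2 (which derives $\delta=9\epsilon^2/(128d\kappa^2)$) produces the same $\kappa^2$ before the final summary drops a factor of $\kappa$, so your computation agrees with the paper's derivation and the mismatch is internal to the paper.
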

 As noted above, the bias in \eqref{eq:non_asm_bound} is not sharp and as a result the description of the small $\epsilon$ regime is suboptimal. However, if $\epsilon$ is sufficiently large then the number of steps grows with $\sqrt{\kappa}$ and hence in this regime the result is consistent with the Gaussian setting.

\subsection{Illustrative experiment: a toy denoising model based on a Gaussian mixture prior}\label{sec:gmm-model}
{We now consider} a  Gaussian mixture model for denoising an image $x$ from observations $y = x + \xi$ with $\xi \sim \mathcal{N}(0, \sigma^2 I_d)$ with noise variance $\sigma^2$. This choice allows us to demonstrate our approach with a posterior distribution that deviates from a Gaussian density but for which we can compute quantities exactly. More precisely, 
the prior is the product of $d$ 1-dimensional identical Gaussian mixtures of two Gaussian  random variables $\tilde{Z}_k$ with means $m_k \in \mathbb{R}$, variances $\sigma_k^2 \in \R$ for $k\in\{0, 1\}$ and weights $\tilde{\omega}$ and $1-\tilde{\omega}$ for some $\tilde{\omega} \in[0,1]$.  It is possible to show that the posterior distribution is now a product  of $d$ 1-dimensional Gaussian mixtures of two independent Gaussian random variables $Z_k$ with means $\mu_k(y_i)$, variances $\delta_k^2$ and weights $w(y_i)$. Note that the means and the weights of each of the $d$  Gaussian mixtures are now depending on the pixel values $y_{i}$ (the precise relationship can be found in the Appendix \ref{gmm-appendix}).

\begin{table}[t]
    \centering
    \begin{tabular}{|c|c|c|c|c|c|c|}
    \hline
       $m_0$  & $m_1$ & $\sigma^{2}_{0}$ & $\sigma^{2}_{1}$  & $\sigma^{2}$ & $\tilde{\omega}$ \\
       \hline\hline
        0 & 0 & 0.0025 & 0.0809 & 0.0016 & 0.9 \\
         \hline
        \end{tabular}
   \caption{Parameters for the Gaussian mixture model for denoising.}
    \label{tab:model_param}
\end{table}

In Section \ref{subsec:Gaussian} we showed that IMLA samples are exact for Gaussian distributions. Here, we investigate the bias for a non-Gaussian problem. Note that the posterior distribution associated with this mixture prior is not usually strongly log-concave. However, for the range of value of pixels $y_i$ that we have used in our experiments, as well as  our choices of prior parameters $(m_{0},m_{1},\sigma^{2}_{0},\sigma^{2}_{1},\tilde{\omega})$ and noise variance $\sigma^{2}$ that can be seen in Table \ref{tab:model_param}, the corresponding posterior distribution is indeed strongly log-concave and fits the assumptions of our theory.

\begin{figure}[!t]
\centering
    \includegraphics[width=0.8\linewidth]{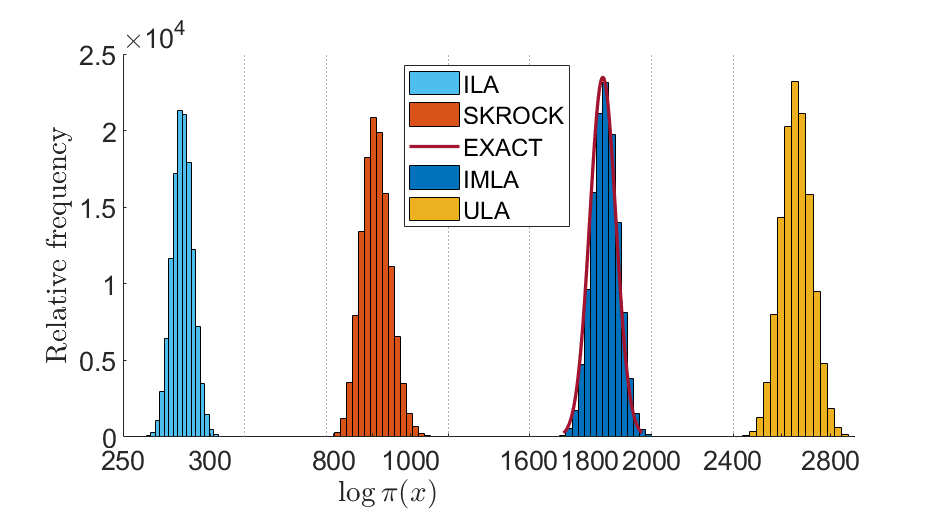}
    \includegraphics[width=0.8\linewidth]{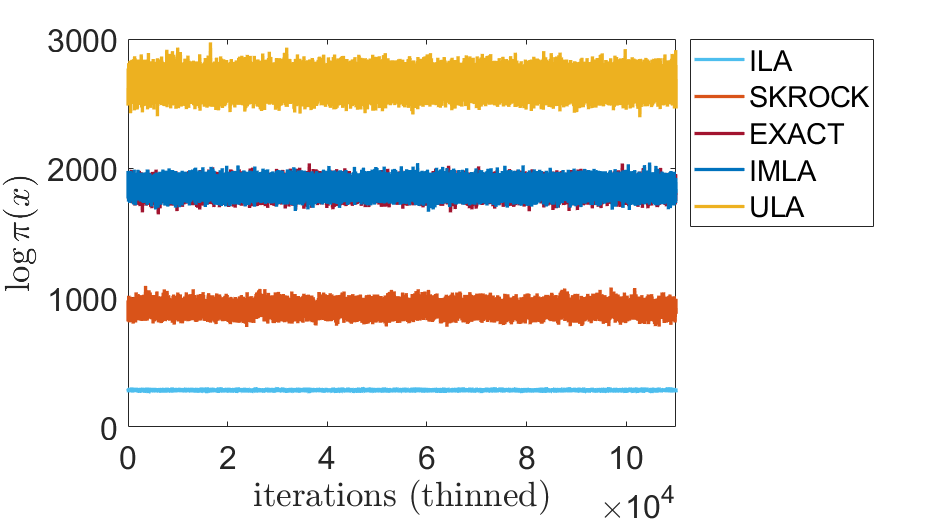}
    \captionof{figure}{\review{Top: Comparison of histograms of the scalar statistic $\log \pi(x)$ for the Gaussian mixture model. This figure shows the different biases induced by the sampling algorithms, and significant agreement between the exact and the IMLA statistic. Bottom: Traces for the same data showing the stationary behavior of the chains.}}
    \label{fig:logpigmm}
\end{figure} 

\begin{table}[t]
\centering
\begin{center}
\begin{tabular}{|c|c|c|}
\hline
Algorithm & \multicolumn{1}{|p{1.5cm}|}{\centering Mean \\$\mathcal{W}_2(\pi, \Tilde{\pi})$} & Std.  Dev.     \\ \hline\hline
EXACT      & 1.4123e-07                       & 1.4448e-07 \\ \hline
IMLA       & 1.4095e-07                    & 1.5540e-07\\ \hline
ILA       & 6.2550e-04                    & 1.7596e-06\\ \hline
ULA       & 6.6183e-05                        & 7.7698e-07 \\ \hline
SKROCK    & 2.0804e-04                        & 1.3499e-06 \\ \hline
\end{tabular}
\captionof{table}{$\mathcal{W}_2(\pi, \Tilde{\pi})$ errors for one arbitrary pixel.}
\label{tab:wasserstein}
\end{center}
\end{table}

\begin{table}[h]
\centering
\begin{center}
\begin{tabular}{|c|c|}
\hline
Algorithm & \multicolumn{1}{|p{1.5cm}|}{\centering $\mathcal{W}_2(\pi, \Tilde{\pi})$}      \\ \hline\hline
EXACT      & 0.2490                        \\ \hline
IMLA       & 0.3991                    \\ \hline
ILA       & 3.3662                    \\ \hline
ULA       & 1.1514                        \\ \hline
SKROCK    & 2.0769                        \\ \hline
\end{tabular}
\captionof{table}{Summed $\mathcal{W}_2(\pi, \Tilde{\pi})$ errors over all pixels computed over $15000$ samples.}
\label{tab:wasserstein-full}
\end{center}
\end{table}

Furthermore,  in our setting the posterior distribution is separable into one-dimensional distributions over the pixel domain, which significantly simplifies the analysis of the estimation error. In particular, for any pixel it is possible to reliably compute the Wasserstein-$2$ error w.r.t. the true pixel's distribution by using \cite[Remark 2.30]{peyre2018}. We conducted this toy experiment by using {a $60 \times 60$ pixel region} of the \texttt{lizard} test image from the BSDS300 data set \cite{MartinFTM01}.

We quantify the bias of each method in the following two complementary ways: (i) by computing the Wasserstein-$2$ error for the marginal distribution of a randomly selected image pixel; and (ii) by computing the distribution of the scalar summary statistic $\log \pi (x)$, which defines the typical set of the posterior distribution and plays a central role in uncertainty quantification analyses \cite{pereyra-map-2017}. The Wasserstein-$2$ error $\mathcal{W}_2(\pi, \Tilde{\pi})$ was computed using $50$ repetitions of $10^{6}$ realisations for each of the algorithms under scrutiny; for completeness, we also included a fully implicit $(\theta=1)$ Langevin algorithm (ILA) in the comparison. Table \ref{tab:wasserstein} reports the mean and the standard deviation of the Wasserstein-$2$ error for each method w.r.t. the truth. To provide an indication of the baseline error stemming from the finite Monte Carlo sample size, we also include the results obtained for a perfect sampler that draws exact samples from $\pi$. We observe that the error of IMLA is almost exactly the same as the baseline error of the exact samples, confirming the previous analysis and the fact that this algorithm has a remarkably low bias. This result is confirmed in Table \ref{tab:wasserstein-full}, showing the summed Wasserstein errors over all pixels and IMLA having the least error compared to all other methods.

Moreover, Figure \ref{fig:logpigmm} depicts the histogram of the  scalar statistic $T(x)=\log\pi(x)$, as computed from {$5\times10^{5}$} iterations of each algorithm. For IMLA ($\theta=1/2$), SKROCK, and ILA ($\theta=1$) we have used the optimal time step $\delta_{*}$ associated with IMLA, while for ULA we used $\delta=1/L$ as recommended in \cite{durmus2017}. We observe that the histogram obtained by using IMLA is in close agreement with the truth, while {all the other methods exhibit some significant bias, with ILA exhibiting the largest amount of bias, and SKROCK and ULA having similar moderate biases. 

\subsection{Illustrative experiment: One dimensional distributions}\label{sec:1d-distributions}

We continue our illustrative experiments by comparing {the algorithms MYULA ($\theta=0$), IMLA ($\theta=1/2$), and ILA ($\theta=1$)} used to sample from \review{four} representative one-dimensional distributions, namely
\begin{itemize}
\item Laplace distribution: $\pi(x)\propto e^{-|x|}$, \item Uniform distribution: $\pi(x)=e^{-\iota_{[0,1]}(x)}$, 
\item Light-tailed distribution: $\pi(x)\propto e^{-x^4}$,
\item \review{Standard Cauchy distribution (heavy-tailed): $\pi(x) \propto (1+x^2)^{-1}$}
\end{itemize} 
We use $\iota_{[0,1]}$ to denote the indicator function which takes the value $0$ on $[0,1]$ and $+\infty$ otherwise (such that $\pi(x)=1$ in $[0,1]$ and $0$ outside that interval). We have chosen these distributions because they exhibit different tail behaviors and are representative situations where ULA can not be applied directly without smoothing the distribution (the distributions are not Lipschitz-differentiable). In contrast, using the formulation \eqref{eq:IMLAprox} of IMLA we can apply IMLA directly for these distributions without smoothing. Moreover, since the proximal operator has an analytic expression in each of these cases, the implementation of IMLA and ILA is straightforward and has the same cost as MYULA. \review{Special attention is needed when computing the proximal operator for the Cauchy distribution, as it is not convex (see Appendix \ref{app:cauchy_prox}).}

\begin{table}[t]
\centering
\begin{tabular}{|c|c|c|c|c|}
\hline
{Distribution} & {$\text{SD IMLA}$} & {$\text{SD ILA}$} & {$\text{SD MYULA}$} &{ $\text{SD} {\text{ EXACT}}$ } \\ \hline\hline
Laplace      & 1.4046  & 1.4005        & 1.4356           & 1.4142                      \\ \hline
Uniform      & 0.2923  & 0.2936        & 0.2949           & 0.2887                    \\ \hline
$\exp(-x^4)$ & 0.5964  & 0.5777        & 0.6590           & 0.5813                      \\ \hline
\end{tabular}
\caption{Estimated vs. the exact standard deviations (SD) for each method (IMLA, ILA and MYULA) for the experiment shown in Figure \ref{fig:one_d_dist}. \review{Results for the Cauchy distribution are not included, as moments are undefined for this distribution.}}
\label{tab:1d-dist}
\end{table}

Figure \ref{fig:one_d_dist} depicts the histograms associated with the Markov chains generated by using MYULA, IMLA, and ILA for the \review{four} test distributions considered. In the case of the Laplace, the light-tailed \review{and the Cauchy distributions,} we have used $\delta=5\times10^{-2}$, while for the uniform distribution, we have used $\delta=10^{-4}$. We have run the algorithms for $15 \times 10^6$ iterations and set the smoothing parameter of MYULA to $\lambda=\delta$, as we found that this gives the best approximation of $\pi$ in terms of the convergence speed and asymptotic bias trade-off. 

As can observed in Figure \ref{fig:one_d_dist}, IMLA and ILA achieve a visibly smaller bias than MYULA for all the distributions, which is in agreement with our analysis in Section \ref{subsec:Gaussian}. \review{For the heavy-tailed distribution, the continuous time OLSDE is not exponentially ergodic as established in \cite{roberts1996}, and therefore, discrete time approximations do not generally converge geometrically fast. We observe that IMLA is more robust to the heavy tails and recovers the shape of the true distribution best, whereas MYULA puts more mass in the tails of the distribution and ILA in the center of the distribution.} For a quantitative comparison, we report the standard deviation (SD) estimated by each method in Table \ref{tab:1d-dist} \review{except for the Cauchy distribution, for which moments do not exist}. It can be observed that MYULA overestimates the SD in all cases, IMLA and ILA are significantly closer to the true value, and IMLA is the most accurate for the Laplace and the Uniform distributions. 

\begin{figure}[!htbp]
\centering
\begin{subfigure}{.245\textwidth}
    \centering
    \includegraphics[width=\linewidth]{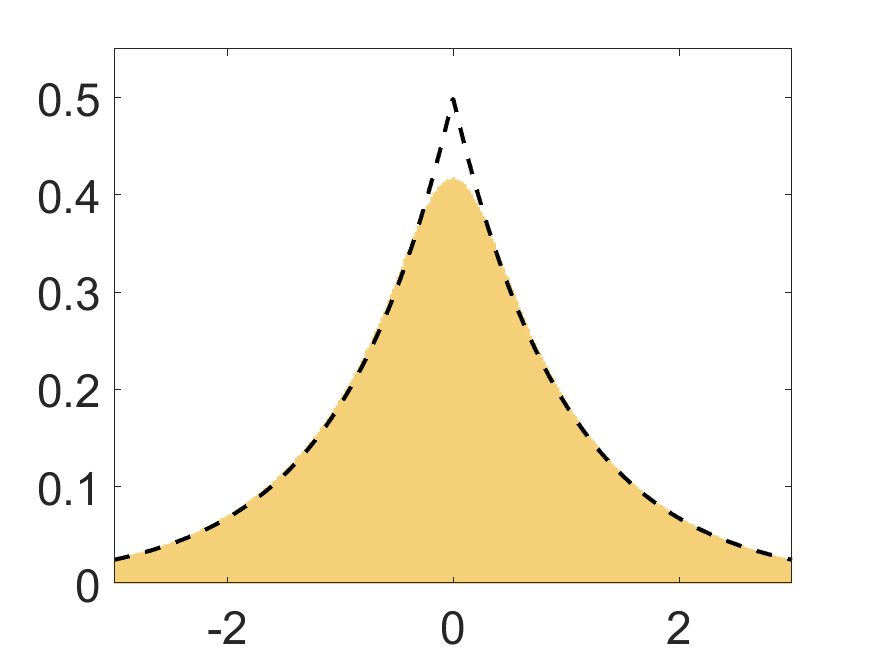}
\end{subfigure}%
\begin{subfigure}{.245\textwidth}
    \centering
    \includegraphics[width=\linewidth]{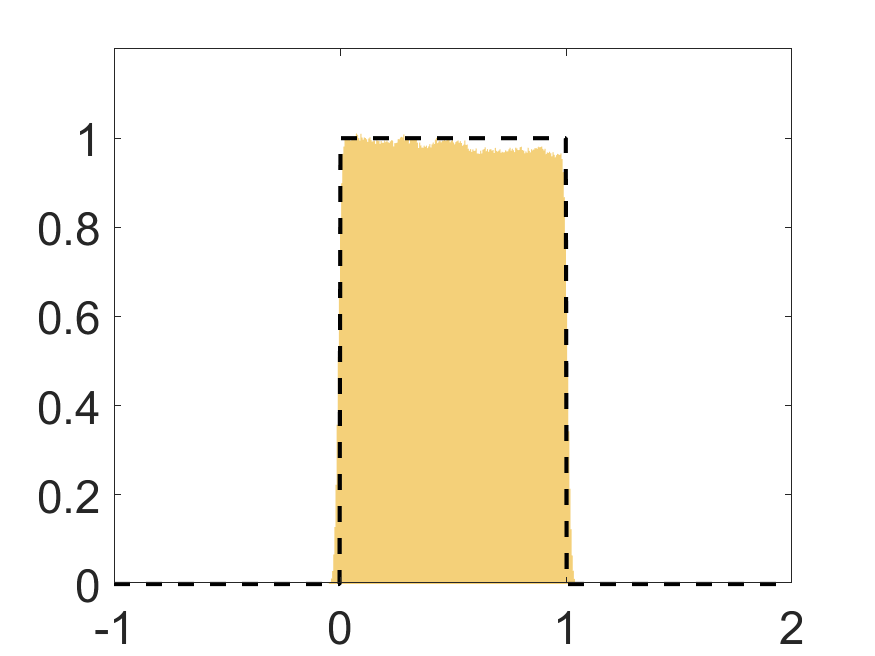}
\end{subfigure}%
\begin{subfigure}{.245\textwidth}
    \centering
    \includegraphics[width=\linewidth]{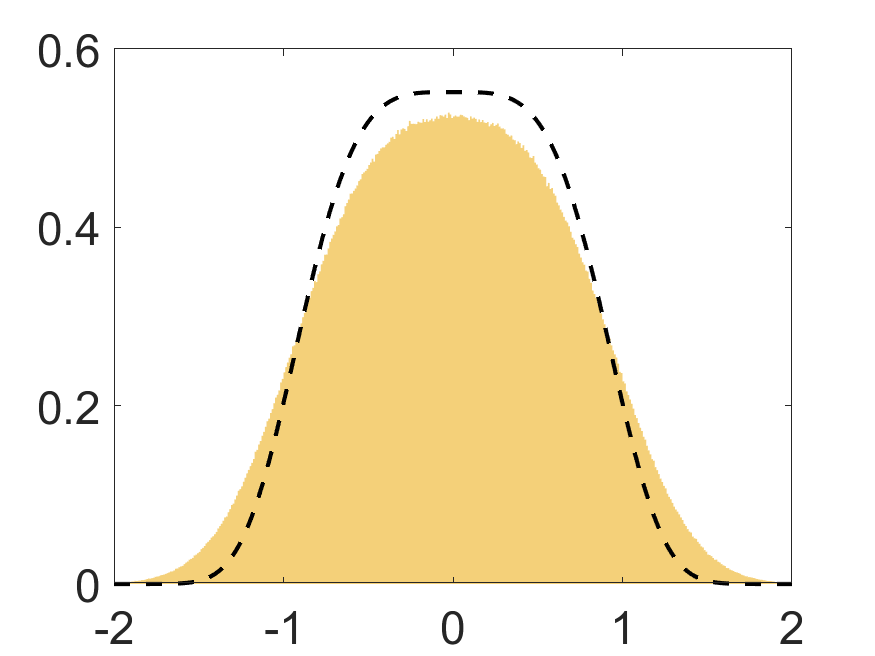}
\end{subfigure}
\begin{subfigure}{.245\textwidth}
    \centering
    \includegraphics[width=\linewidth]{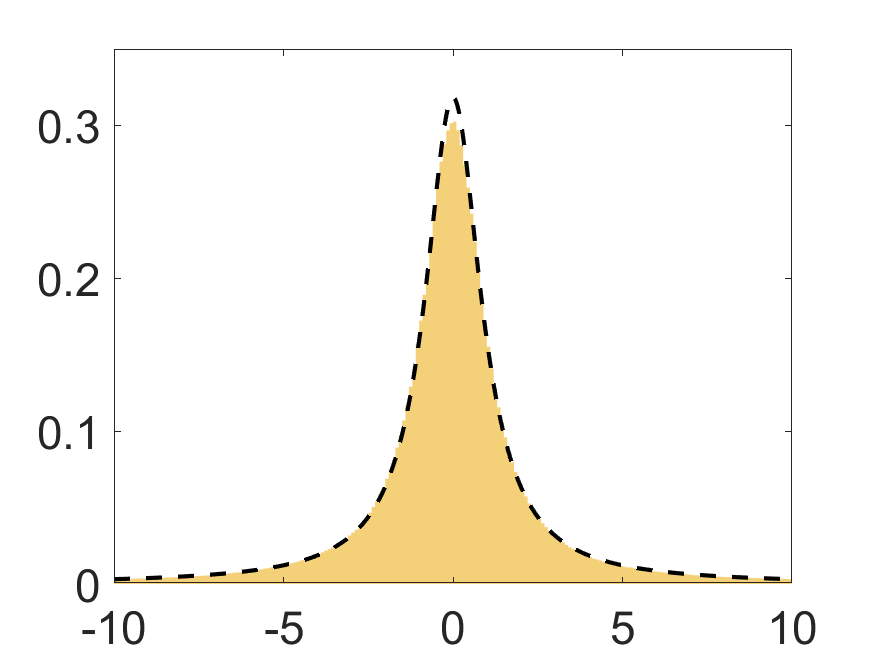}
\end{subfigure}

\begin{subfigure}{.245\textwidth}
    \centering
    \includegraphics[width=\linewidth]{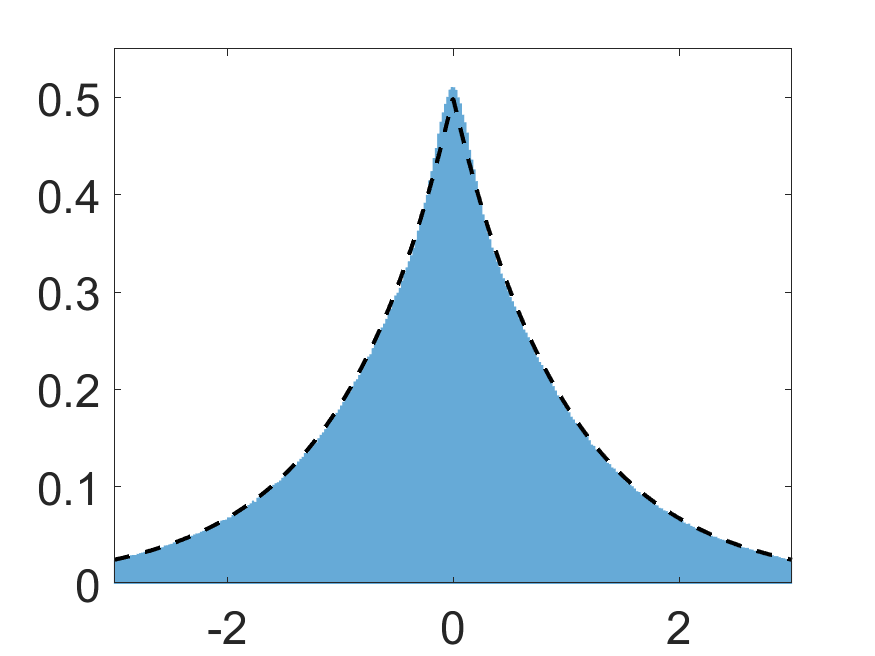}
\end{subfigure}%
\begin{subfigure}{.245\textwidth}
    \centering
    \includegraphics[width=\linewidth]{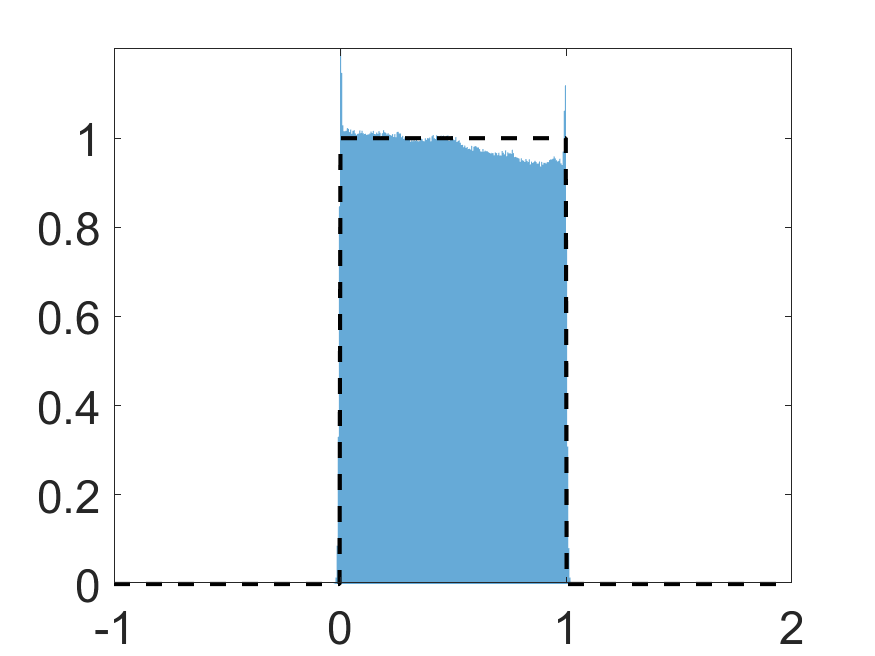}
\end{subfigure}%
\begin{subfigure}{.245\textwidth}
    \centering
    \includegraphics[width=\linewidth]{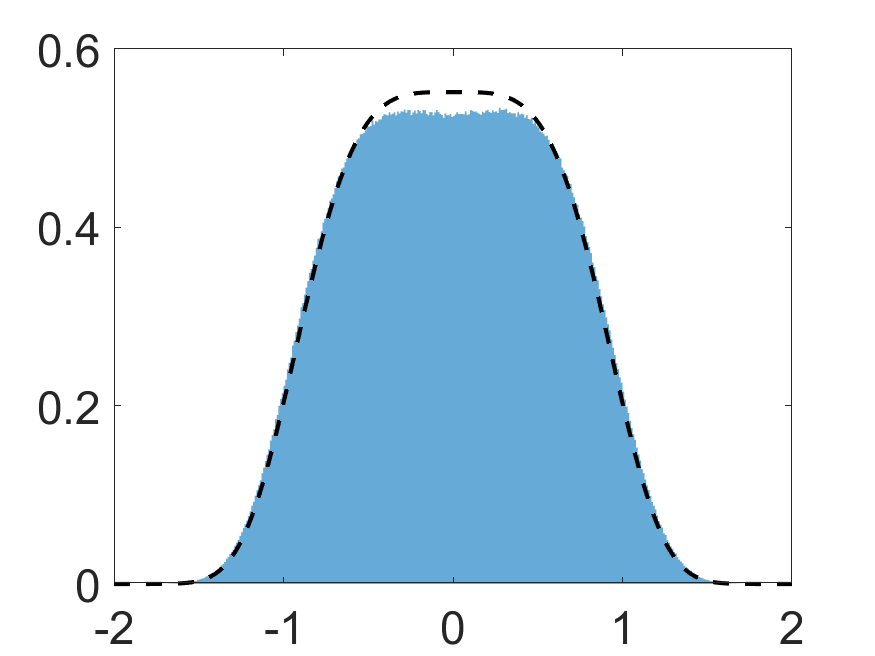}
\end{subfigure}%
\begin{subfigure}{.245\textwidth}
    \centering
    \includegraphics[width=\linewidth]{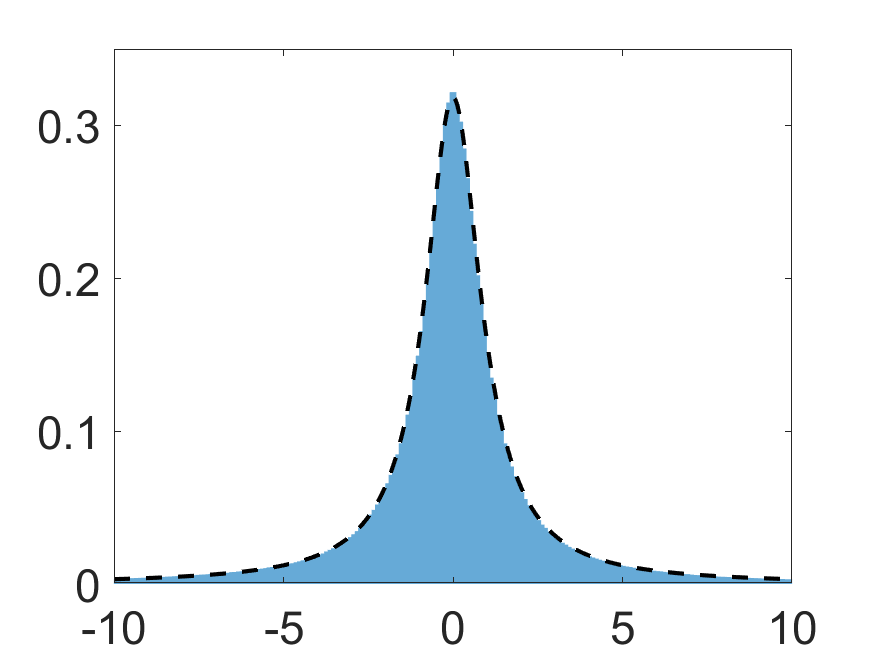}
\end{subfigure}%

\begin{subfigure}{.245\textwidth}
    \centering
    \includegraphics[width=\linewidth]{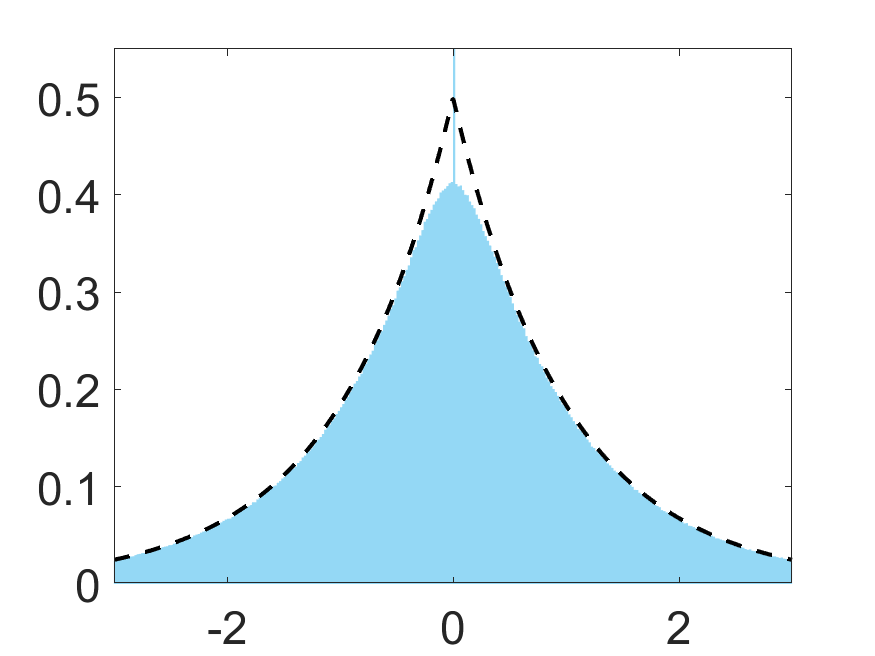}
\end{subfigure}%
\begin{subfigure}{.245\textwidth}
    \centering
    \includegraphics[width=\linewidth]{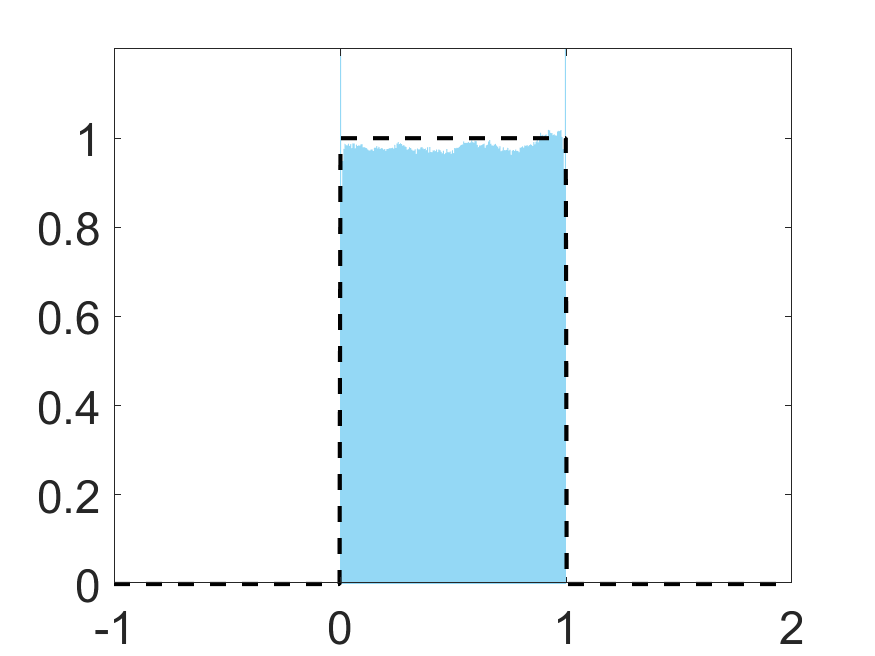}
\end{subfigure}%
\begin{subfigure}{.245\textwidth}
    \centering
    \includegraphics[width=\linewidth]{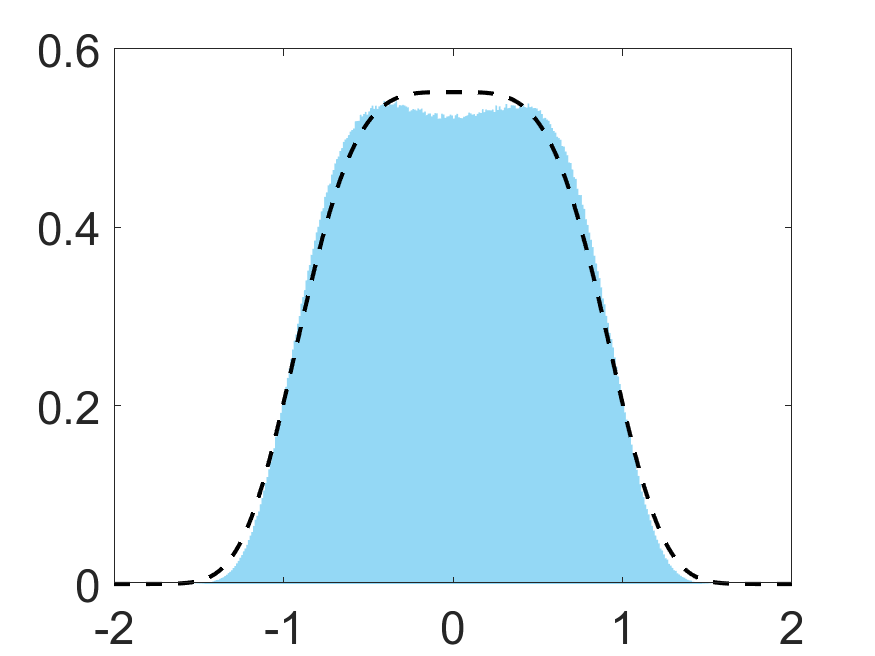}
\end{subfigure}%
\begin{subfigure}{.245\textwidth}
    \centering
    \includegraphics[width=\linewidth]{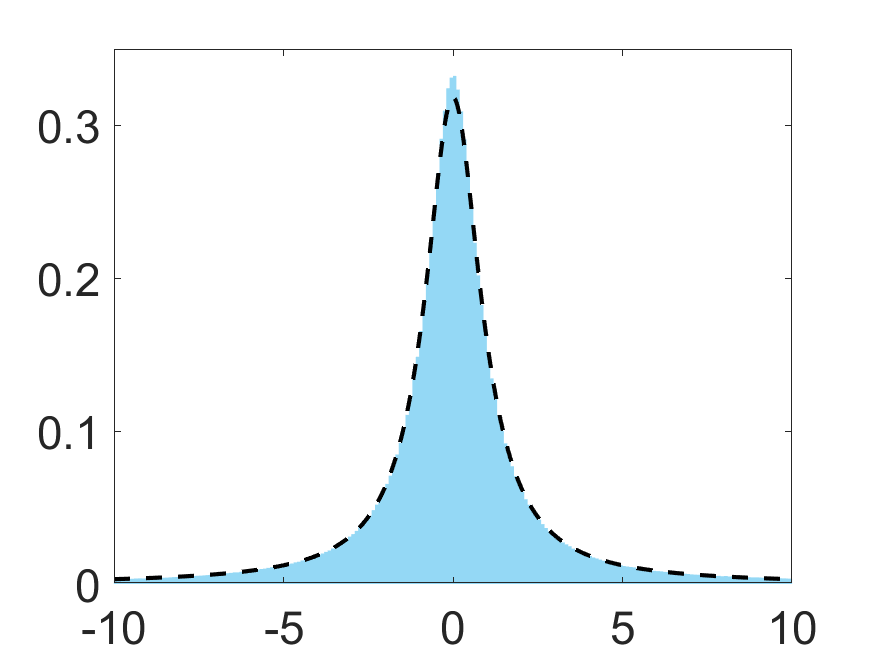}
\end{subfigure}%

\begin{subfigure}{.245\textwidth}
    \centering
    \includegraphics[width=\linewidth]{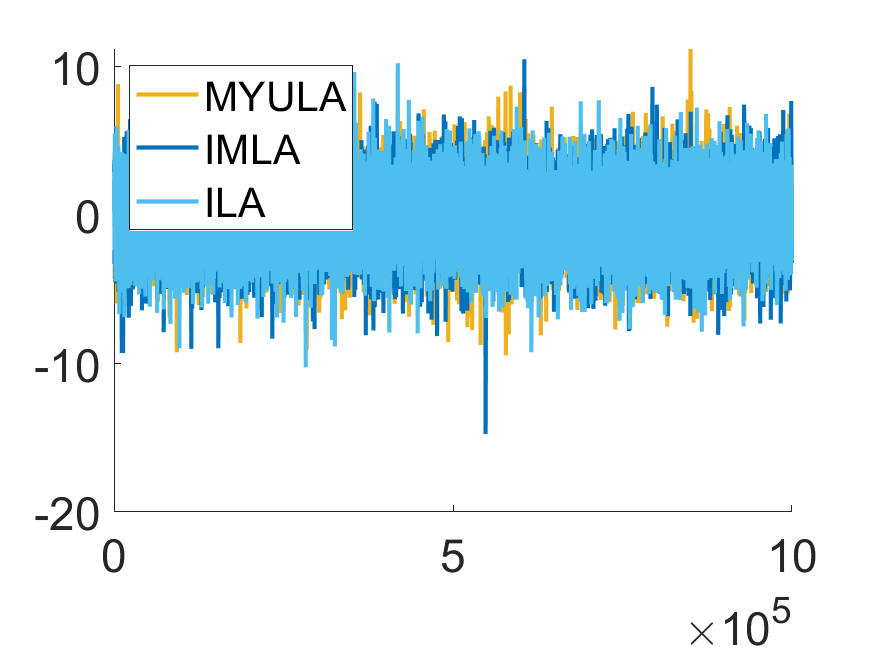}
    \caption{Laplace}
        \label{fig:one_d_laplace}
\end{subfigure}%
\begin{subfigure}{.245\textwidth}
    \centering
    \includegraphics[width=\linewidth]{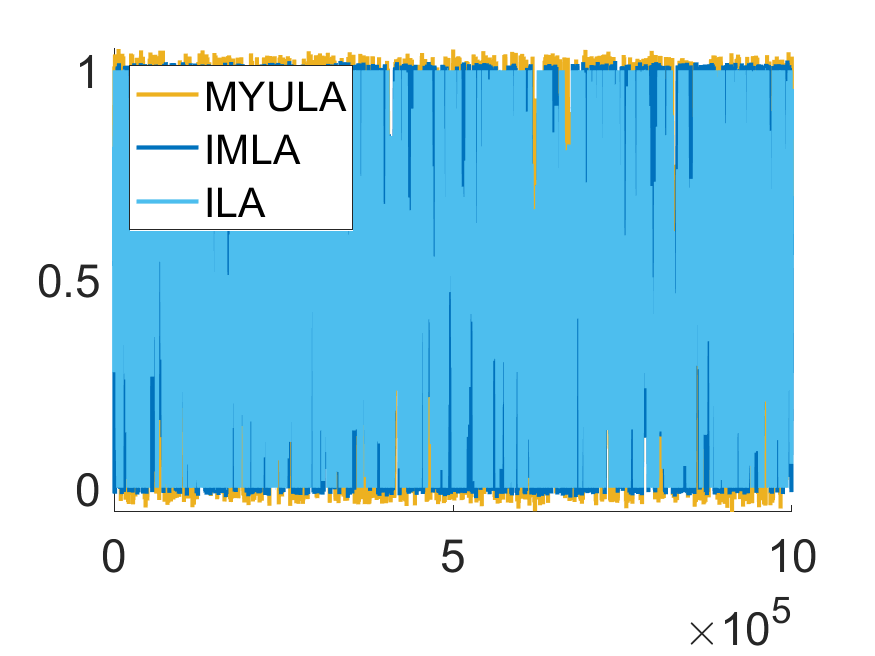}
    \caption{Uniform}
     \label{fig:one_d_uniform}
\end{subfigure}%
\begin{subfigure}{.245\textwidth}
    \centering
    \includegraphics[width=\linewidth]{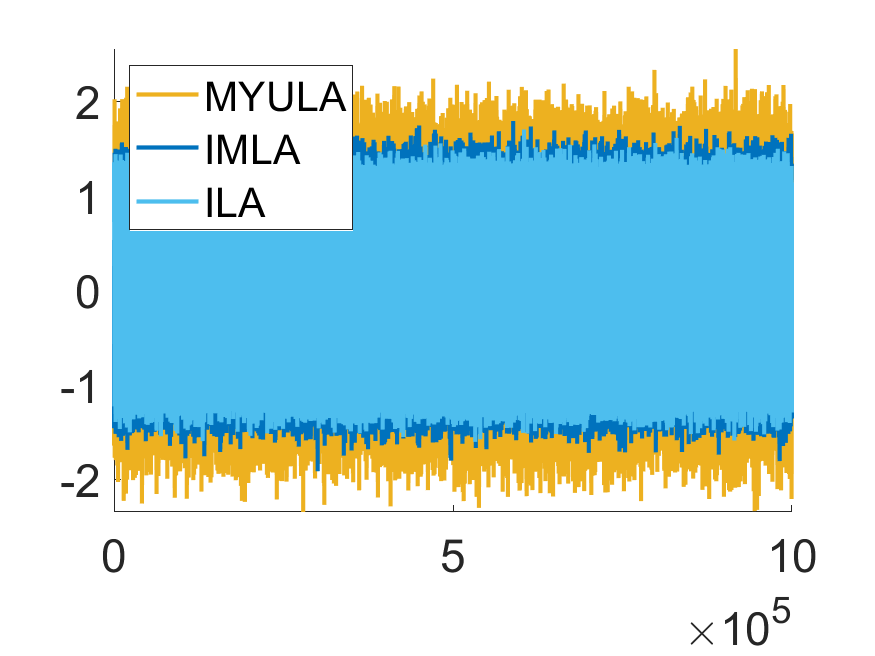}
    \caption{Light-tailed}
     \label{fig:one_d_x4}
\end{subfigure}%
\begin{subfigure}{.245\textwidth}
    \centering
    \includegraphics[width=\linewidth]{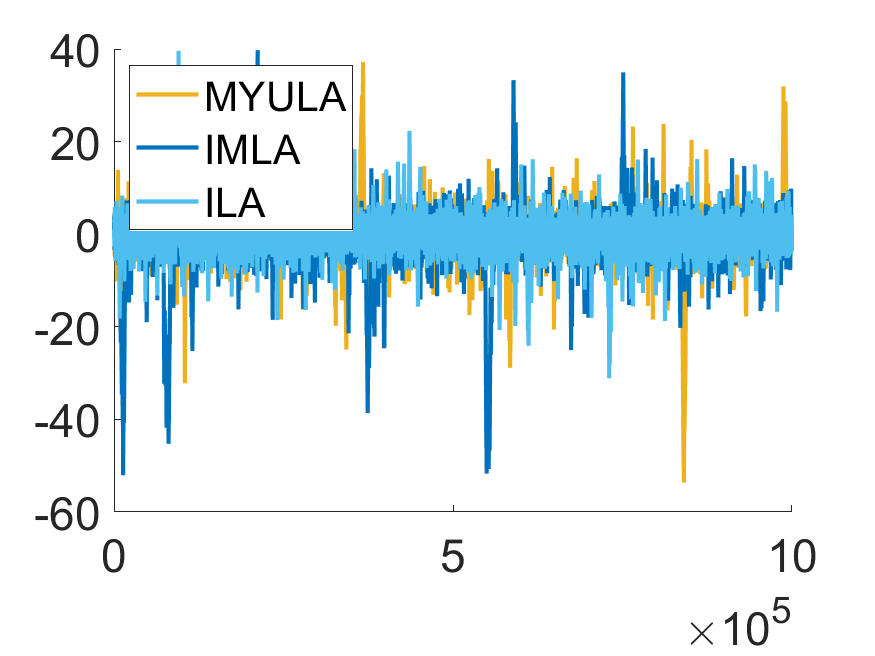}
    \caption{Heavy-tailed}
     \label{fig:one_d_cauchy}
\end{subfigure}%
\caption{\review{Histograms  of  the  Laplace  distribution (first column), the uniform distribution (second column), a light-tailed distribution with $\pi(x)\propto e^{-x^4}$ (third column), and a heavy-tailed Cauchy distribution (last column) computed with $15 \times 10^6$ iterations generated by MYULA (first row) and IMLA (second row) and ILA with $\theta=1$ (third row), for $\delta=0.05$ (left, right) and $\delta=10^{-4}$ (middle). The dashed black curve represents the probability density function of the respective distribution. Last row: Trace plots of the first $10^6$ iterations to show the behaviour of the chains.}}
\label{fig:one_d_dist}
\end{figure}

\begin{algorithm}[h!]
\caption{IMLA}\label{alg:IMLA}
\begin{algorithmic}
\Require $N \geq 0$, $\delta>0$ and $X_0\in \R^d$.
\For{n=0 : N-1}\\
\textbf{Draw} $$\xi_n\sim \mathcal{N}(0,I_d)$$
\textbf{Set}
$$X_{n+1}\gets \arg\min_{x\in \R^d} 2\pot\left(\frac{1}{2} x+\frac{1}{2} X_n\right) +\frac{1}{2\delta}\lVert x-X_n-\sqrt{2\delta}\xi_n\rVert^2$$
\EndFor
\end{algorithmic}
\end{algorithm}

\subsection{Algorithmic recommendations: IMLA}\label{sec:recommend}
{The analysis in Sections \ref{subsec:Gaussian} indicates that to converge as quickly as possible, one should choose $\theta=1$ with arbitrary large step size $\delta$. However, the bias in this case would be very large as seen in Figures \ref{fig:conv-plot}(\subref*{figure1}) and \ref{fig:conv-plot}(\subref*{figure2}). In contrast, using $\theta=1/2$ in the Gaussian case is bias-free. Moreover, it can be observed in Figure \ref{fig:conv-plot}(\subref*{figure2}) that it also provides a better non-asymptotic convergence behavior than $\theta=1$. The illustrative numerical experiments in Section \ref{sec:gmm-model} reveal that this behavior persists outside the Gaussian regime.
Furthermore, the numerical experiments in Section \ref{sec:1d-distributions}, even though not  covered by our theoretical analysis, reveal no substantial benefit of using $\theta=1$ over $\theta=1/2$ for the values of $\delta$ considered. Hence, from now on we concentrate on the choice $\theta=1/2$, which we summarise in Algorithm \ref{alg:IMLA}. In the case when $U$ is strongly convex, we set $\delta=\delta_*=2/\sqrt{Lm}$ which gives the optimal contraction rate. Algorithm \ref{alg:IMLA} does not incorporate hard constraints, but it can be modified if needed, as will be discussed in Section \ref{sec:poiss-deconv}.

\section{Imaging experiments}\label{sec:Numerical}

In this section, we demonstrate the proposed Bayesian methodology on a range of
experiments related to image deconvolution. We consider problems with an underlying convex geometry (i.e., with a posterior distribution that is
log-concave). In Section \ref{sec:nn-deconv} we consider a model with a data-driven convex prior and a Gaussian noise characteristic, and in Section \ref{sec:poiss-deconv} we consider a model with an assumption-driven convex prior and a Poisson noise characteristic.

\subsection{Image deconvolution using a CRR-NN prior}\label{sec:nn-deconv}
We consider a non-blind motion deblurring problem where we seek to recover
$x \in \R^d$ from a blurry and noisy observation $y=Ax+\epsilon$, where the linear operator $A$ represents a known motion blur kernel and $\epsilon \sim \mathcal{N}(0,\sigma^2 I_{d})$. We conduct our experiments with a range of motion blurs from \cite{Levin09}; in all cases, $A^\top A$ is highly ill-conditioned but full rank. To regularise the estimation problem, we leverage a state-of-the-art data-driven prior \cite{goujon2022crrnn}, which is by construction log-concave. Combining this prior with the likelihood gives us a posterior distribution with the following density
\begin{equation} \label{eq:post1}
 \pi(x) \propto \exp\left(-\frac{\norm{Ax-y}^{2}}{2\sigma^{2}} - \frac{\alpha}{\mu} R_{\Theta}(\mu x) \right)\,,   
\end{equation}
where $R_{\Theta}(\cdot)$ is the convex-ridge regularizer neural network (CRR-NN) \cite{goujon2022crrnn}, with $\mu>0$ and $\alpha>0$ scale parameters. In our numerical experiments, we use three different images from the BSD300 data set \cite{MartinFTM01} (\texttt{castle, lizard}, and \texttt{person}) depicted in Figure \ref{fig:obs-motion}. To degrade each image, we use a different blur kernel\footnote{We use the kernels from \cite{Levin09} with indexes $\{2,3,4\}$.} and generate the observation $y$ by adding Gaussian noise with variance $\sigma^2$ chosen to achieve a blurred signal-to-noise  ratio of 30dB. The corresponding blurred and noisy images are shown in Figure \ref{fig:obs-motion}, while the values of the Lipschitz and strong convexity constants $(L, m)$ for each blur kernel are reported in Table \ref{tab:params-motion}.
\begin{figure}[htbp]
\centering
\begin{subfigure}{.45\textwidth}
    \centering
    \includegraphics[trim=50 0 50 30, clip=true, width=\linewidth]{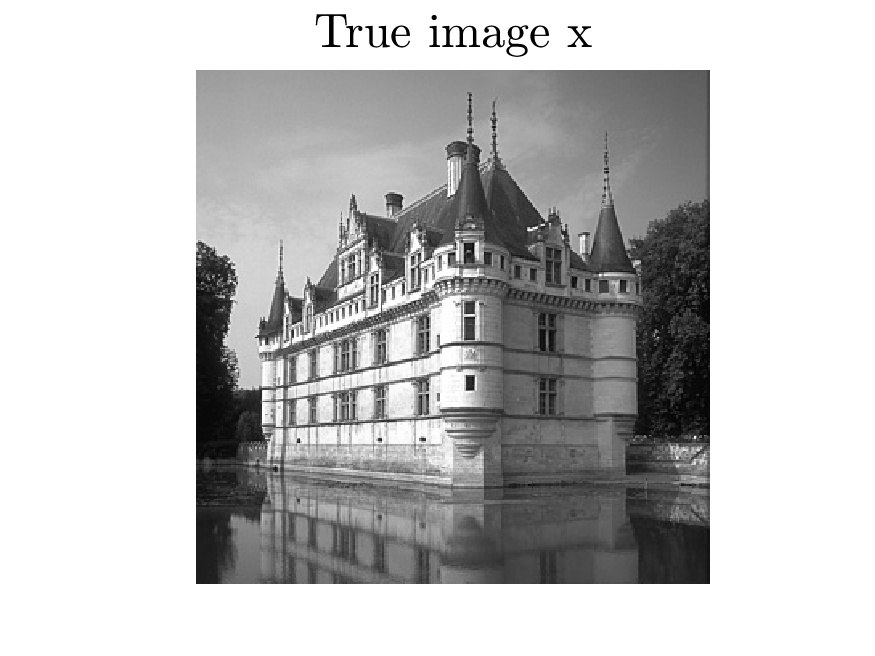}
\end{subfigure}%
\begin{subfigure}{.45\textwidth}
    \centering
    \includegraphics[trim=50 0 50 30, clip=true,width=\linewidth]
    {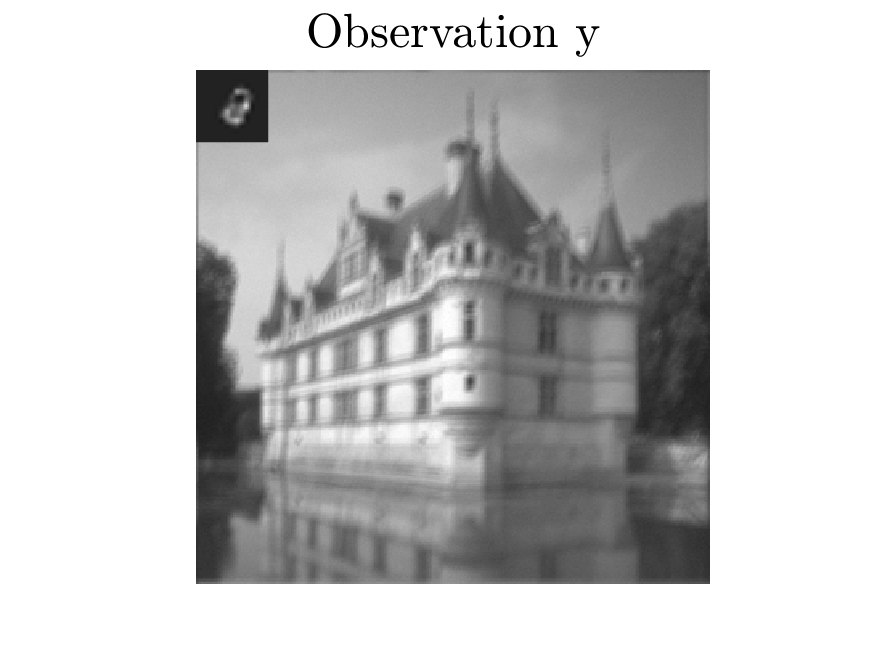}
\end{subfigure}

\begin{subfigure}{.45\textwidth}
    \centering
    \includegraphics[trim=50 0 50 30, clip=true,width=\linewidth]{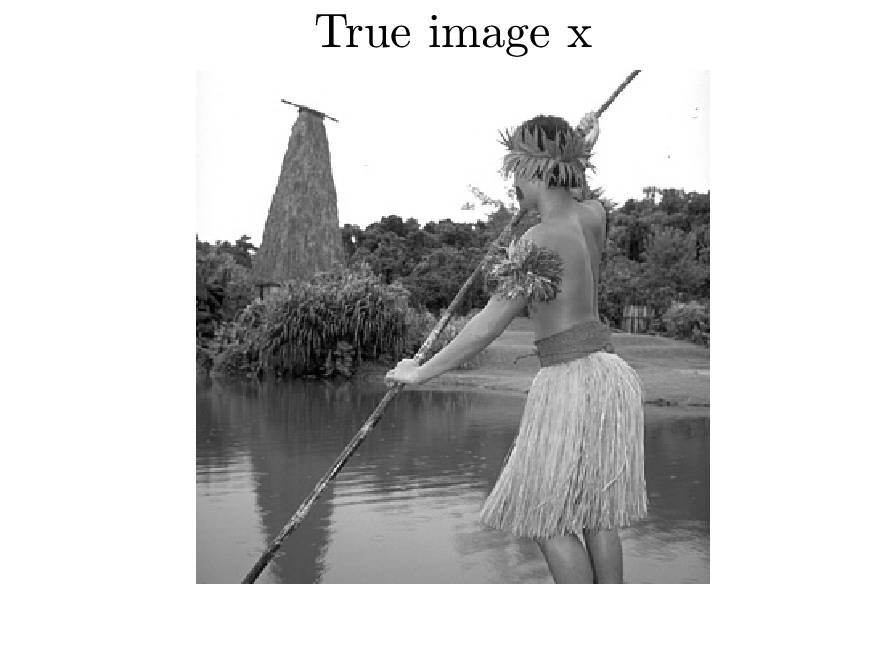}
\end{subfigure}%
\begin{subfigure}{.45\textwidth}
    \centering
    \includegraphics[trim=50 0 50 30, clip=true, width=\linewidth]{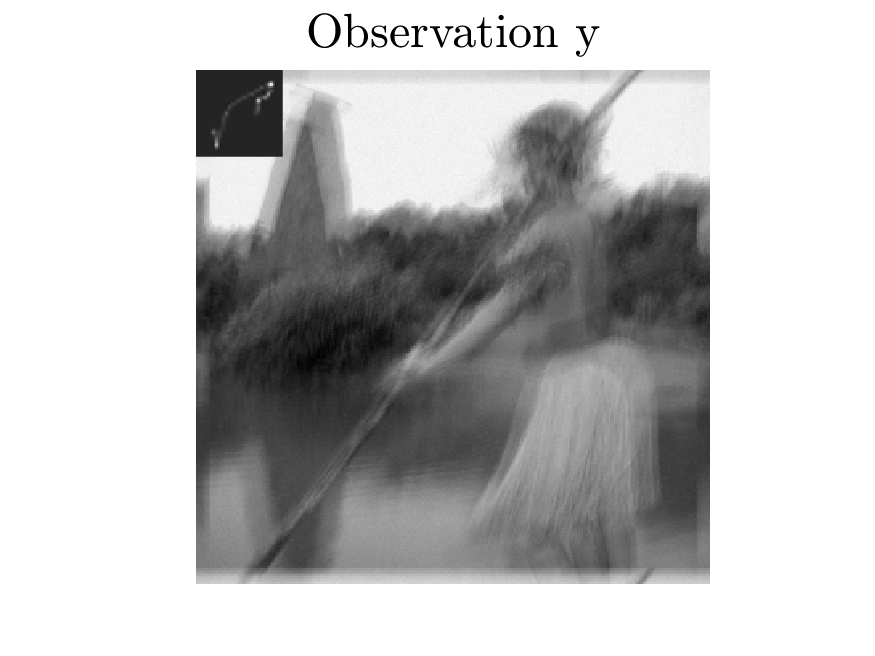}
\end{subfigure}%

\begin{subfigure}{.45\textwidth}
    \centering
    \includegraphics[trim=50 0 50 30, clip=true,width=\linewidth]
    {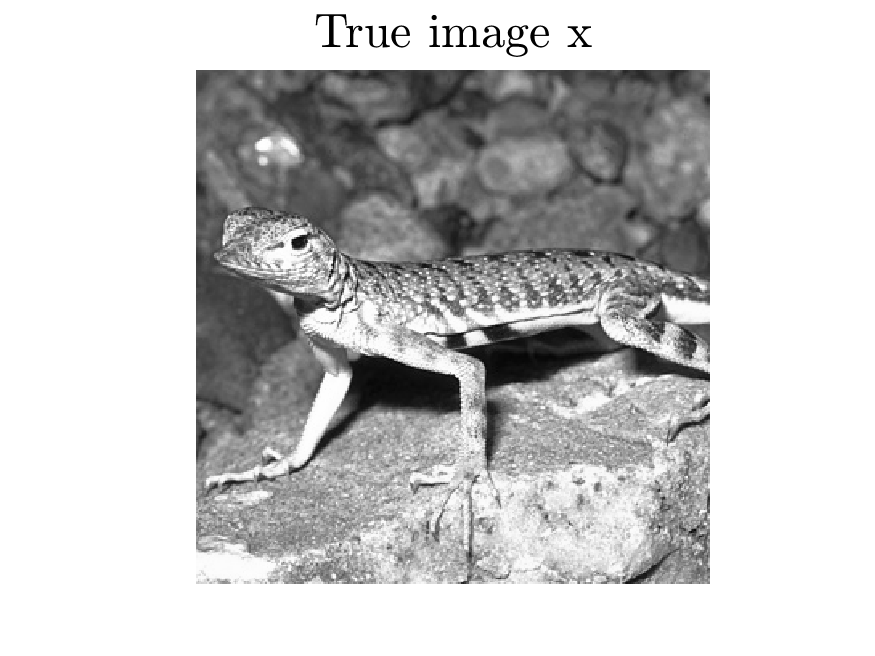}
\end{subfigure}%
\begin{subfigure}{.45\textwidth}
    \centering
    \includegraphics[trim=50 0 50 30, clip=true,width=\linewidth]{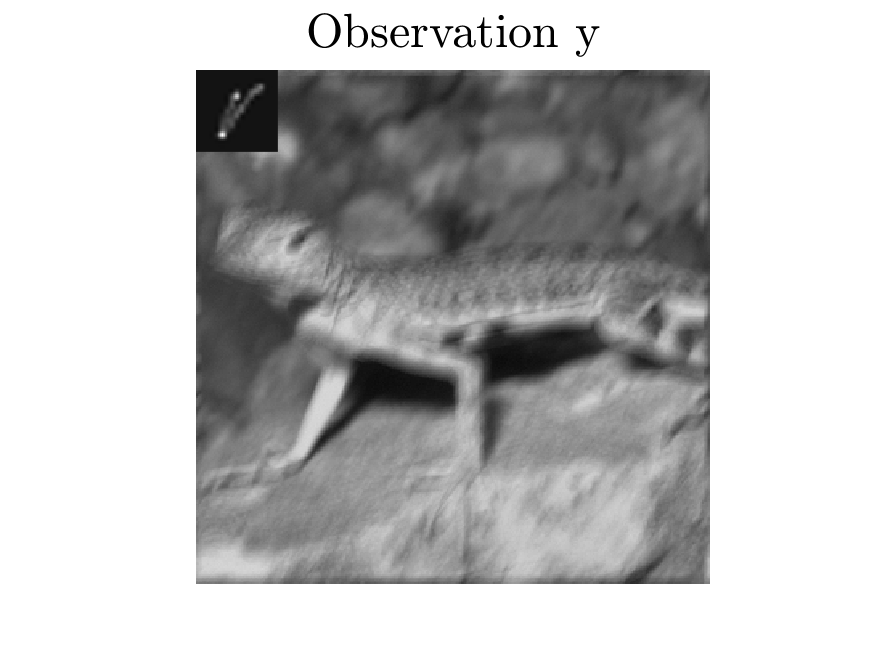}
\end{subfigure}%
\caption{Motion deconvolution experiment: Ground truth images $x$ (left) of size $320\times320$ with observations $y$ (right) for \texttt{castle, person} and \texttt{lizard} images (top to bottom) and their respective blur kernels in the top left corner of $y$.}
\label{fig:obs-motion}
\end{figure}
In addition, as recommended in \cite{goujon2022crrnn}, the parameters $\alpha$ and $\mu$ of the neural network regularizer are adjusted for each experiment by using a grid search method\footnote{\url{https://github.com/axgoujon/convex_ridge_regularizers} }, which optimizes the parameters to yield the MAP solution with optimal peak-signal-to-noise ratio (PSNR).

\begin{table}[t]
\centering
\begin{tabular}{|c|c|c|c|c|c|c|c|}
\hline
Experiment & $L$     & $m$ & $h_{\mathrm{IMLA}}$ & stages & $h_{\mathrm{ULA}}$ & $\alpha $ &$\mu$  \\ \hline\hline 
\texttt{castle}     & 43521 & 17.39 & 0.002     & 8      & 2.3e-05  & 2828.43 & 8.00      \\ \hline
\texttt{lizard}    & 50885 & 17.25 & 0.002     & 8      & 2.0e-05 &  3363.59 & 8.00   \\ \hline
\texttt{person}    & 41896 & 2.41 & 0.006     & 12     & 2.4e-05  & 2828.43 & 11.31   \\ \hline
\end{tabular}
\caption{Algorithm parameters for motion deconvolution experiment.}
\label{tab:params-motion}
\end{table}

For each image $y$, we use IMLA, SKROCK and ULA to draw Monte Carlo samples from the posterior distribution $\pi$. {In the case of ULA, the step size is set to half the stability barrier, $h_{\mathrm{ULA}}=1/L$ as recommended in \cite{durmus2017}, which is of the order of $10^{-5}$. In contrast, for IMLA we use the optimal step size $h_{\mathrm{IMLA}}=2/\sqrt{Lm}$ as suggested by the analysis in Section \ref{subsec:Gaussian} (note that $h_{\mathrm{IMLA}}$ is two orders of magnitude larger than $h_{\mathrm{ULA}}$, see Table \ref{tab:params-motion}).} We use the same step size for SKROCK, and automatically derive the appropriate number of internal stages, $s$, for each experiment (see Table \ref{tab:params-motion}). To make the comparison fair between SKROCK and ULA we use $4\times 10^{4}$ iterations for SKROCK and $4s\times 10^{4}$ iterations for ULA. For IMLA, we use the same number of iterations as for SKROCK ($4\times 10^{4}$), as we are initially interested in comparing the performance of the two methods without taking into account the average computational cost per iteration (since this is highly implementation dependent). We defer the analysis of the computational cost to Section \ref{sec:poiss-deconv-cost}. 

To demonstrate the effectiveness of our method in an uncertainty quantification context, we summarise $\pi$ by computing two different Bayesian estimators that summarise $\pi$ optimally w.r.t. to complementary loss functions and by computing uncertainty visualisation plots based on (marginal) second-order moments of $\pi$. More precisely, we compute: (i) the minimum mean square estimator (MMSE) solution given by the posterior mean, (ii) the MAP solution by solving \eqref{eq:post1} as an optimisation problem\footnote{The MAP solution is computed by using the gradient method \cite{goujon2022crrnn}.} (see \cite{pereyra-map} for details about the optimality of these estimators), (iii) the marginal standard deviations of $x$ at the pixel-wise scale, and (iv) the joint marginal standard deviations for groups of pixels of size $2\times2$, $4\times4$, $8\times8$, and $16\times16$ pixels, which display the uncertainty about structures of different size and in different regions of $x$. The MMSE and MAP solutions are depicted in Figure \ref{fig:mean-motion-comp}, whereas the uncertainty visualisation plots are displayed in Figure \ref{fig:std-motion-comp} and Figure \ref{fig:castle-uq-scales}. For comparison, we also include the results obtained with a Bayesian model using a non-convex neural network prior \cite{RyuLWCWY19} used within the PnP-ULA \cite{Laumont2021} method to perform computations.  


\begin{figure}[p]
\centering
\begin{minipage}{.32\textwidth}
    \centering
    \includegraphics[trim=30 0 30 0, clip=true,width=\linewidth]{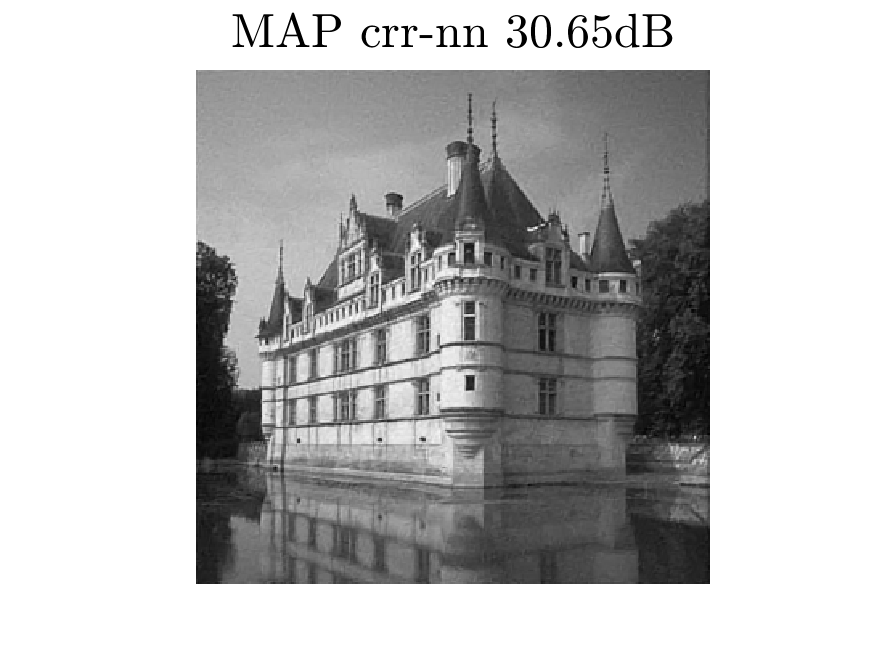}
\end{minipage}%
\begin{minipage}{.32\textwidth}
    \centering
    \includegraphics[trim=30 0 30 0, clip=true,width=\linewidth]{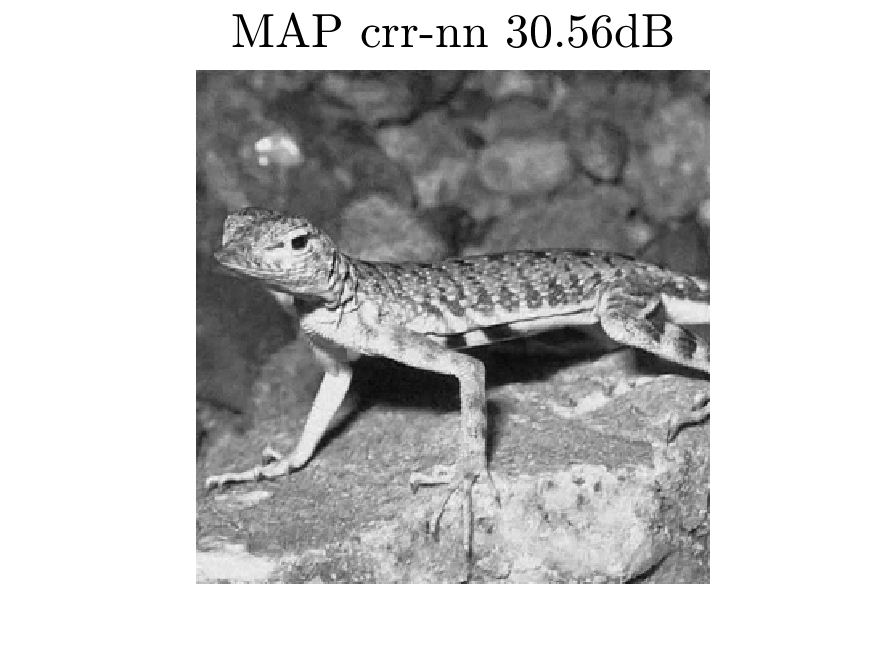}
\end{minipage}%
\begin{minipage}{.32\textwidth}
    \centering
    \includegraphics[trim=30 0 30 0, clip=true,width=\linewidth]{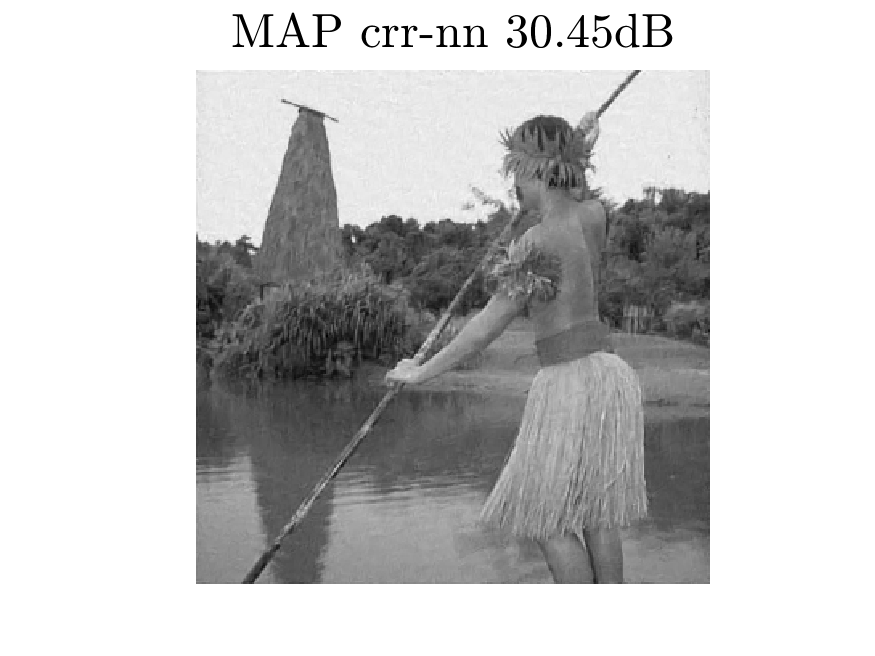}
\end{minipage}%

\begin{minipage}{.32\textwidth}
    \centering
    \includegraphics[trim=30 0 30 0, clip=true,width=\linewidth]{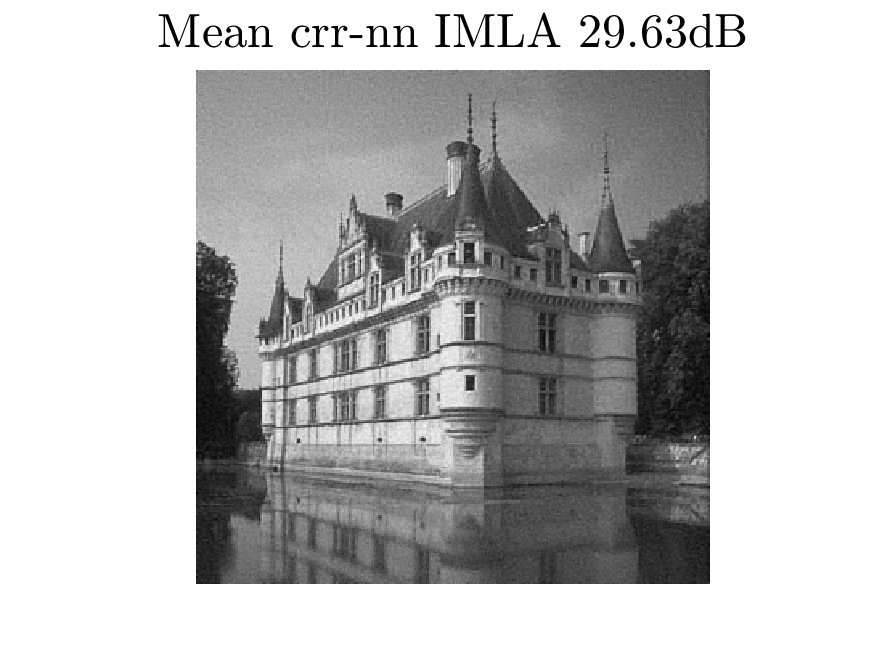}
\end{minipage}%
\begin{minipage}{.32\textwidth}
    \centering
    \includegraphics[trim=30 0 30 0, clip=true,width=\linewidth]{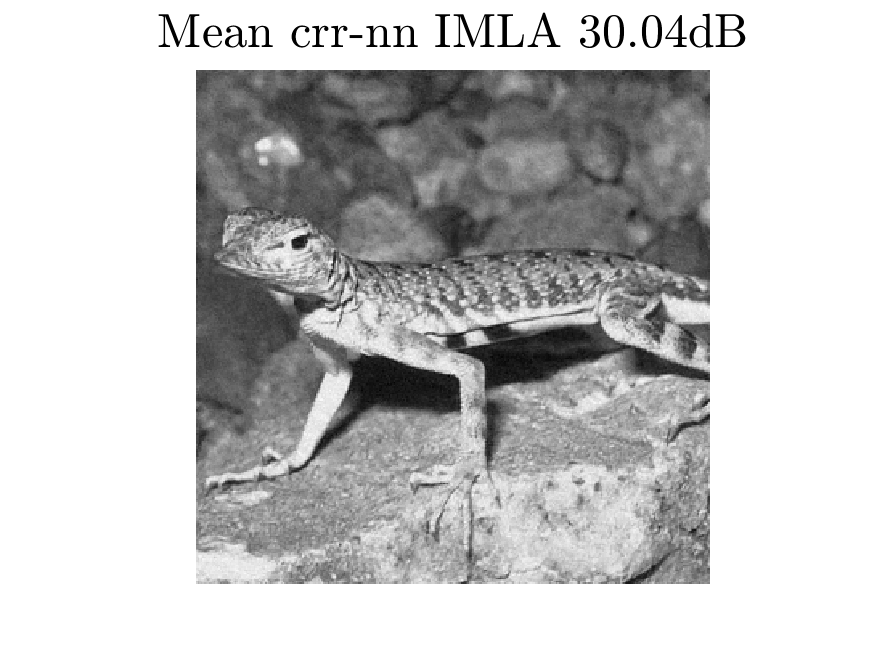}
\end{minipage}%
\begin{minipage}{.32\textwidth}
    \centering
    \includegraphics[trim=30 0 30 0, clip=true,width=\linewidth]{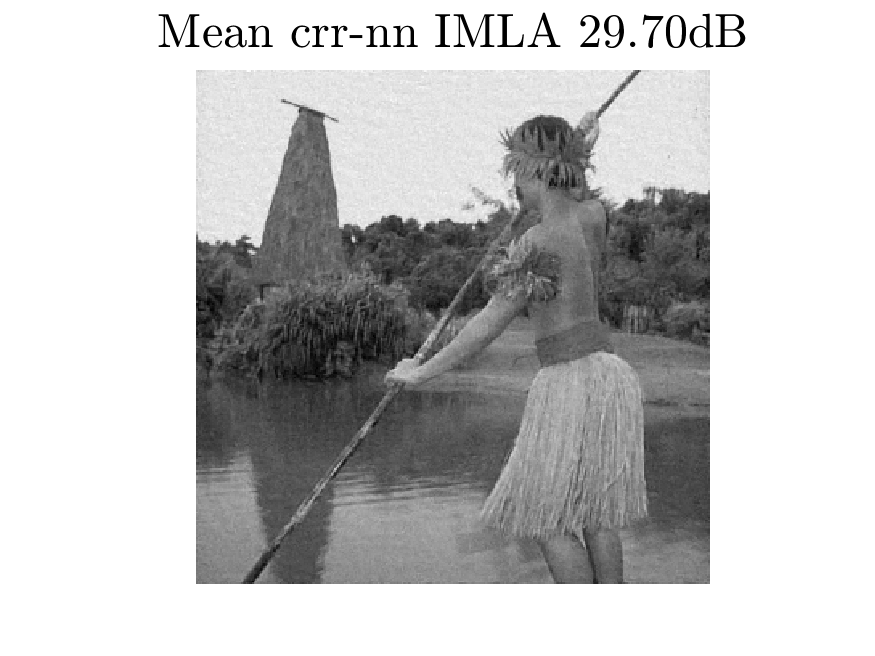}
\end{minipage}%

\begin{minipage}{.32\textwidth}
    \centering
    \includegraphics[trim=30 0 30 0, clip=true,width=\linewidth]{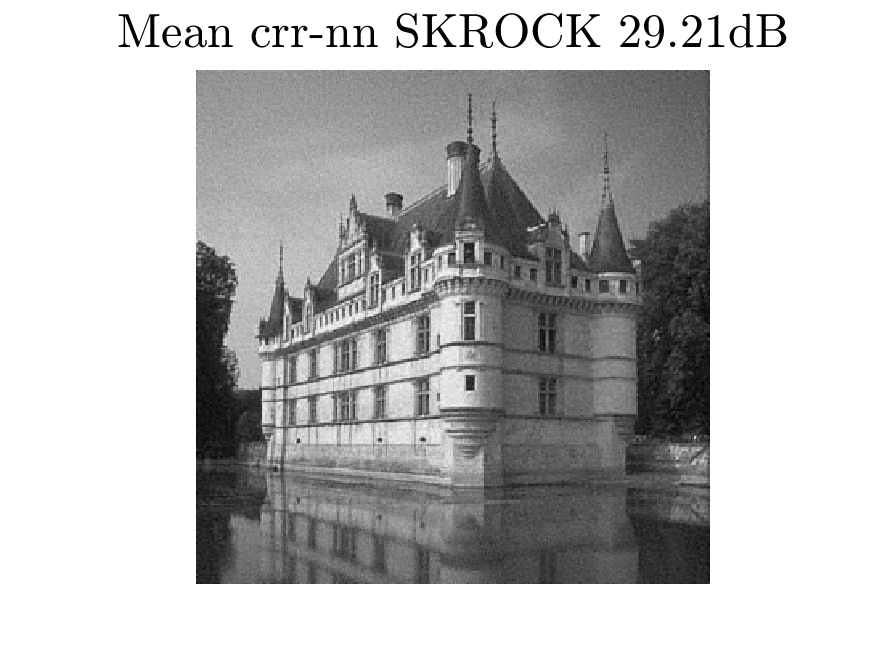}
\end{minipage}%
\begin{minipage}{.32\textwidth}
    \centering
    \includegraphics[trim=30 0 30 0, clip=true,width=\linewidth]{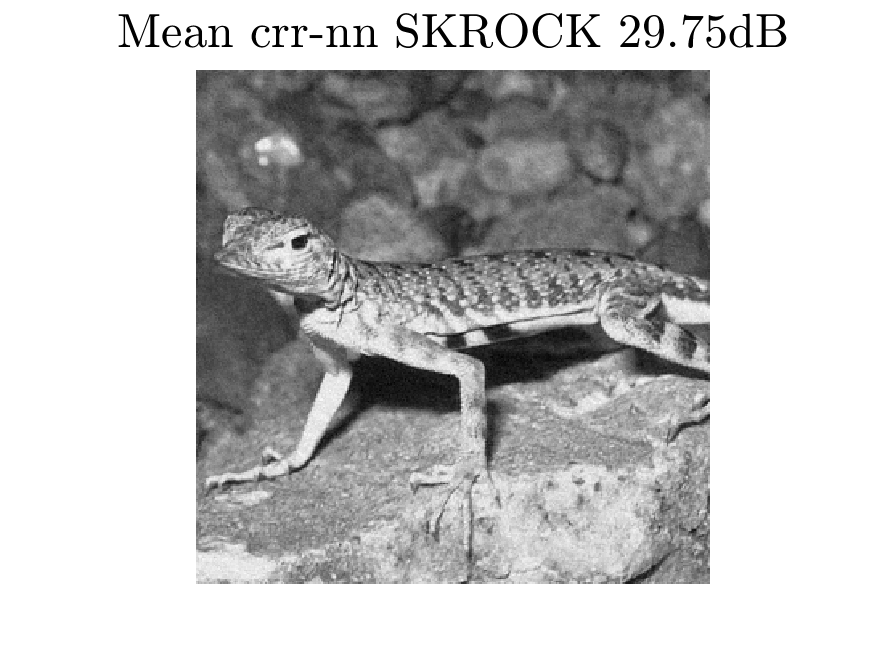}
\end{minipage}%
\begin{minipage}{.32\textwidth}
    \centering
    \includegraphics[trim=30 0 30 0, clip=true,width=\linewidth]{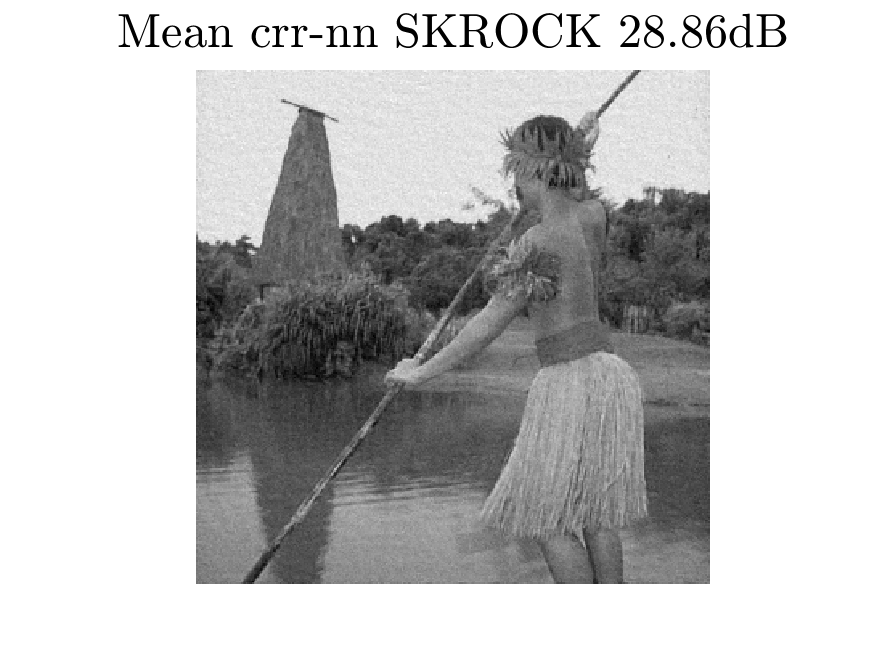}
\end{minipage}%

\begin{minipage}{.32\textwidth}
    \centering
    \includegraphics[trim=30 0 30 0, clip=true,width=\linewidth]{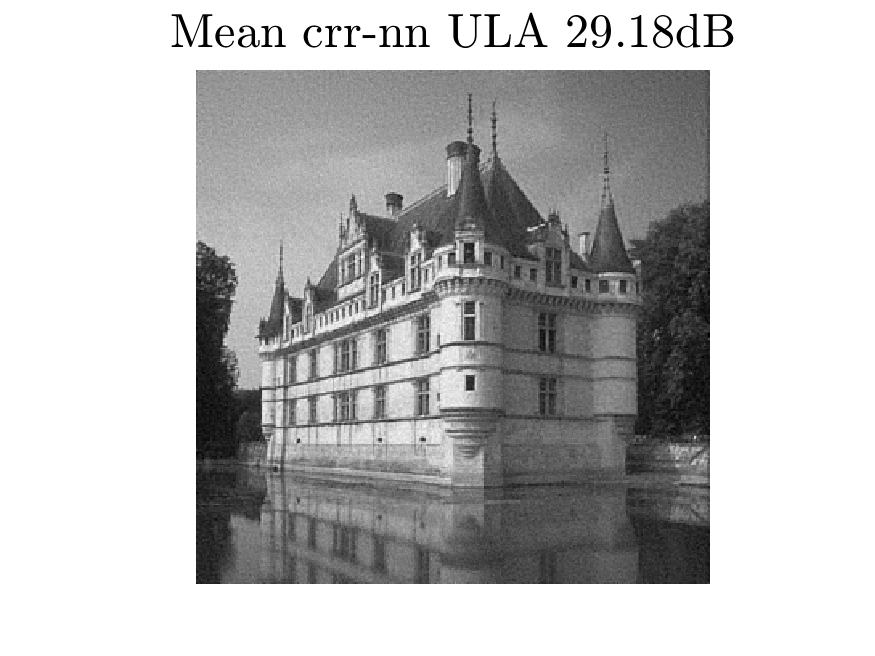}
\end{minipage}%
\begin{minipage}{.32\textwidth}
    \centering
    \includegraphics[trim=30 0 30 0, clip=true,width=\linewidth]{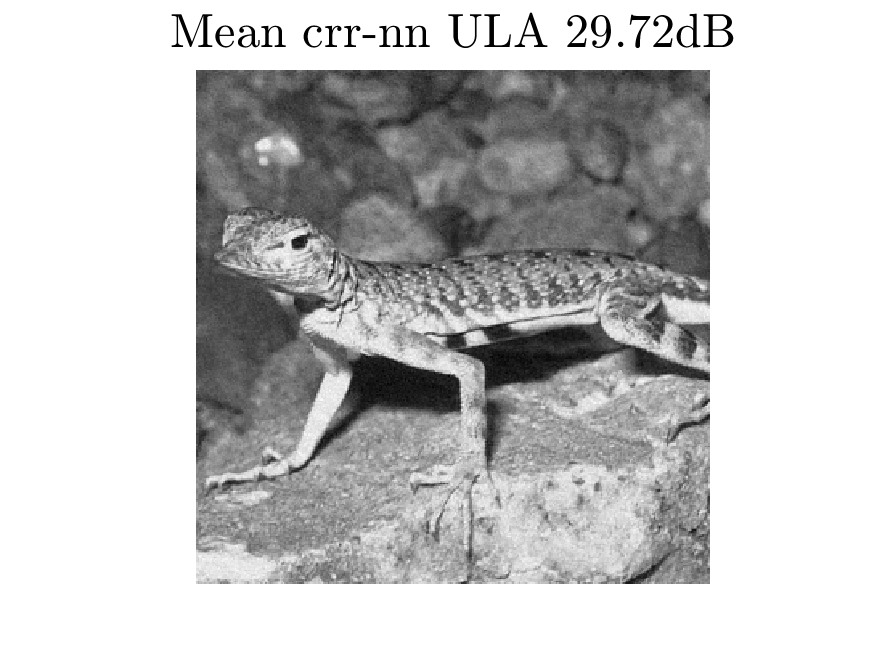}
\end{minipage}%
\begin{minipage}{.32\textwidth}
    \centering
    \includegraphics[trim=30 0 30 0, clip=true,width=\linewidth]{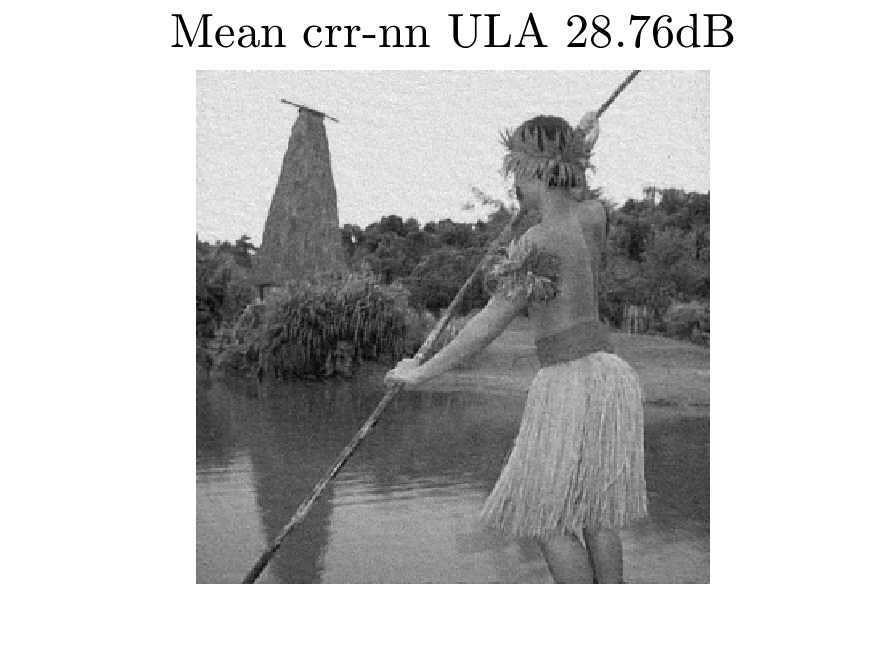}
\end{minipage}%

\begin{minipage}{.32\textwidth}
    \centering
    \includegraphics[trim=30 0 30 0, clip=true,width=\linewidth]{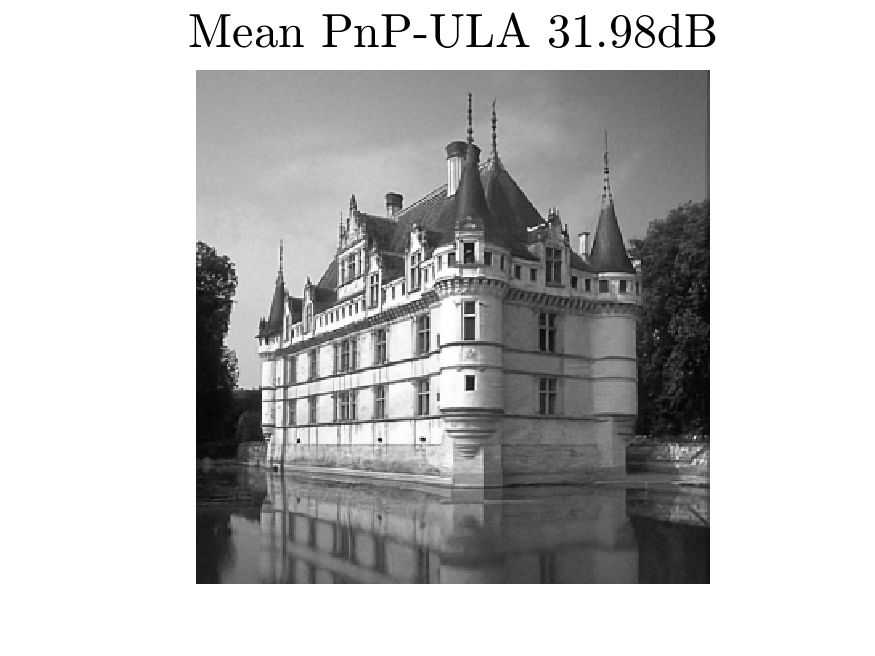}
\end{minipage}%
\begin{minipage}{.32\textwidth}
    \centering
    \includegraphics[trim=30 0 30 0, clip=true,width=\linewidth]{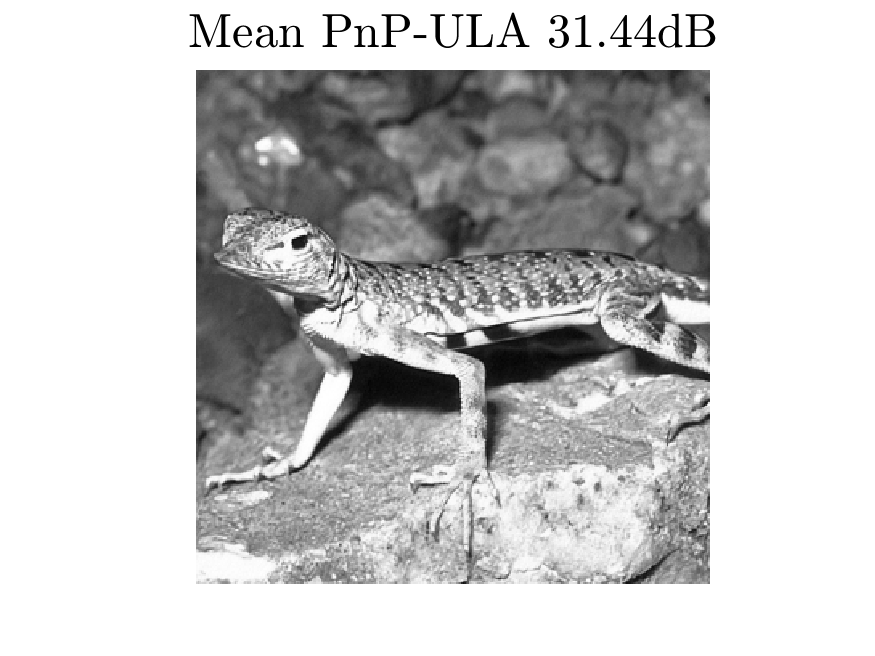}
\end{minipage}%
\begin{minipage}{.32\textwidth}
    \centering
    \includegraphics[trim=30 0 30 0, clip=true,width=\linewidth]{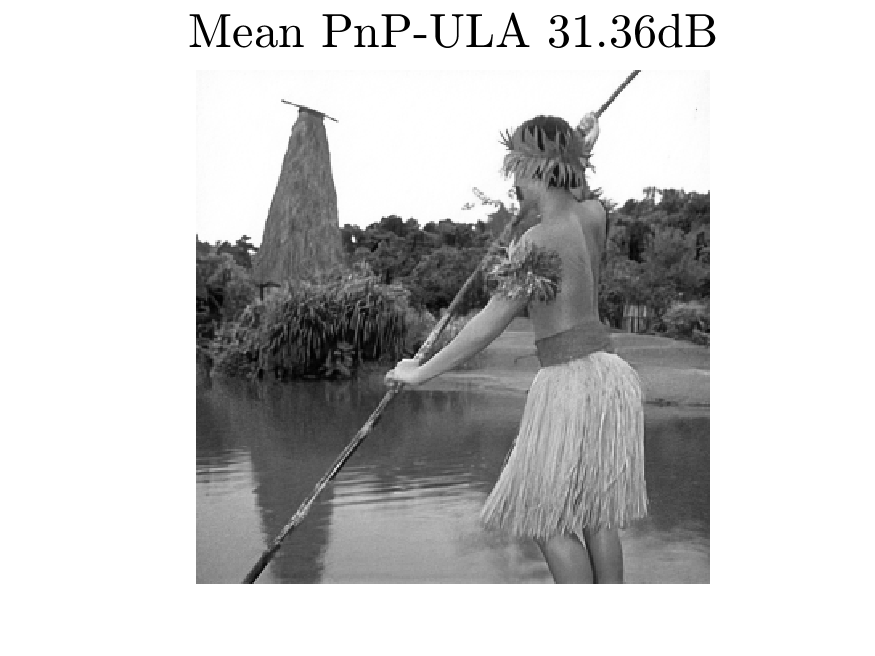}
\end{minipage}%

\caption{MAP solution for the motion deconvolution problem using a gradient method \cite{goujon2022crrnn} for the CRR-NN model (first line) and posterior means for IMLA, SKROCK, ULA and PnP-ULA (line 2-5) for \texttt{castle, lizard} and \texttt{person} images (left to right) including the corresponding PSNR values in $dB$.}
\label{fig:mean-motion-comp}
\end{figure}

\begin{figure}[h!]
\centering


\begin{subfigure}{.32\textwidth}
    \centering
    \includegraphics[trim=20 0 20 0, clip=true,width=\linewidth]{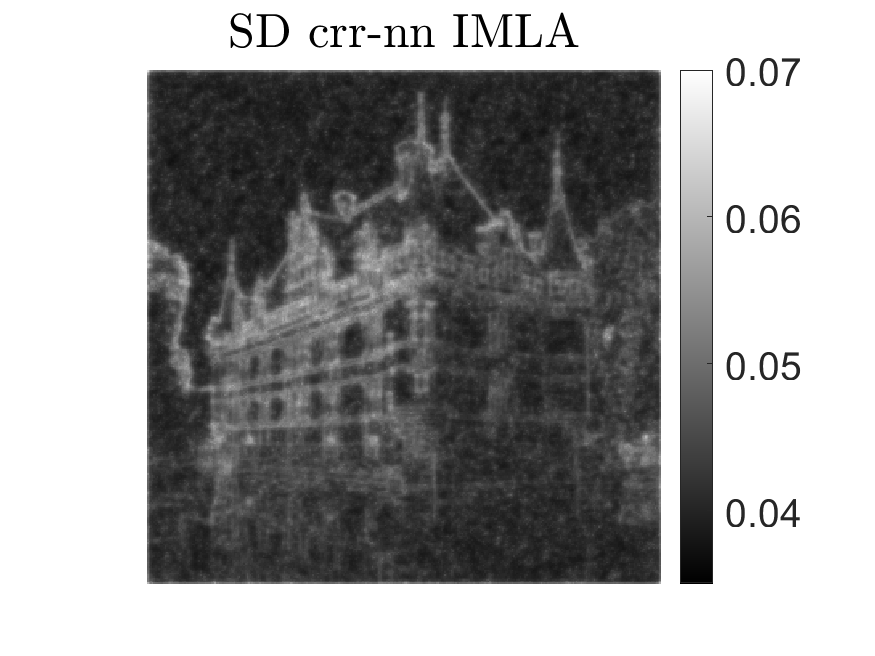}
\end{subfigure}%
\begin{subfigure}{.32\textwidth}
    \centering
    \includegraphics[trim=20 0 20 0, clip=true,width=\linewidth]{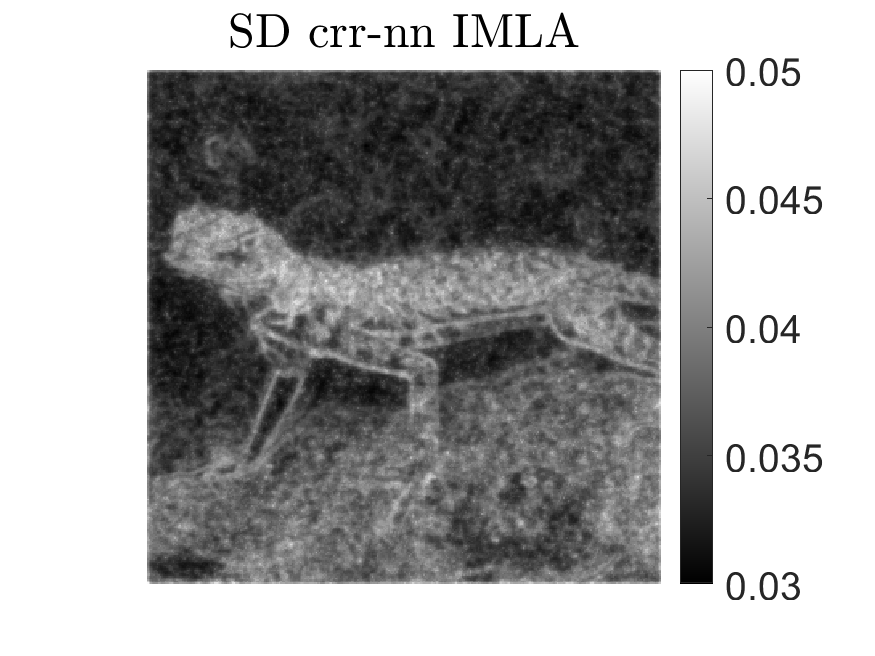}
\end{subfigure}%
\begin{subfigure}{.32\textwidth}
    \centering
    \includegraphics[trim=20 0 20 0, clip=true,width=\linewidth]{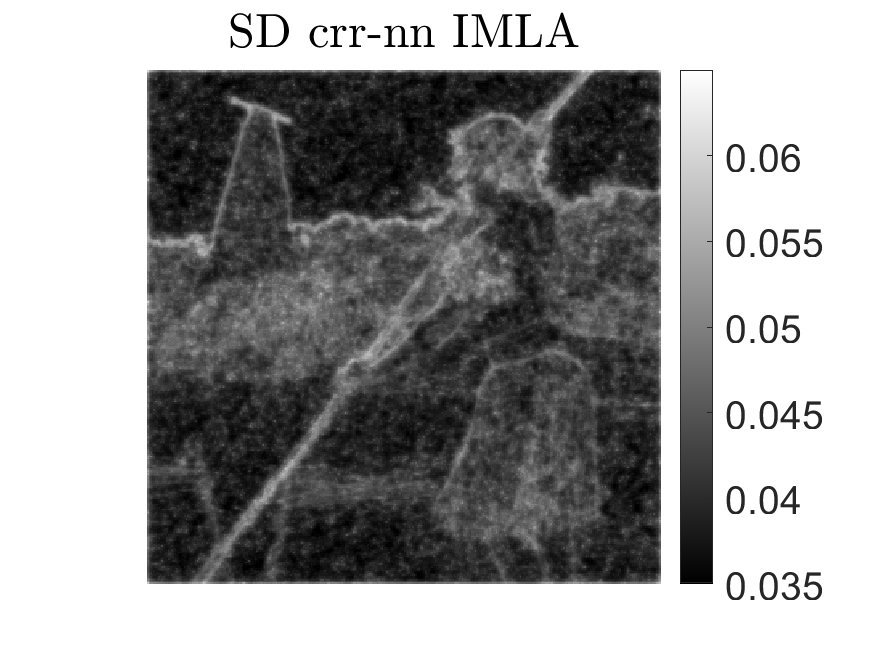}
\end{subfigure}%

\begin{subfigure}{.32\textwidth}
    \centering
    \includegraphics[trim=20 0 20 0, clip=true,width=\linewidth]{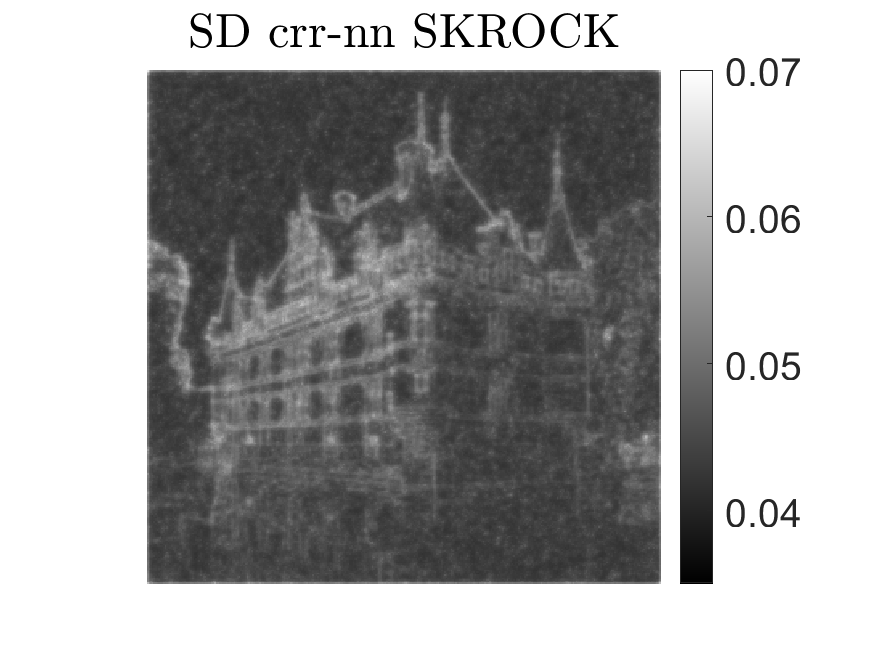}
\end{subfigure}%
\begin{subfigure}{.32\textwidth}
    \centering
    \includegraphics[trim=20 0 20 0, clip=true,width=\linewidth]{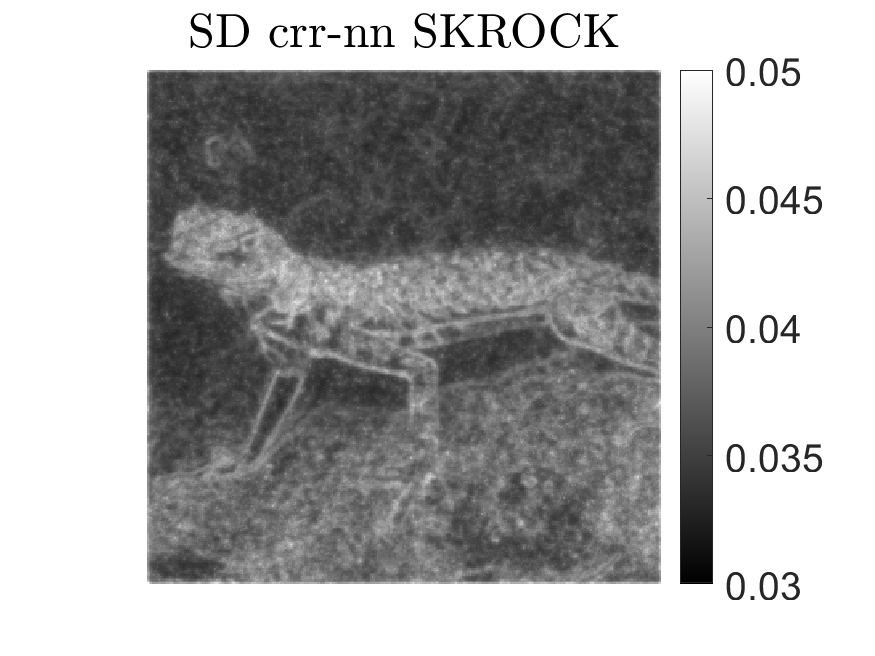}
\end{subfigure}%
\begin{subfigure}{.32\textwidth}
    \centering
    \includegraphics[trim=20 0 20 0, clip=true,width=\linewidth]{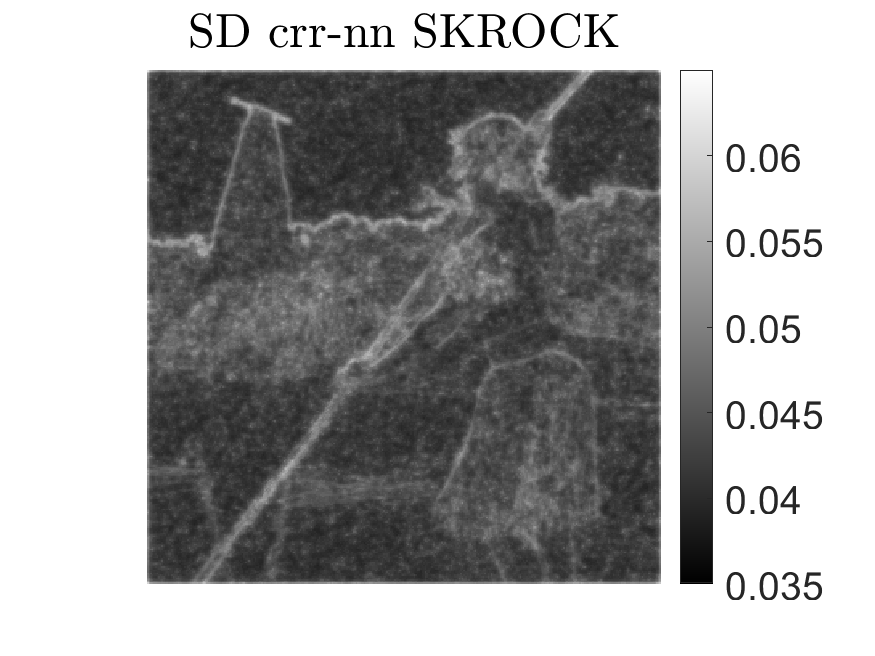}
\end{subfigure}%

\begin{subfigure}{.32\textwidth}
    \centering
    \includegraphics[trim=20 0 20 0, clip=true,width=\linewidth]{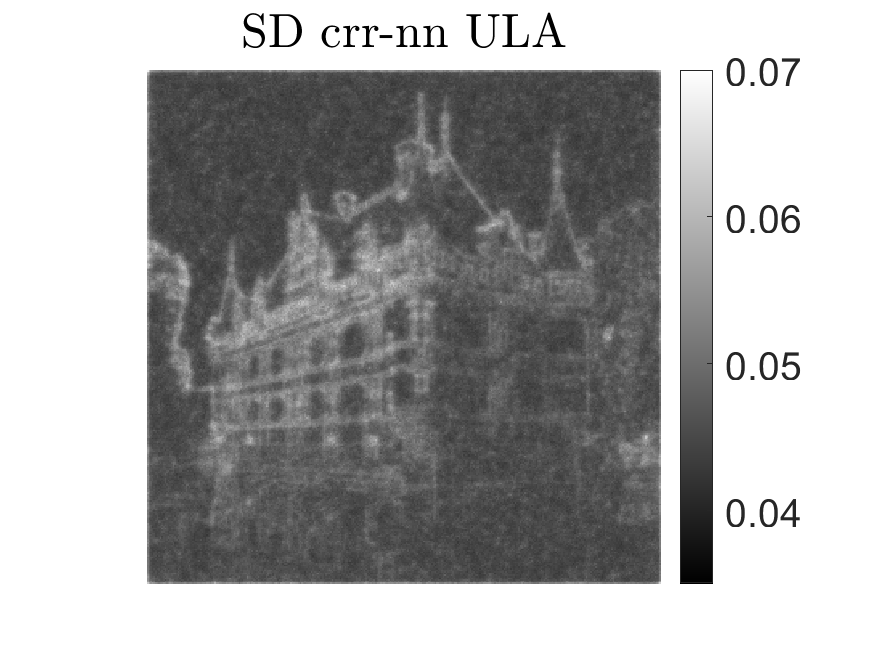}
\end{subfigure}%
\begin{subfigure}{.32\textwidth}
    \centering
    \includegraphics[trim=20 0 20 0, clip=true,width=\linewidth]{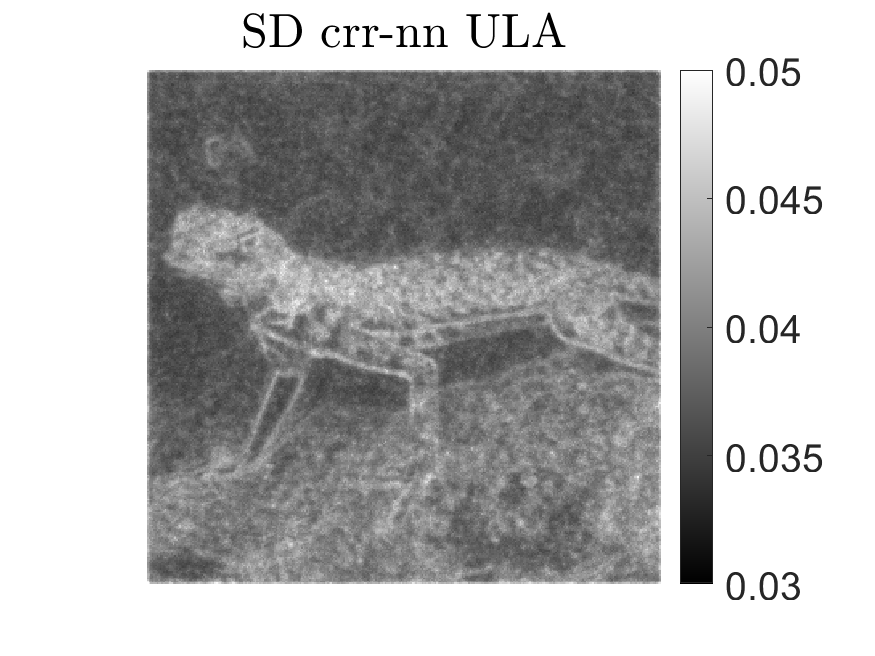}
\end{subfigure}%
\begin{subfigure}{.32\textwidth}
    \centering
    \includegraphics[trim=20 0 20 0, clip=true,width=\linewidth]{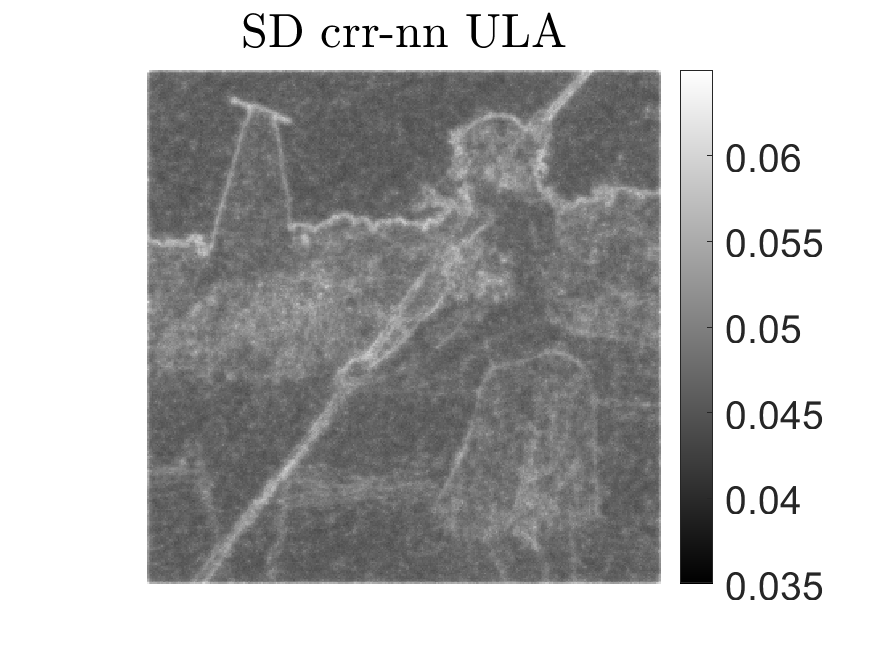}
\end{subfigure}%

\begin{subfigure}{.32\textwidth}
    \centering
    \includegraphics[trim=20 0 20 0, clip=true,width=\linewidth]{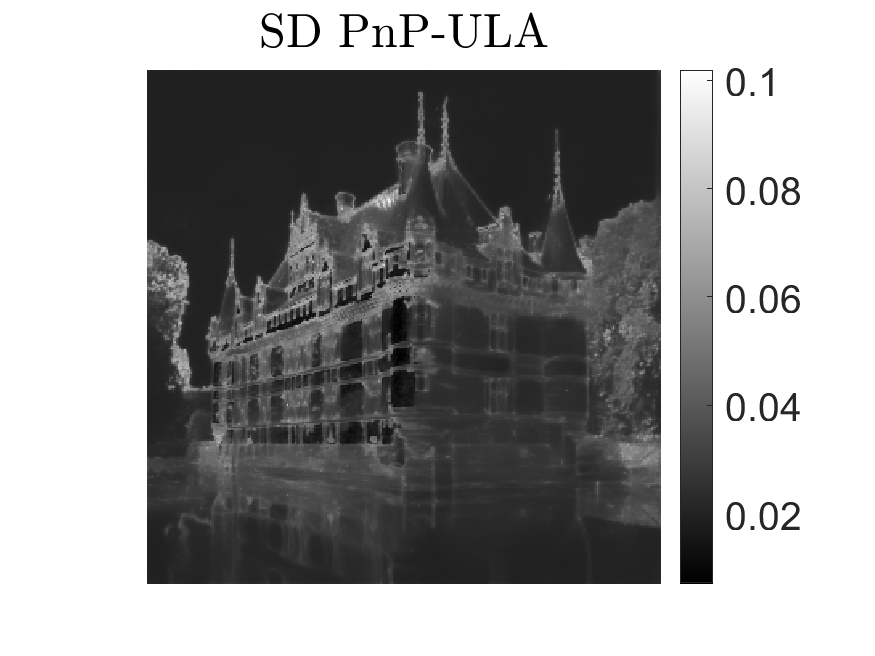}
\end{subfigure}%
\begin{subfigure}{.32\textwidth}
    \centering
    \includegraphics[trim=20 0 20 0, clip=true,width=\linewidth]{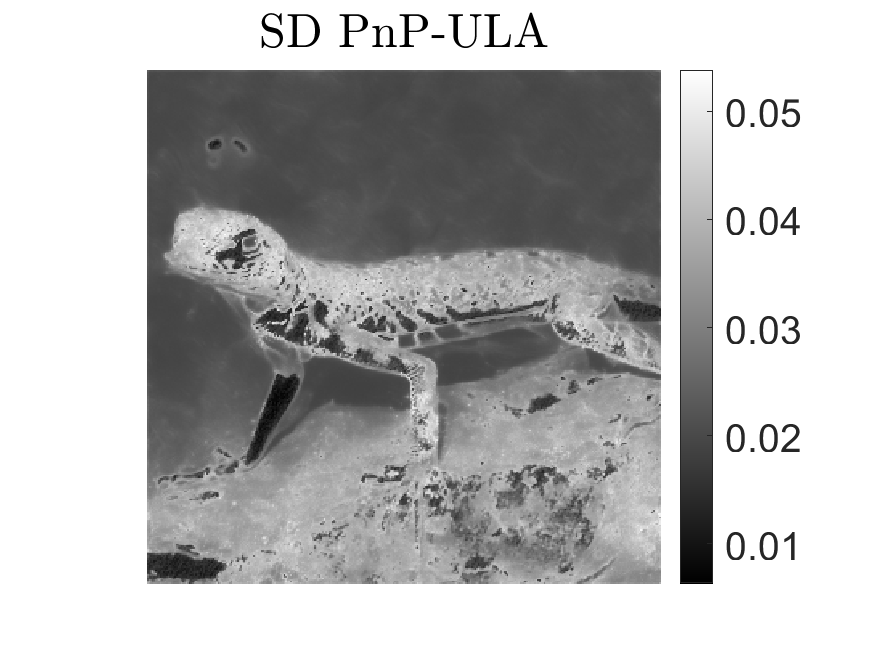}
\end{subfigure}%
\begin{subfigure}{.32\textwidth}
    \centering
    \includegraphics[trim=20 0 20 0, clip=true,width=\linewidth]{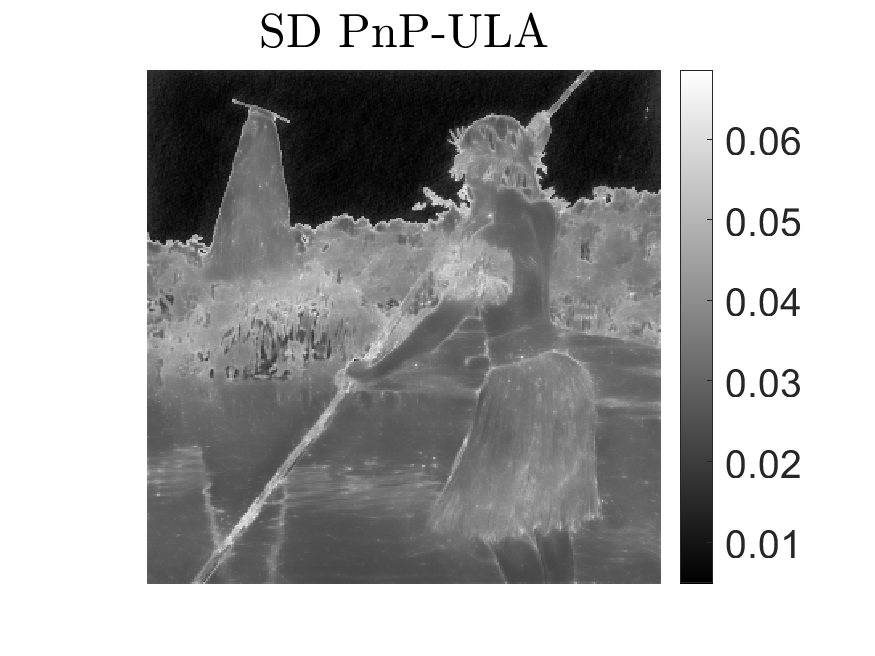}
\end{subfigure}%
\caption{Marginal posterior standard deviation (SD) for the motion deconvolution problem for IMLA, SKROCK, ULA and PnP-ULA for \texttt{castle, lizard} and \texttt{person} images (left to right).}
\label{fig:std-motion-comp}
\end{figure}

\begin{figure}[h!]
    \centering
\begin{minipage}{\textwidth}   


\begin{minipage}{0.24\textwidth}
    \includegraphics[width=\linewidth]{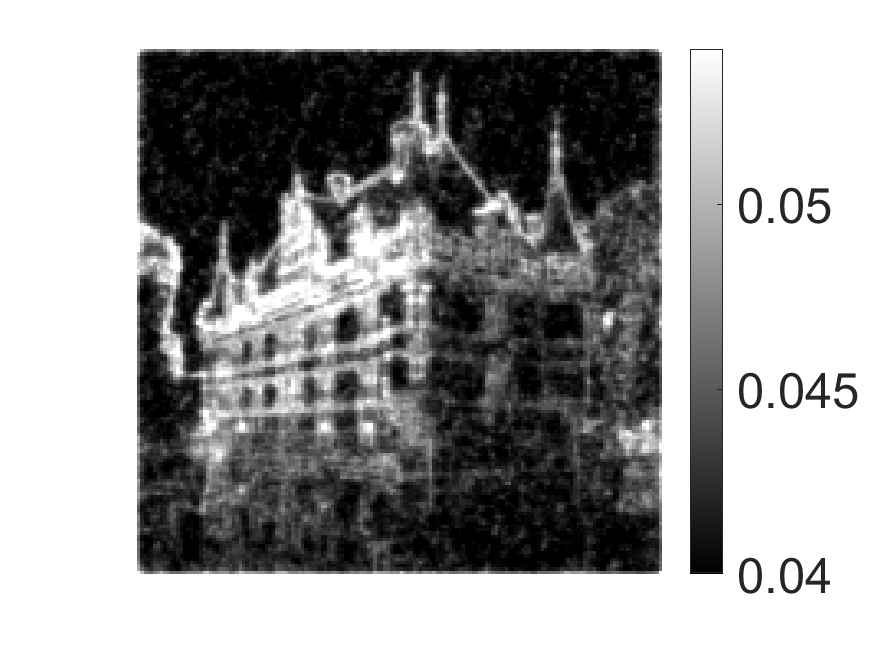}
\end{minipage}
\begin{minipage}{0.24\textwidth}
    \includegraphics[width=\linewidth]{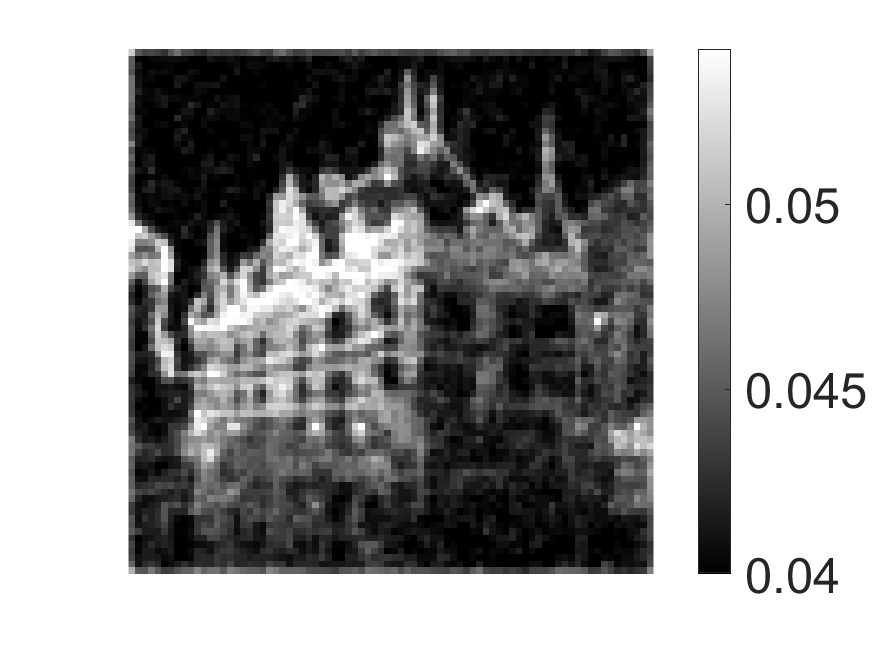}
\end{minipage}
\begin{minipage}{0.24\textwidth}
    \includegraphics[width=\linewidth]{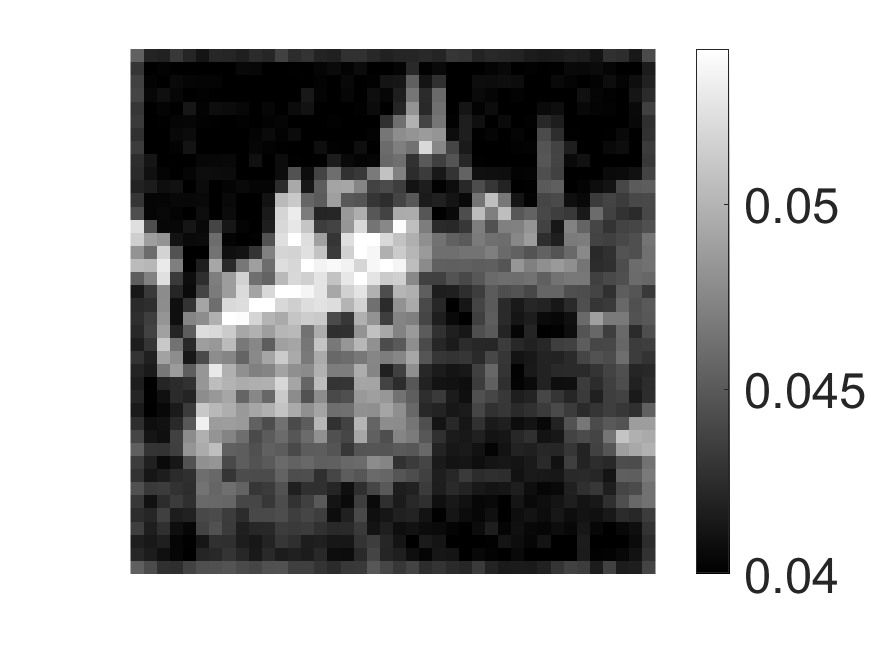}
\end{minipage}
\begin{minipage}{0.24\textwidth}
    \includegraphics[width=\linewidth]{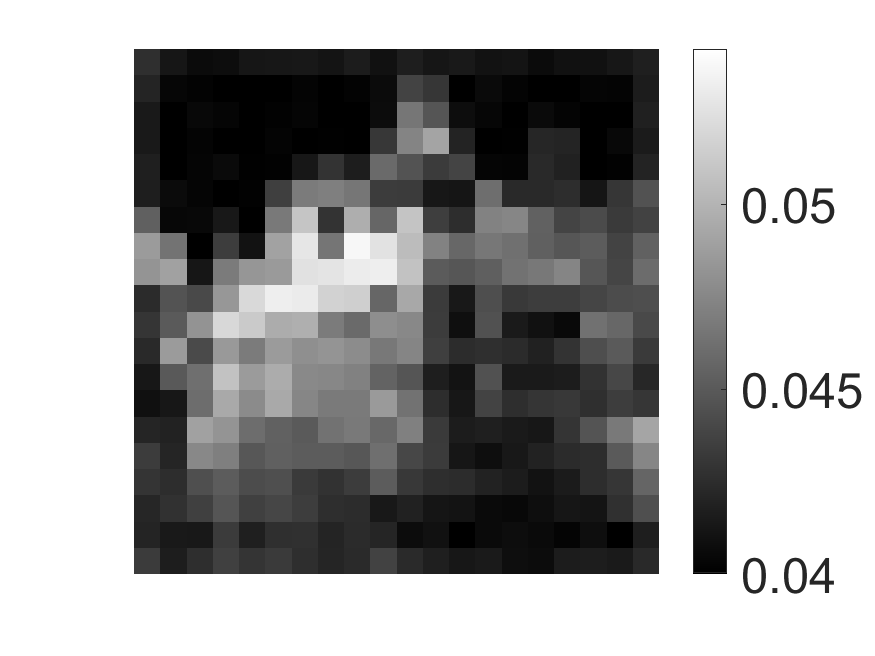}
\end{minipage}

\begin{minipage}{0.24\textwidth}
    \includegraphics[width=\linewidth]{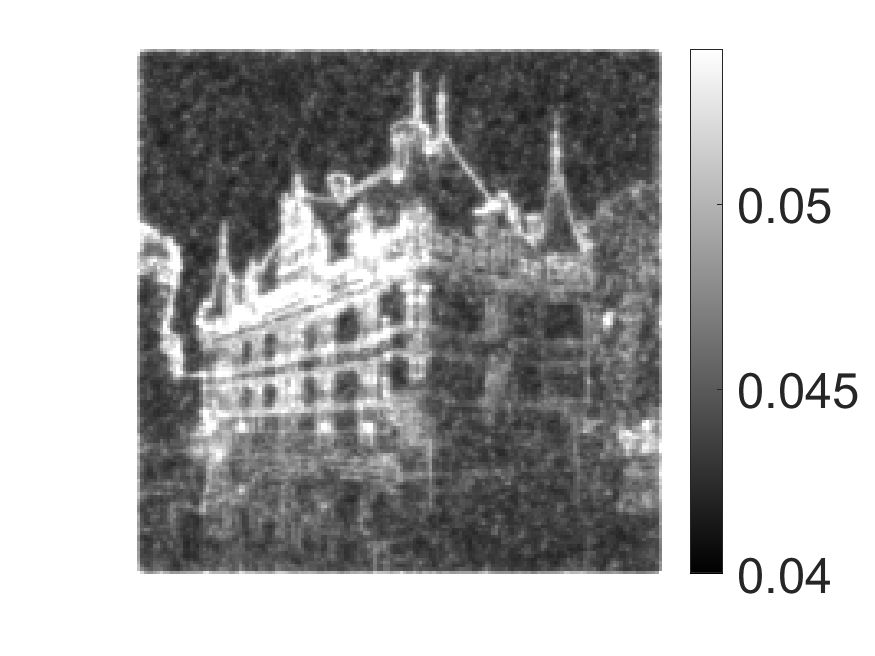}
\end{minipage}
\begin{minipage}{0.24\textwidth}
    \includegraphics[width=\linewidth]{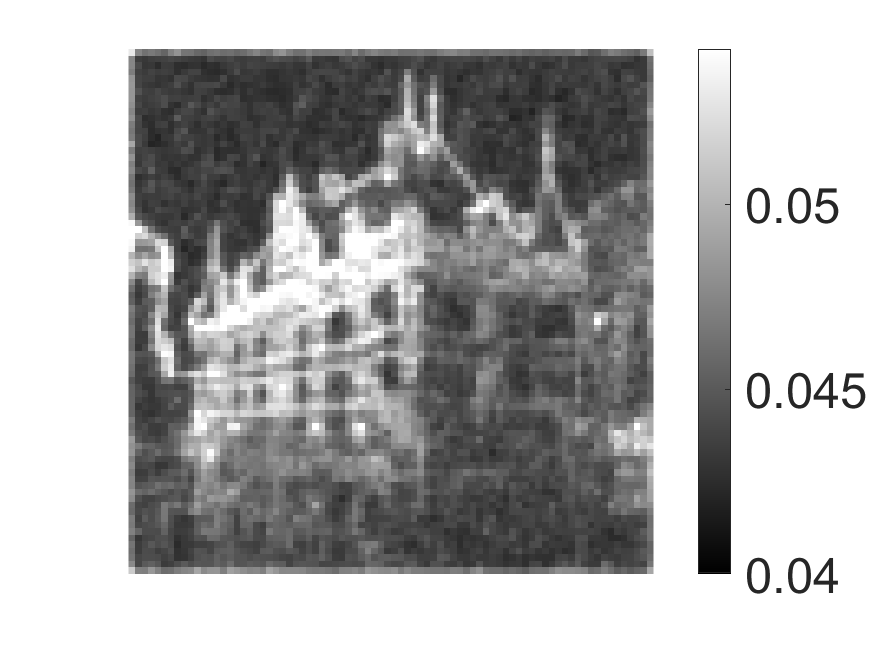}
\end{minipage}
\begin{minipage}{0.24\textwidth}
    \includegraphics[width=\linewidth]{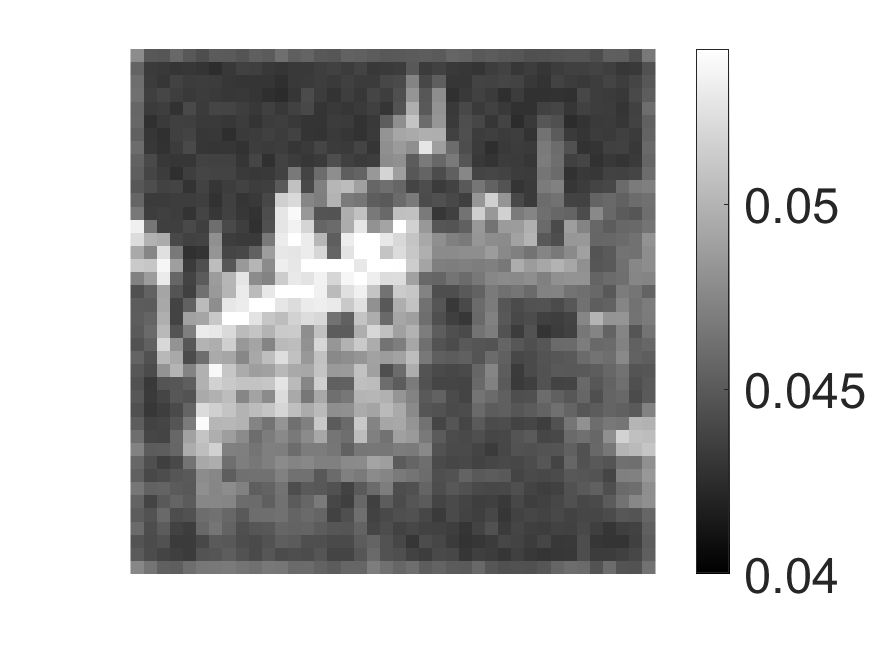}
\end{minipage}
\begin{minipage}{0.24\textwidth}
    \includegraphics[width=\linewidth]{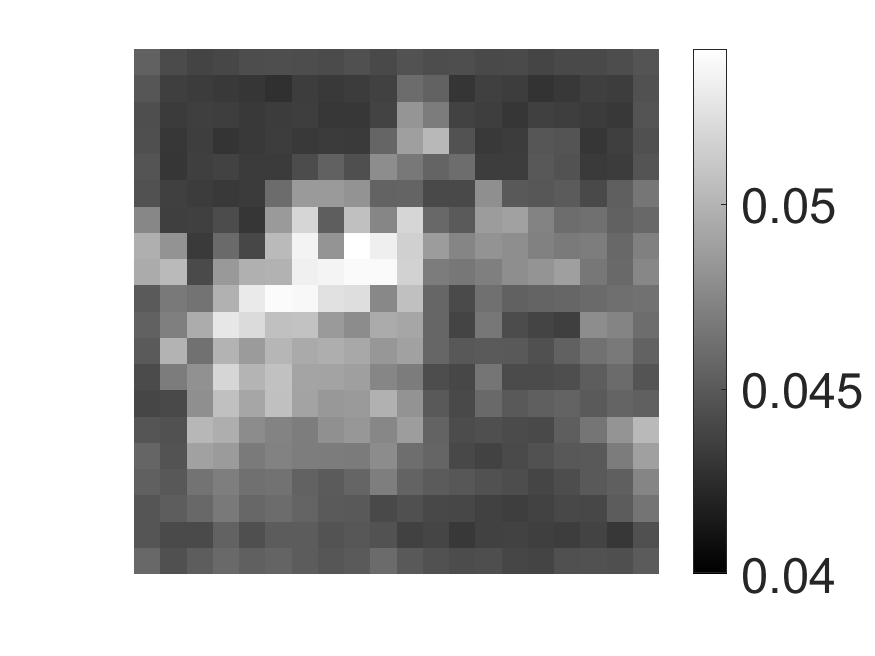}
\end{minipage}

\begin{minipage}{0.24\textwidth}
    \includegraphics[width=\linewidth]{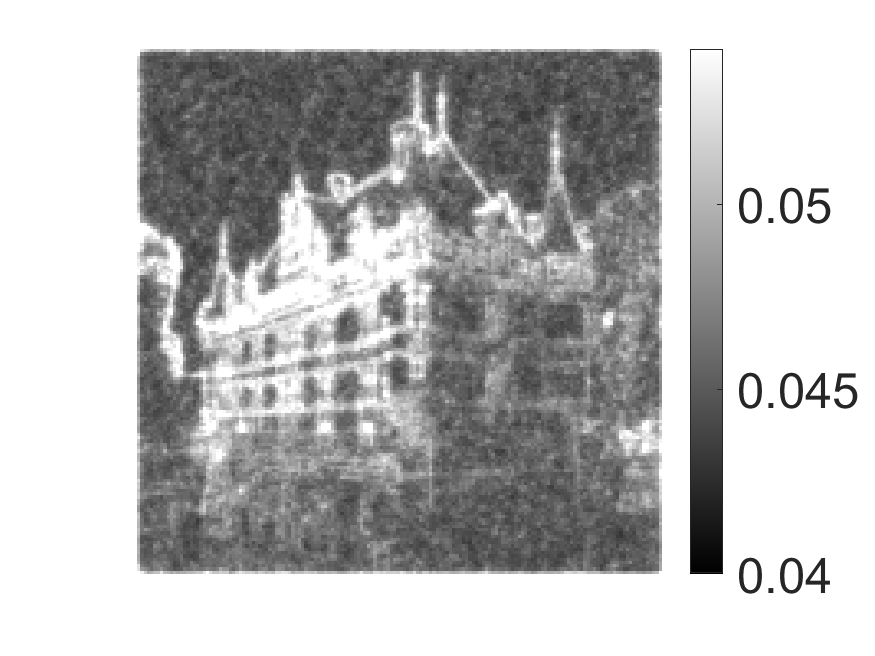}
\end{minipage}
\begin{minipage}{0.24\textwidth}
    \includegraphics[width=\linewidth]{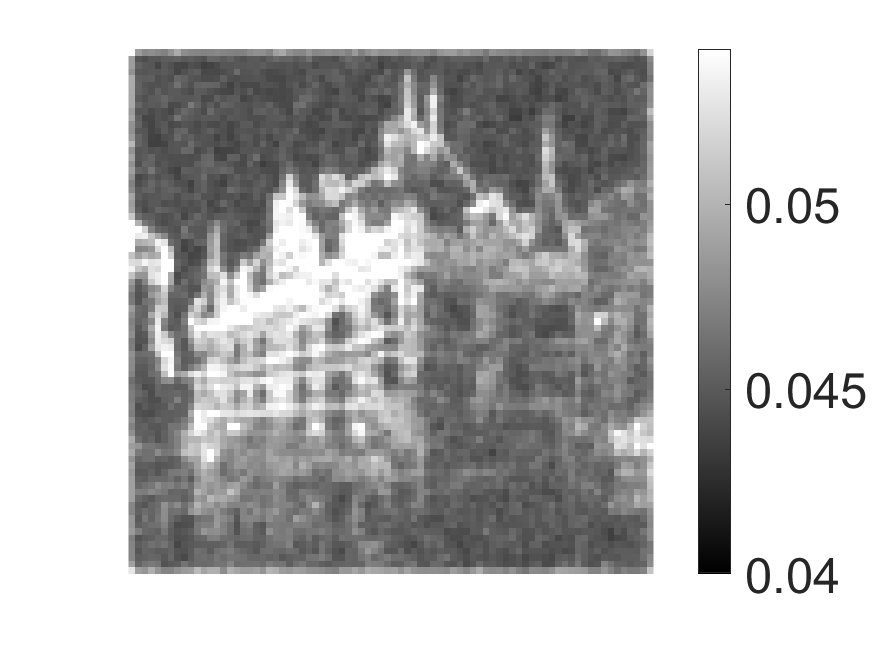}
\end{minipage}
\begin{minipage}{0.24\textwidth}
    \includegraphics[width=\linewidth]{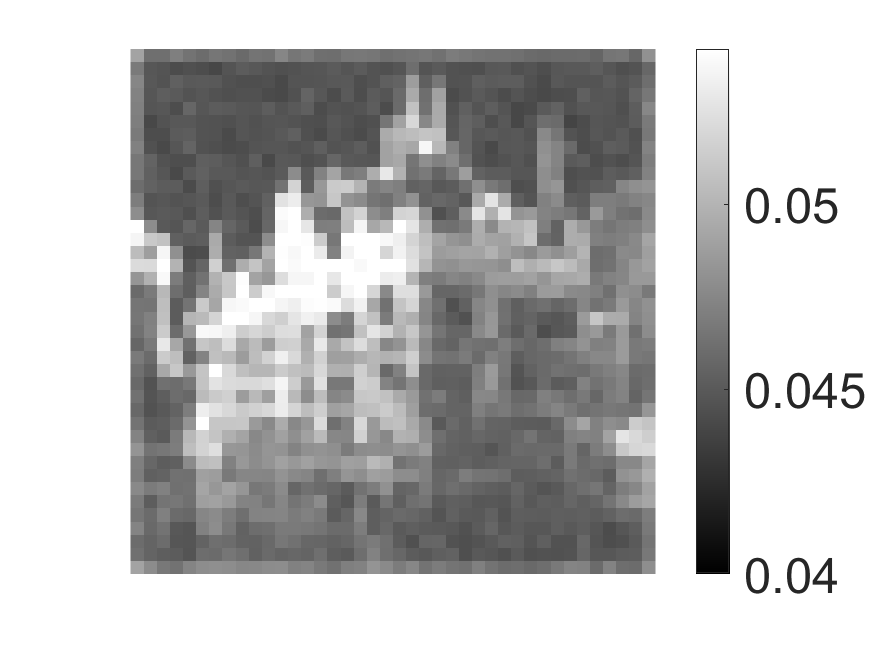}
\end{minipage}
\begin{minipage}{0.24\textwidth}
    \includegraphics[width=\linewidth]{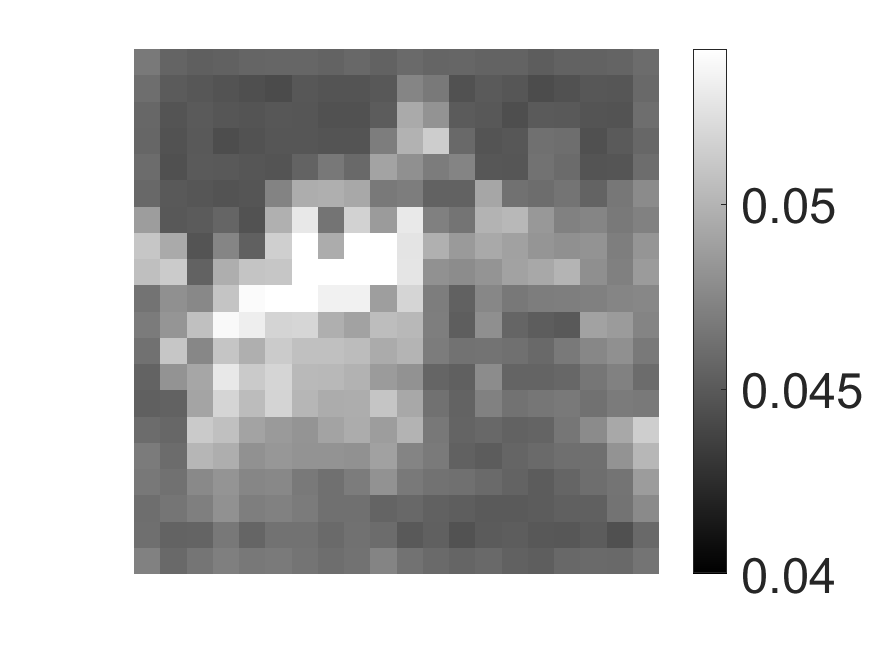}
\end{minipage}


\begin{minipage}{0.24\textwidth}
    \includegraphics[width=\linewidth]{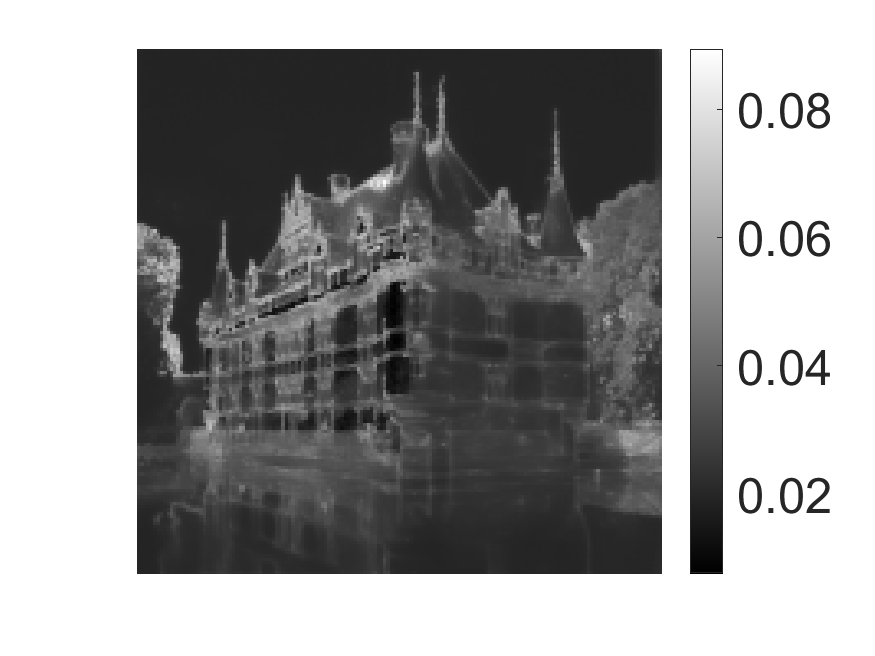}
\end{minipage}
\begin{minipage}{0.24\textwidth}
    \includegraphics[width=\linewidth]{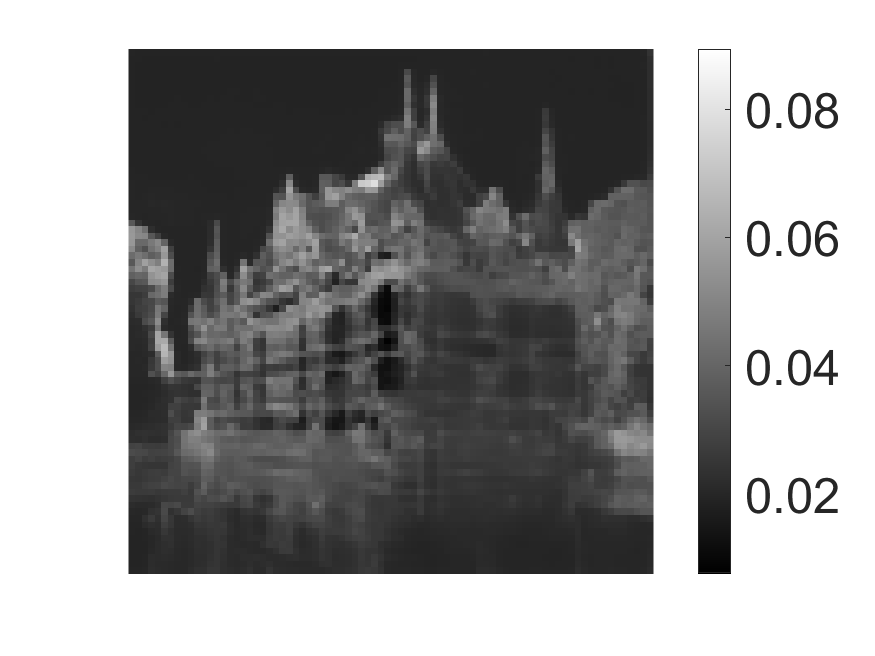}
\end{minipage}
\begin{minipage}{0.24\textwidth}
    \includegraphics[width=\linewidth]{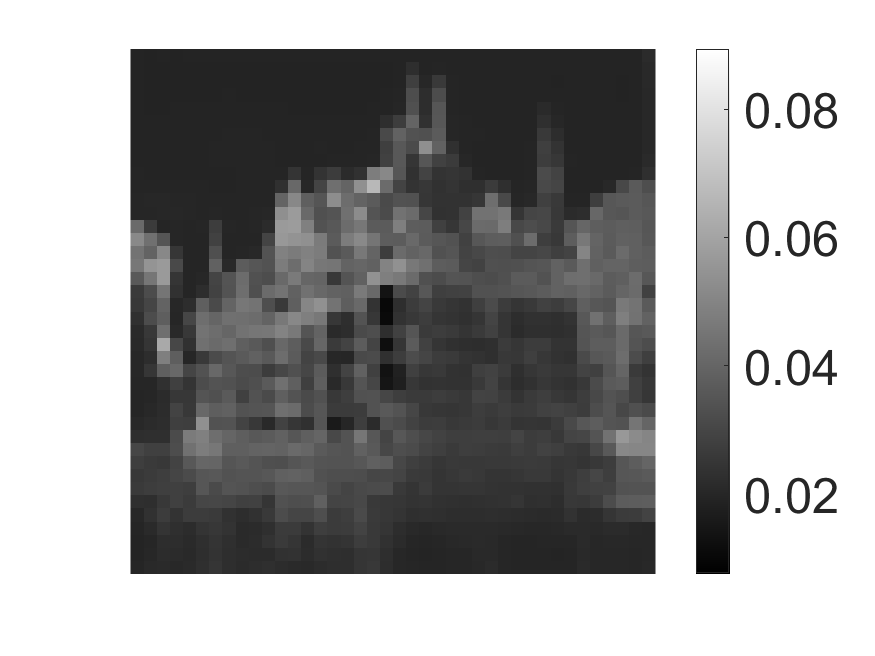}
\end{minipage}
\begin{minipage}{0.24\textwidth}
    \includegraphics[width=\linewidth]{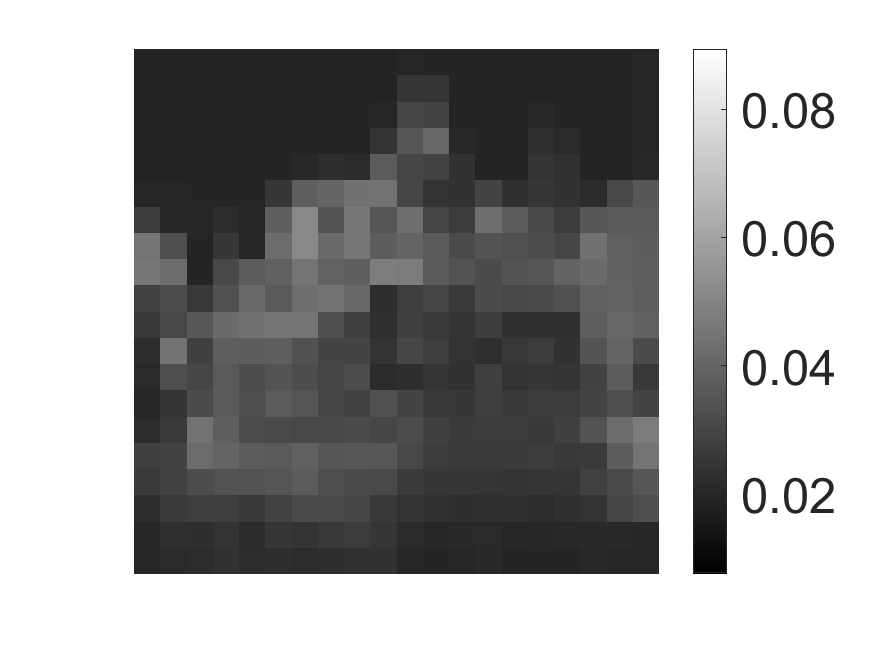}
\end{minipage}
\captionof{figure}{Visualisation of uncertainty by displaying the marginal standard deviations at scales $2\times2$, $4\times4$, $8\times8$, and $16\times16$ pixels (left to right), for IMLA, SKROCK, ULA, and PnP-ULA (top to bottom) for the \texttt{castle} image.}
\label{fig:castle-uq-scales}
\end{minipage}
\end{figure}

{From Figures \ref{fig:mean-motion-comp}-\ref{fig:castle-uq-scales}, we observe that IMLA achieves a higher PSNR than ULA and SKROCK (see Figure \ref{fig:mean-motion-comp}) and that it also delivers more stable estimates of the second-order moments. In particular, ULA and SKROCK have noticeably higher Monte Carlo error noise in the background of Figure~\ref{fig:std-motion-comp}, which indicates that IMLA is exploring the solution space more quickly. The MAP estimator delivers better point estimates in terms of PSNR, which is common in models that are log-concave, especially when the hyperparameters of \eqref{eq:post1} are tuned to optimise the PSNR of the MAP solution. In addition, the MMSE calculated by PnP-ULA outperforms the MAP and MMSE solutions provided by the log-concave model \eqref{eq:post1} due to the higher accuracy of the non-convex prior. This additional accuracy comes at a potentially significantly higher computational cost, as navigating non-convex landscapes requires additional iterations and evaluating the neural network \cite{RyuLWCWY19} is significantly more expensive than evaluating \cite{goujon2022crrnn}. 

In order to further illustrate the convergence properties of the IMLA, ULA and SKROCK algorithms, Figure \ref{fig:psnr-acf-motion-comp} depicts the PSNR of the running mean, the trace of the statistic $\log \pi$, and the autocorrelation function (ACF) of the slowest and fastest mixing components of the Markov chains for the \texttt{castle} image (the results for the other two images were quantitatively equivalent). Note that the slow/fast components have been identified by performing an approximate singular value decomposition of the posterior covariance and then computing the ACF for the projections of the chain on the eigenvectors with largest (fastest) and smallest (slowest) eigenvalues.

In this figure, one can observe that IMLA and SKROCK behave very similarly both in terms of their transient regimes (evolution of PSNR and of $\log \pi$), and in terms of their stationary regimes (described by the ACFs), whereas ULA converges quickly to the typical set of $\pi$ (as illustrated by the evolution of $\log \pi$) but explores the space more slowly (see the ACF plots for ULA). This is consistent with the theory for ULA and IMLA, and in agreement with our understanding of SKROCK based on its analysis for Gaussian models.

\begin{figure}[p]
\begin{minipage}{.49\textwidth}
    \centering
    \includegraphics[width=\linewidth]{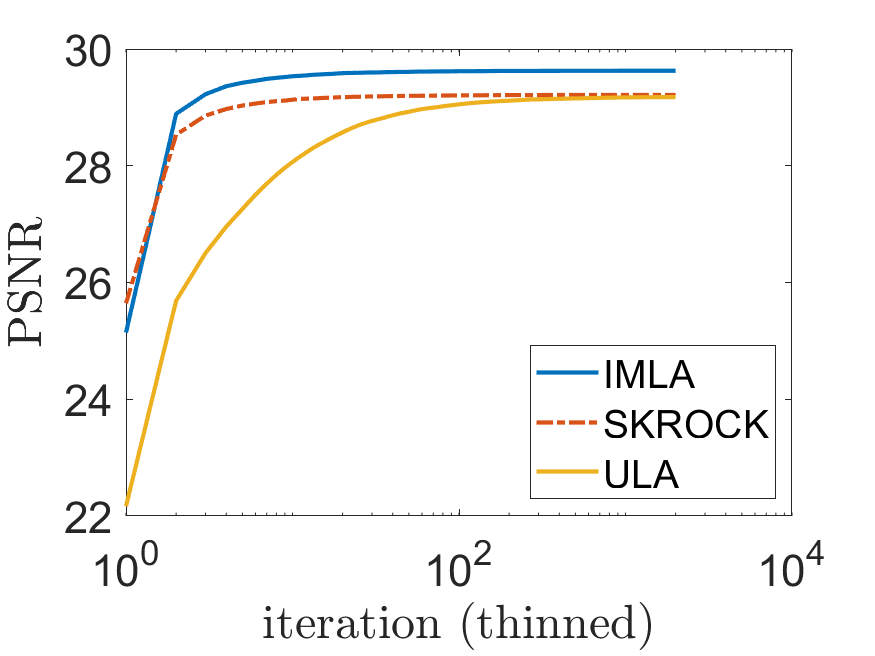}
\end{minipage}%
\begin{minipage}{.49\textwidth}
    \centering
    \includegraphics[width=\linewidth]{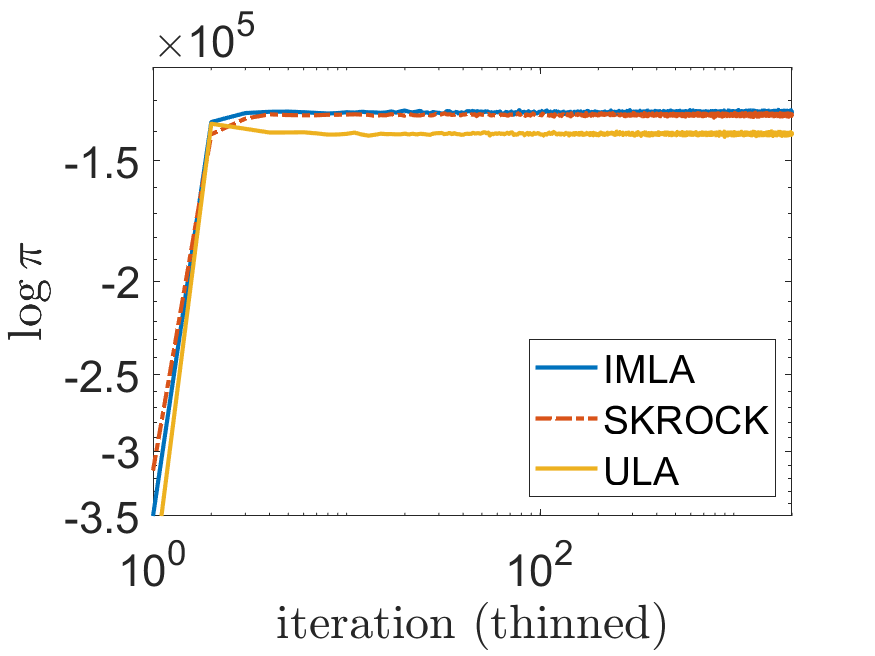}
\end{minipage}%

\begin{minipage}{.49\textwidth}
    \centering
    \includegraphics[width=\linewidth]{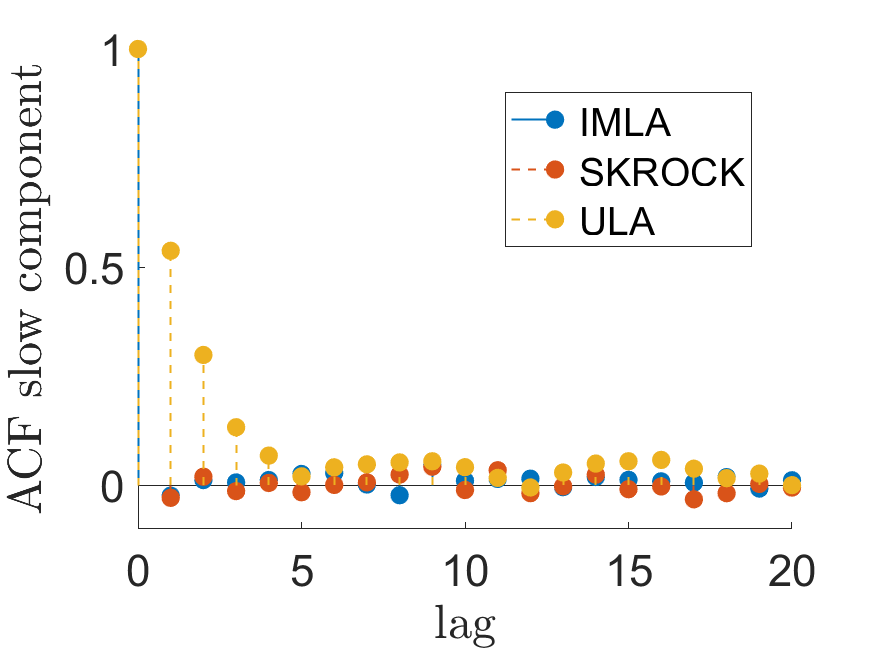}
\end{minipage}%
\begin{minipage}{.49\textwidth}
    \centering
    \includegraphics[width=\linewidth]{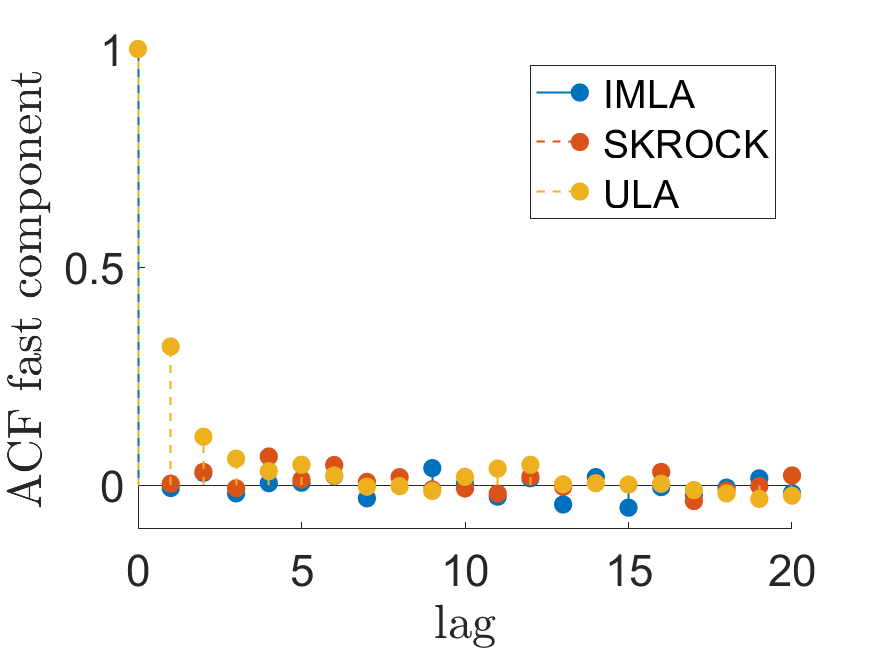}
\end{minipage}%

\begin{minipage}{.49\textwidth}
    \centering
    \includegraphics[width=\linewidth]{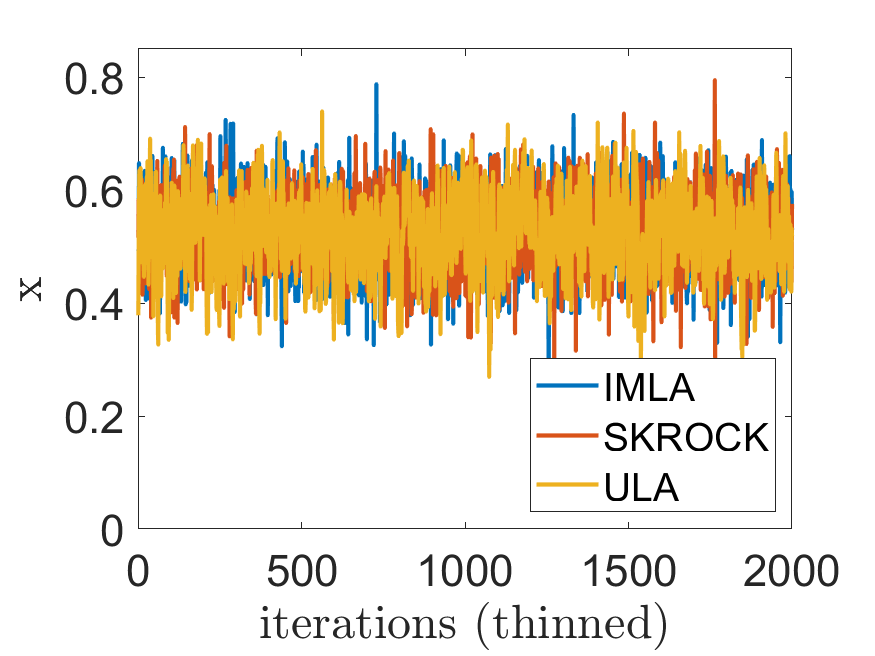}
\end{minipage}%
\begin{minipage}{.49\textwidth}
    \centering
    \includegraphics[width=\linewidth]{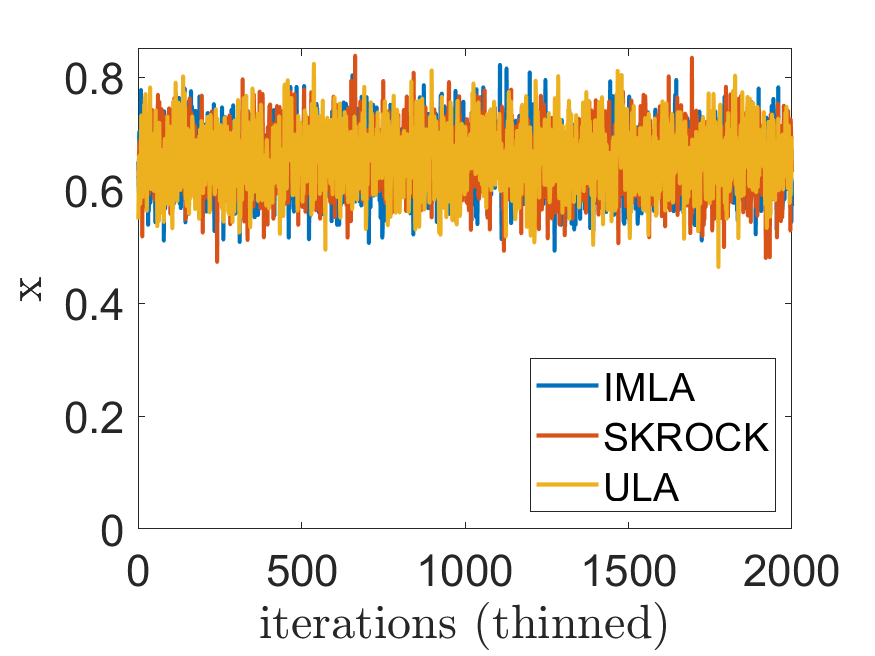}
\end{minipage}%
\caption{Motion deconvolution experiment. PSNR of the running mean (top left) and $\log \pi(x)$ (top right) for $N$ iterations, thinned fairly to obtain 2000 data points, \review{ ACF of the slowest and fastest component (middle left and right), and pixel traces for the respective slow/fast components (bottom left and right) }for IMLA, SKROCK and ULA for the \texttt{castle} image.}
\label{fig:psnr-acf-motion-comp}
\end{figure}
\begin{figure}[!htb]
    \centering
\begin{minipage}{\textwidth}   
\begin{minipage}{0.49\textwidth}
    \includegraphics[width=\linewidth]{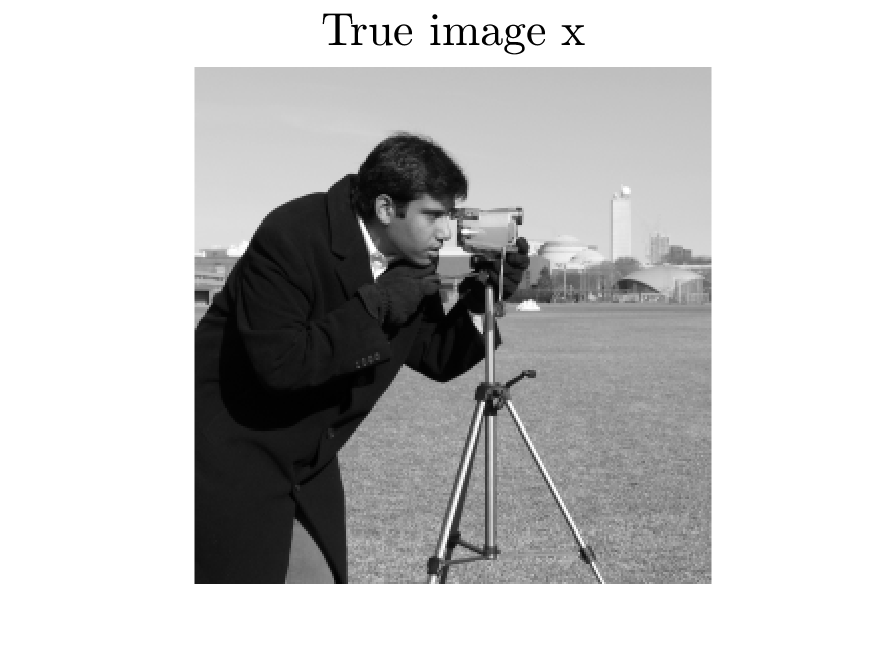}
\end{minipage}
\begin{minipage}{0.49\textwidth}
    \includegraphics[width=\linewidth]{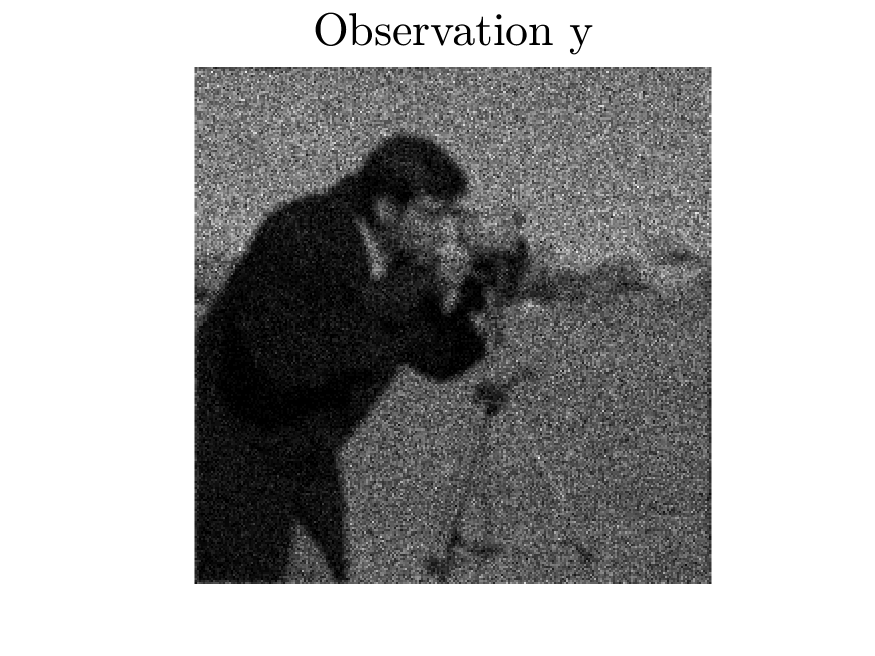}
\end{minipage}
\captionof{figure}{\texttt{cameraman} image $x$ (left) and observed image $y$ (right) of size $256\times256$ for the Poisson deconvolution experiment.}
\label{fig:poisson-data-cman}
\end{minipage}
\end{figure}
\begin{figure}[!htb]
    \centering
\begin{minipage}{\textwidth}   
\begin{minipage}{0.32\textwidth}
    \includegraphics[width=\linewidth]{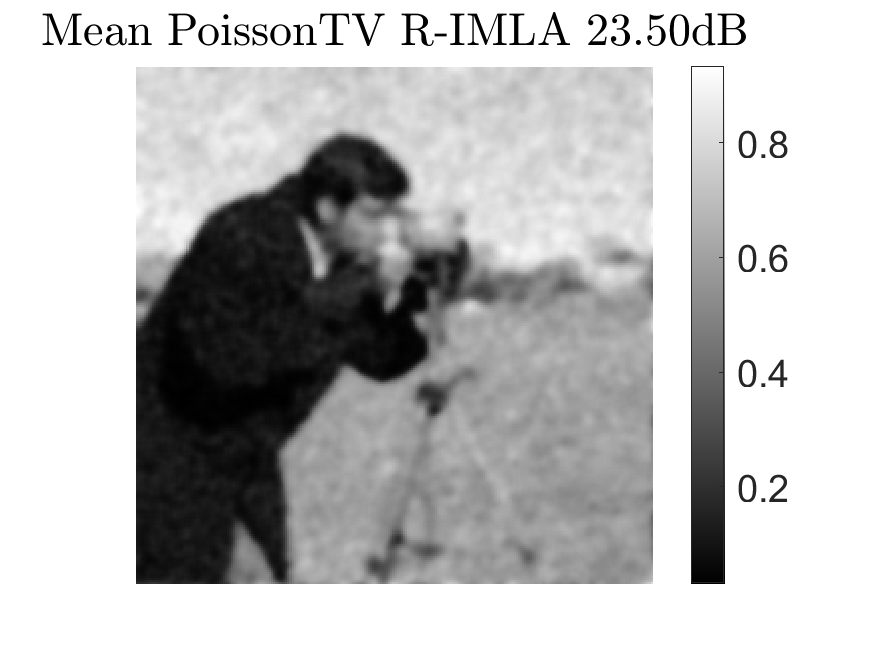}
\end{minipage}
\begin{minipage}{0.32\textwidth}
    \includegraphics[width=\linewidth]{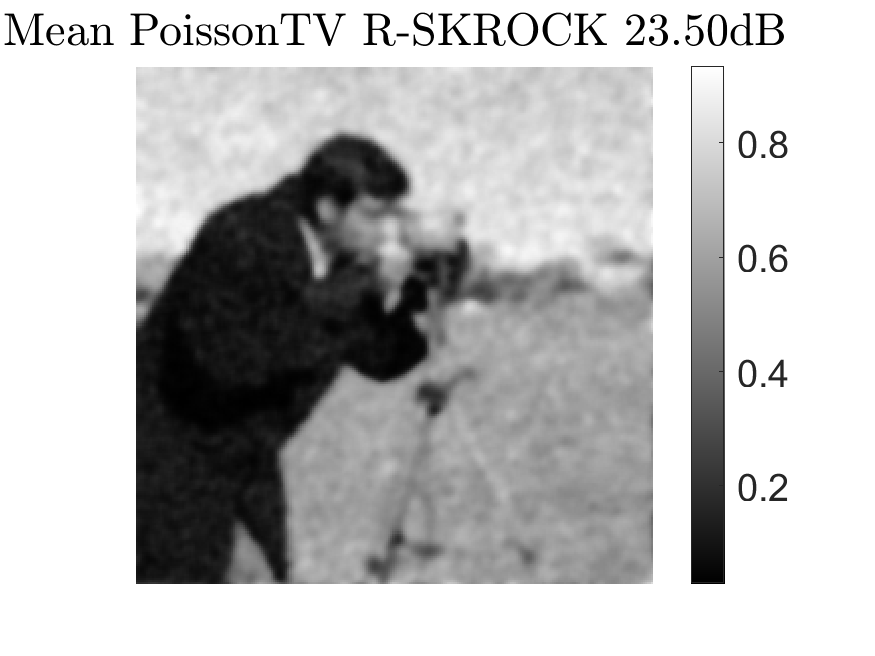}
\end{minipage}
\begin{minipage}{0.32\textwidth}
    \includegraphics[width=\linewidth]{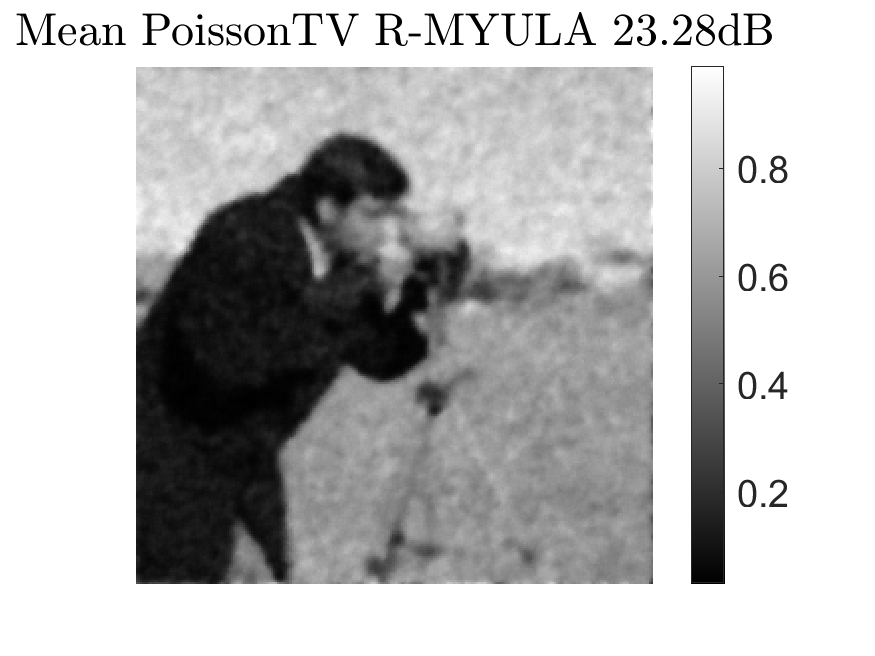}
\end{minipage}
\end{minipage}

\begin{minipage}{\textwidth}   
\begin{minipage}{0.32\textwidth}
    \includegraphics[width=\linewidth]{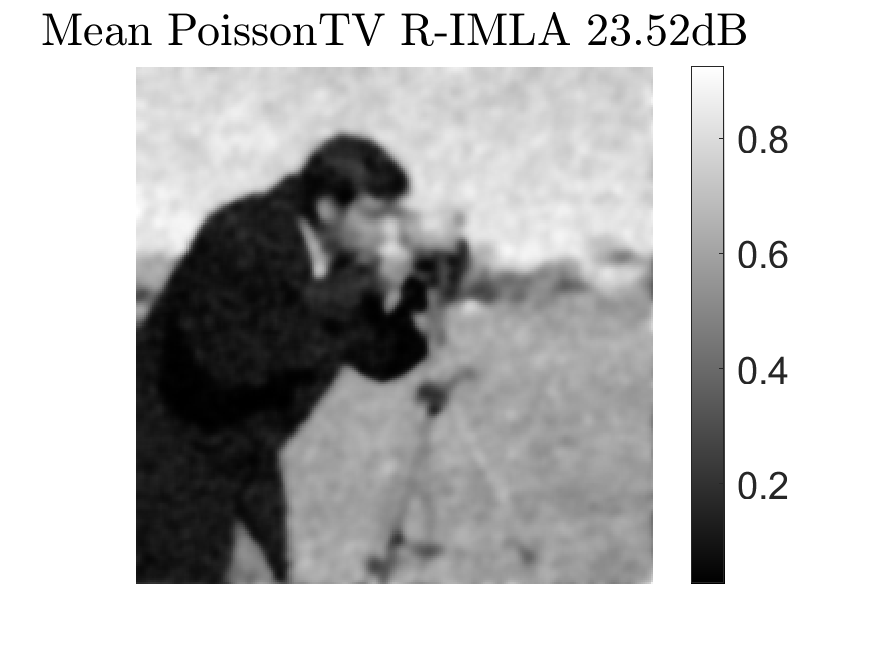}
\end{minipage}
\begin{minipage}{0.32\textwidth}
    \includegraphics[width=\linewidth]{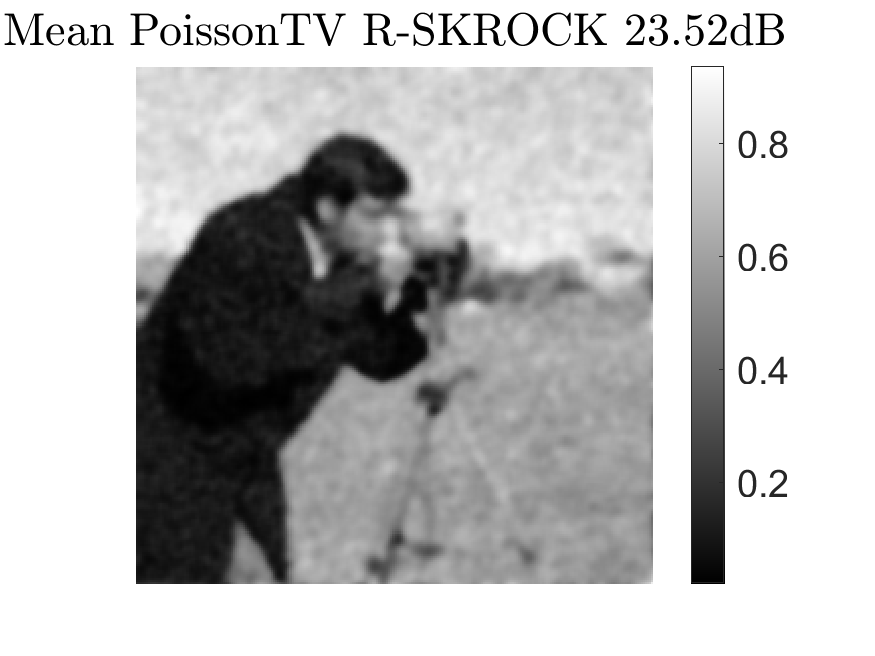}
\end{minipage}
\begin{minipage}{0.32\textwidth}
    \includegraphics[width=\linewidth]{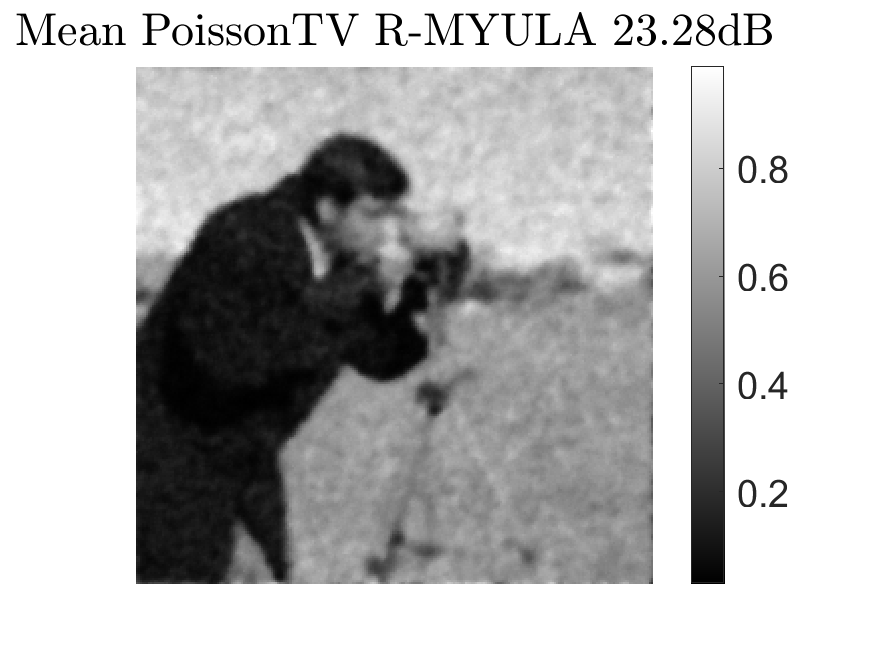}
\end{minipage}
\captionof{figure}{Posterior mean for Poisson deconvolution experiment. R-IMLA was run using $h=6.65\times 10^{-5}$ (top), and $h=1.16\times 10^{-3}$ (bottom) which is equivalent to the effective step size of R-SKROCK when $s=10$ (top) or $s=40$ (bottom); R-MYULA was run using $h=1/L$.}
\label{fig:post-mean-integration-time-cman}
\end{minipage}
\end{figure}
\begin{figure}[t]
    \centering
\begin{minipage}{\textwidth}   
\begin{minipage}{0.32\textwidth}
    \includegraphics[width=\linewidth]{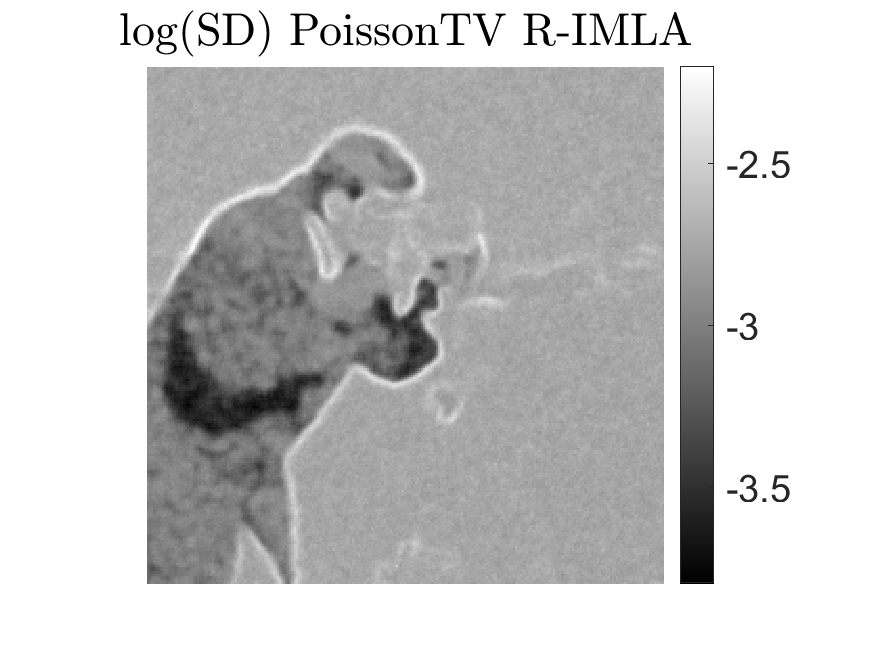}
\end{minipage}
\begin{minipage}{0.32\textwidth}
    \includegraphics[width=\linewidth]{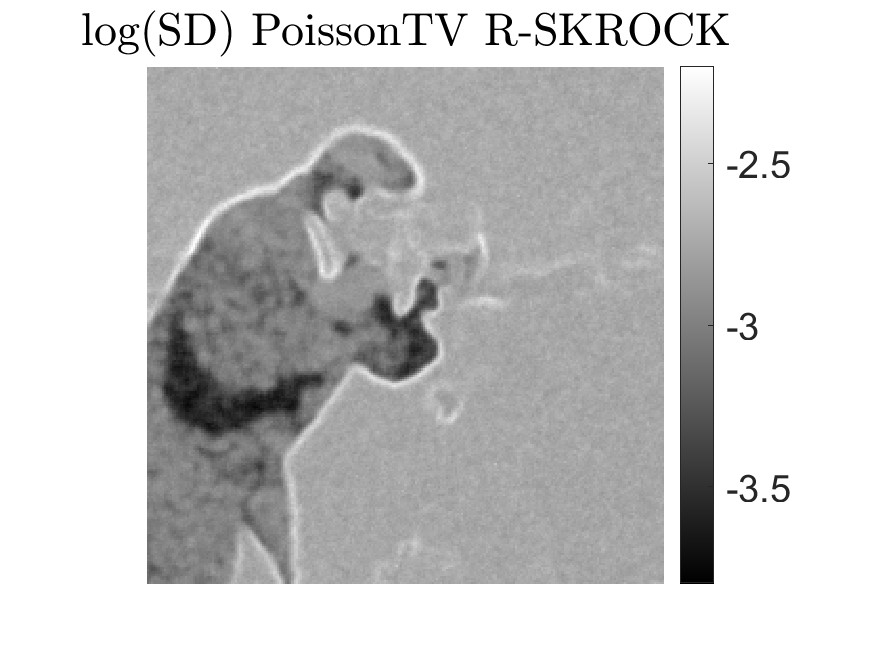}
\end{minipage}
\begin{minipage}{0.32\textwidth}
    \includegraphics[width=\linewidth]{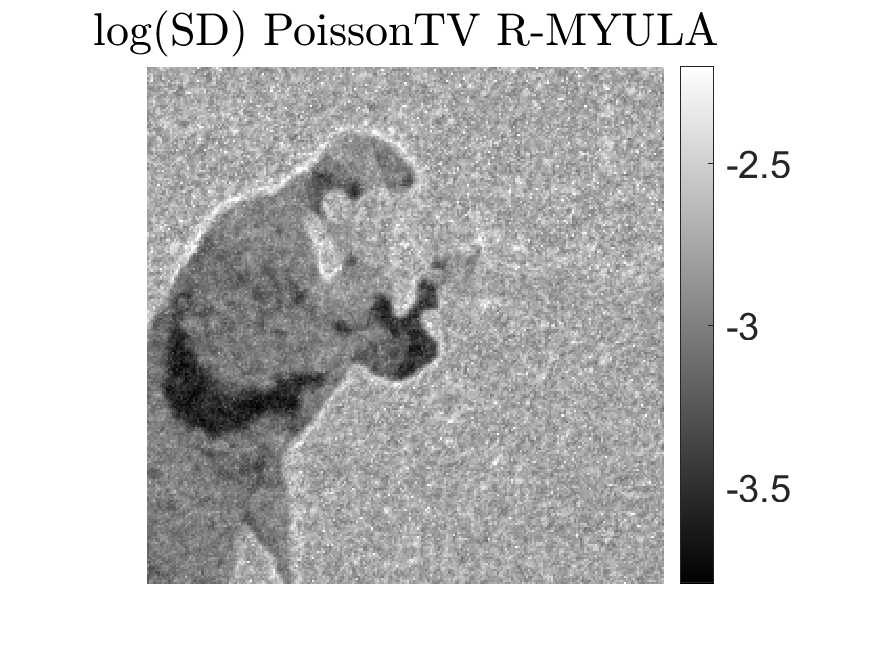}
\end{minipage}
\end{minipage}

\begin{minipage}{\textwidth}   
\begin{minipage}{0.32\textwidth}
    \includegraphics[width=\linewidth]{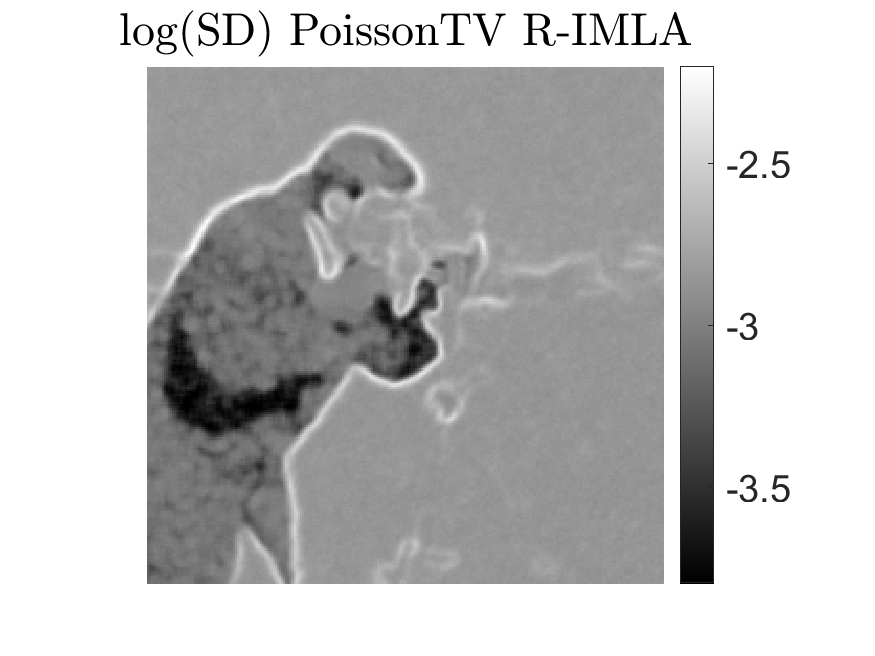}
\end{minipage}
\begin{minipage}{0.32\textwidth}
    \includegraphics[width=\linewidth]{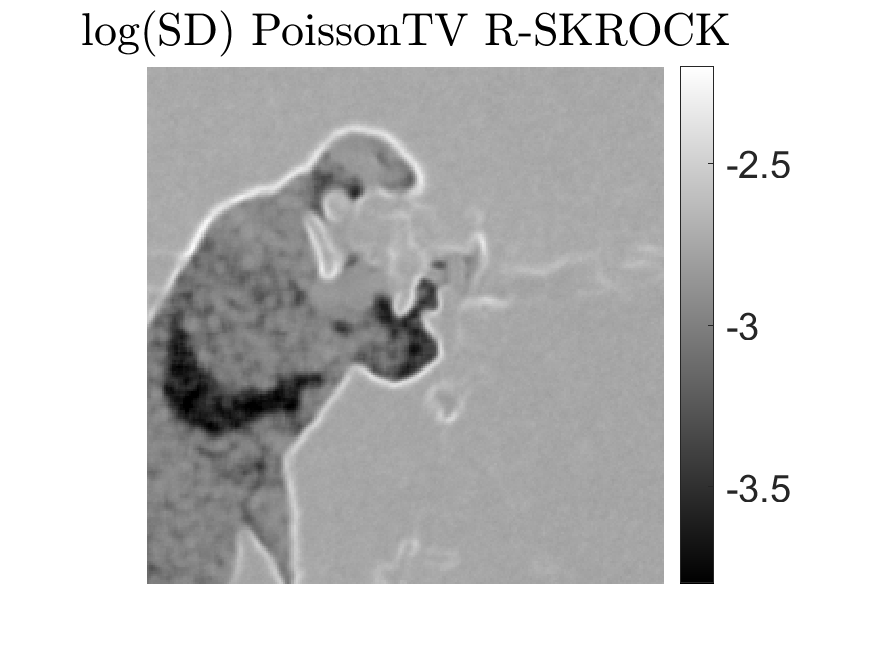}
\end{minipage}
\begin{minipage}{0.32\textwidth}
    \includegraphics[width=\linewidth]{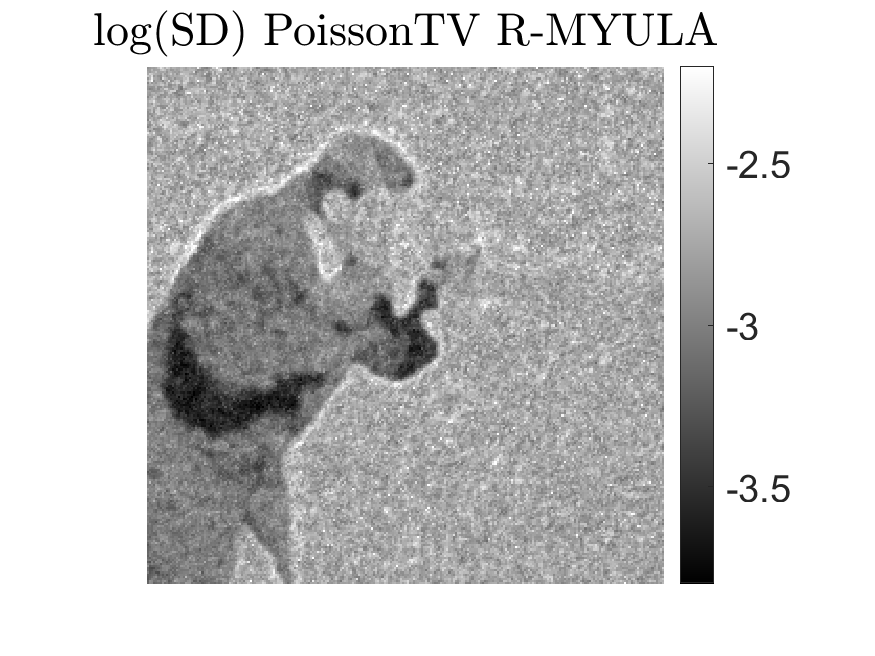}
\end{minipage}
\captionof{figure}{Marginal posterior standard deviation for the Poisson deconvolution experiment, in log-scale. R-IMLA was run using $h=6.65\times 10^{-5}$ (top), and $h=1.16\times 10^{-3}$ (bottom) which is equivalent to the effective step size of R-SKROCK when $s=10$ (top) or $s=40$ (bottom); R-MYULA was run using $h=1/L$.}
\label{fig:std-integration-time-cman}
\end{minipage}
\end{figure}

\subsection{Deconvolution with Poisson noise}\label{sec:poiss-deconv}
We now consider a total-variation (non-blind) image deconvolution problem with Poisson noise \cite{MDAMZ}, which allows us to demonstrate our methodology in a strongly non-Gaussian and non-smooth situation. The posterior distribution that we consider is given by 
\begin{equation} \label{eq:post2}
\pi(x) \propto \exp\left(-\sum_{i=1}^d[(Ax)_i + \beta -y_i\log((Ax)_i + \beta) +\iota_{(Ax)_i\geq 0}] -\theta_{\mathrm{TV}}\mathrm{TV}(x)\right),
\end{equation}
where $\beta>0$ is an arbitrary constant that can be interpreted as a known background level.

This model has two forms of non-smoothness that represent a challenge to the application of strategies derived from the OLSDE \eqref{eq:OLSDE}: (i) the model contains a (hard) positivity constraint; (ii) the total-variation pseudo-norm $\mathrm{TV}(\cdot)$ is not Lipschitz smooth. Also note that \eqref{eq:post2} is weakly log-concave, hence this allows us to illustrate our approach in a situation where the analysis of Sections \ref{subsec:Gaussian} and \ref{subsec:Convex} does not apply directly. In particular, because of the Poisson likelihood and the positivity constraint, $\pi$ is very different from a Gaussian model, so one could expect IMLA to exhibit a noticeable bias when the step size is sufficiently large.

\begin{algorithm}[t]
\caption{R-IMLA}\label{alg:RIMLA}
\begin{algorithmic}
\Require $N \geq 0$, $\delta>0$ and $X_0\in \R^d$.
\For{n=0 : N-1}\\
\textbf{Draw} $$\xi_n\sim \mathcal{N}(0,I_d)$$
\textbf{Set}
$$X_{n+1}^i=\lvert \tilde{X}_{n+1}^i\rvert, \qquad \text{ for all } i=1,\dots, d,$$
\begin{gather}\label{eq:subproblem}
    \tilde{X}_{n+1}\gets \arg\min_{x\in \R^d}2\pot^{\lambda}\left(\frac{1}{2} x+\frac{1}{2}X_{n}\right) +\frac{1}{2\delta}\lVert x-X_{n}-\sqrt{2\delta}\xi_n\rVert^2
\end{gather}
\EndFor
\end{algorithmic}
\end{algorithm}

Following \cite{durmus2017}, to address the lack of smoothness of the total-variation prior we replace $\mathrm{TV}(\cdot)$ by its Moreau-Yosida envelope 
$$
\mathrm{TV}^\lambda(x) = \min_{u\in\R^d} \left\{TV(u)+\frac{1}{2\lambda}\lVert x-u\rVert^2\right\}\, ,
$$
and construct the regularised posterior density
$$
\pi^{\lambda}(x) \propto\exp\left(-\sum_{i=1}^d[(Ax)_i + \beta -y_i\log((Ax)_i + \beta) +\iota_{(Ax)_i\geq 0}] -\theta_{\mathrm{TV}^\lambda}\mathrm{TV}^\lambda(x)\right)\, .
$$
Applying gradient-based sampling strategies to $\pi^{\lambda}$
 remains challenging due to the positivity constraint. Following \cite{MDAMZ}, we address this issue by using a reflected version of the overdamped Langevin diffusion, which naturally incorporates the positivity constraint. This diffusion underpins the R-MYULA and R-SKROCK methods recently introduced in \cite{MDAMZ}. Analogously, we construct a reflected version of IMLA (R-IMLA) to approximate a reflected Langevin diffusion targeting $\pi^\lambda$. The resulting scheme is summarised in Algorithm \ref{alg:RIMLA}\footnote{One can also apply IMLA without replacing $g$ by $g^\lambda$, but this would make the comparisons with MYULA and SKROCK less clear as the algorithms would be targeting different models.}.

\begin{figure}[p]
    \centering
\begin{minipage}{\textwidth}   
\begin{minipage}{0.49\textwidth}
    \includegraphics[width=\linewidth]{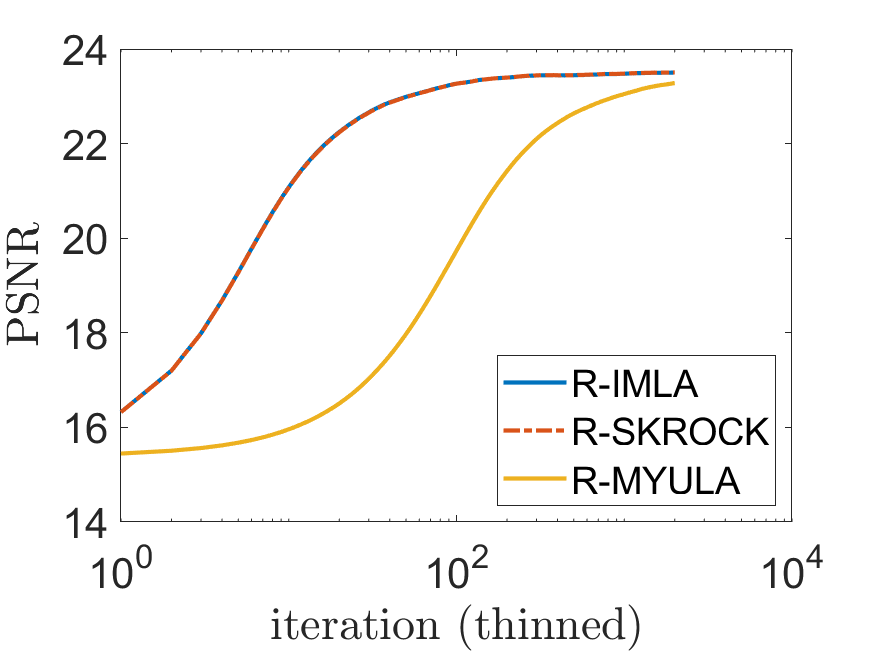}
\end{minipage}
    \begin{minipage}{0.49\textwidth}
        \includegraphics[width=\linewidth]{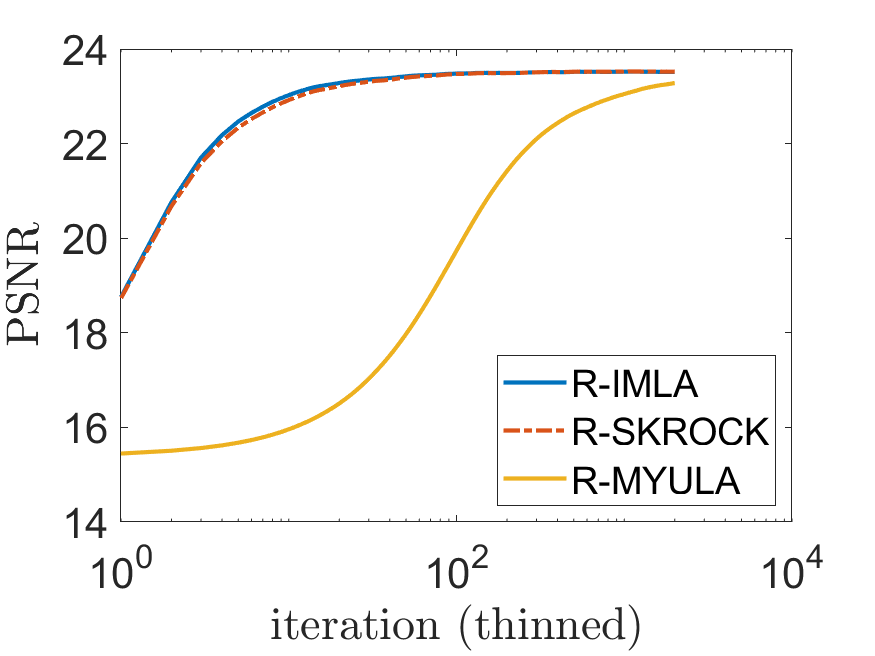}
    \end{minipage}
    \end{minipage}

\begin{minipage}{\textwidth}   
\begin{minipage}{0.49\textwidth}
    \includegraphics[width=\linewidth]{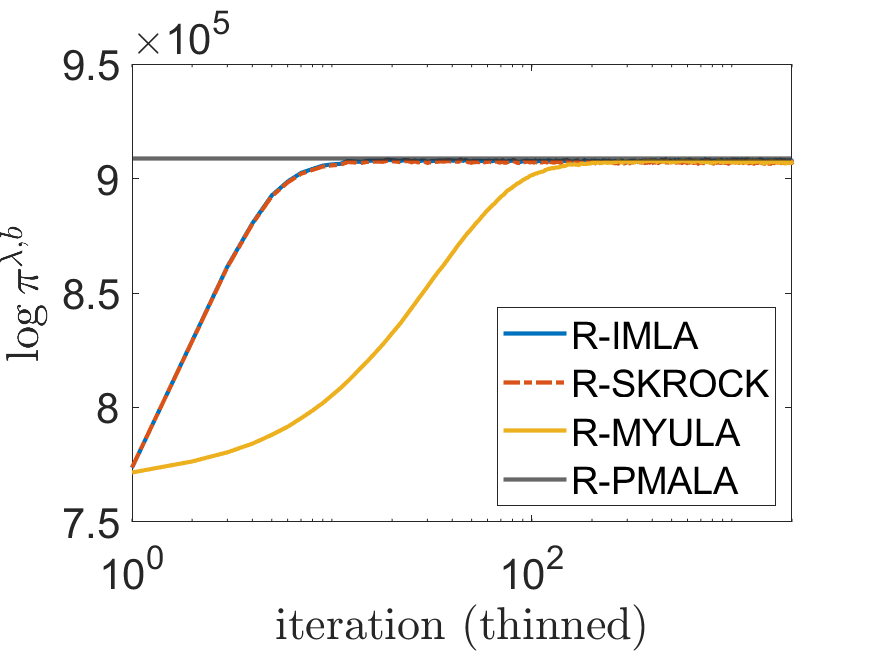}
 \end{minipage}
     \begin{minipage}{0.49\textwidth}
        \includegraphics[width=\linewidth]{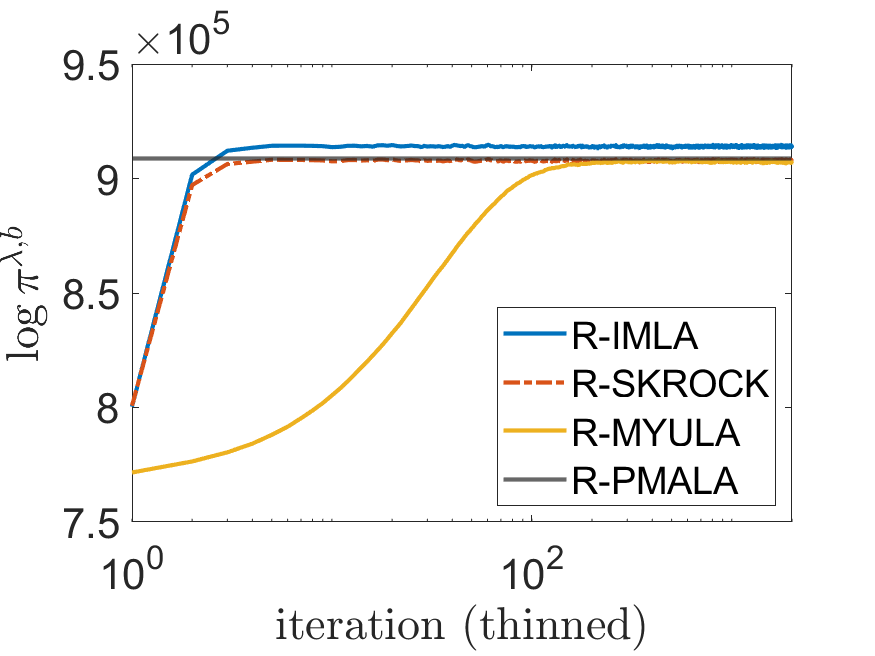}
    \end{minipage}

  \end{minipage}

\begin{minipage}{\textwidth}   
\begin{minipage}{0.49\textwidth}
    \includegraphics[width=\linewidth]{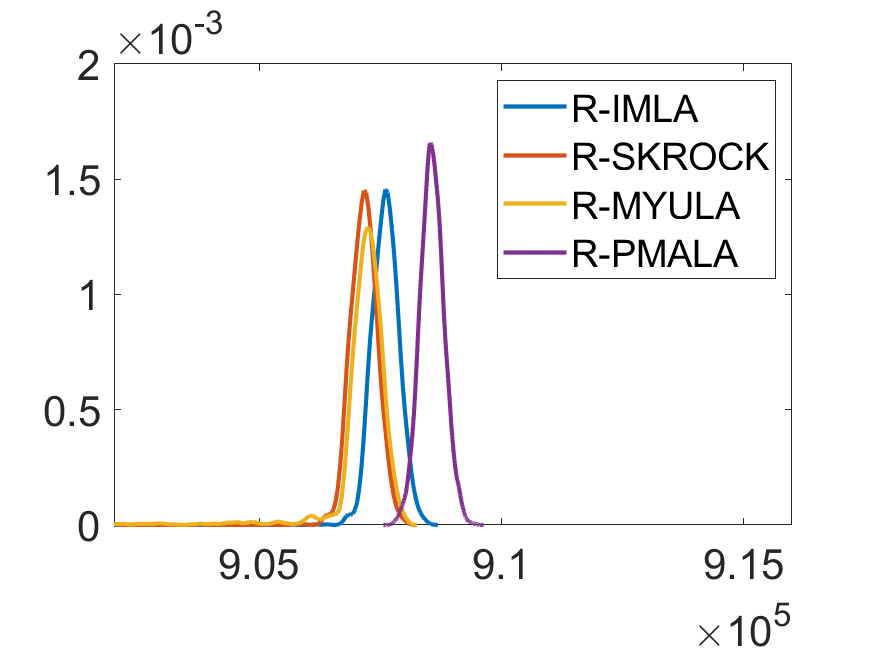}    
\end{minipage}
    \begin{minipage}{0.49\textwidth}
    \includegraphics[width=\linewidth]{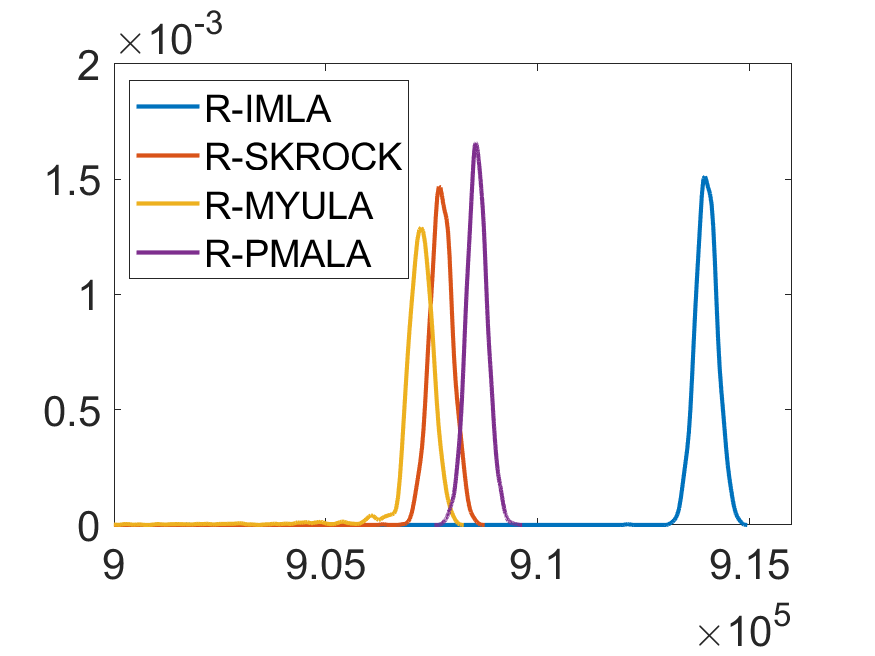}    
    \end{minipage}
    \captionof{figure}{Poisson deconvolution experiment. PSNR of the running mean (top), $\log \pi^{\lambda,b}(x)$ (middle), and non-parametric density estimates of $\log \pi^{\lambda,b}(x)$ at stationarity (bottom). R-IMLA was run using $h=6.65\times 10^{-5}$ (left), and $h=1.16\times 10^{-3}$ (right) which is equivalent to the effective step size of R-SKROCK when $s=10$ (left) or $s=40$ (right), R-MYULA was run using $h=1/L$, R-PMALA is shown as an unbiased reference.}
    \label{fig:psnr-int}
    \end{minipage}
\end{figure}

\begin{figure}[t]
    \centering
\begin{minipage}{\textwidth}   
    \begin{minipage}{0.49\textwidth}
        \includegraphics[width=\linewidth]{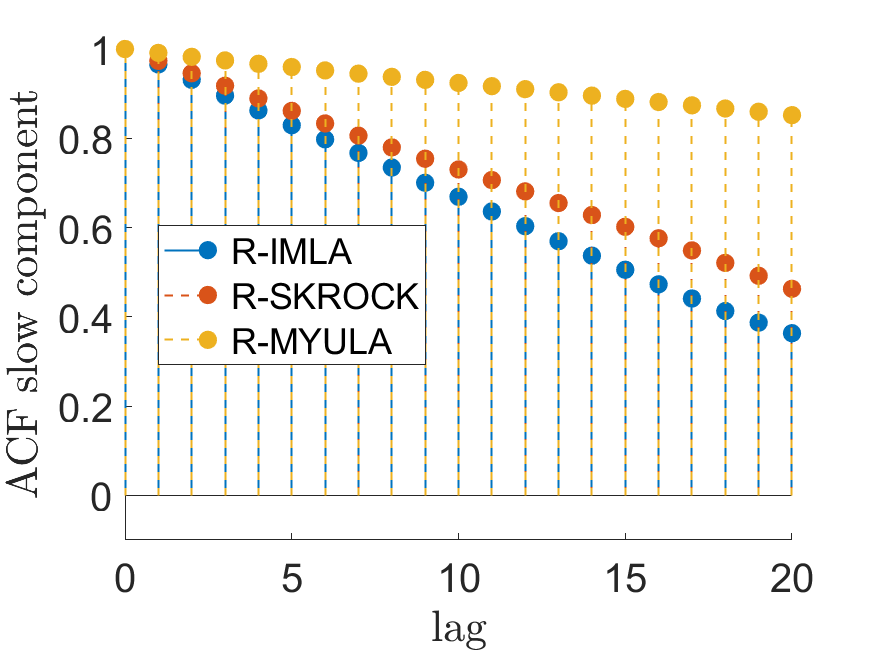}
    \end{minipage}
    \begin{minipage}{0.49\textwidth}
        \includegraphics[width=\linewidth]{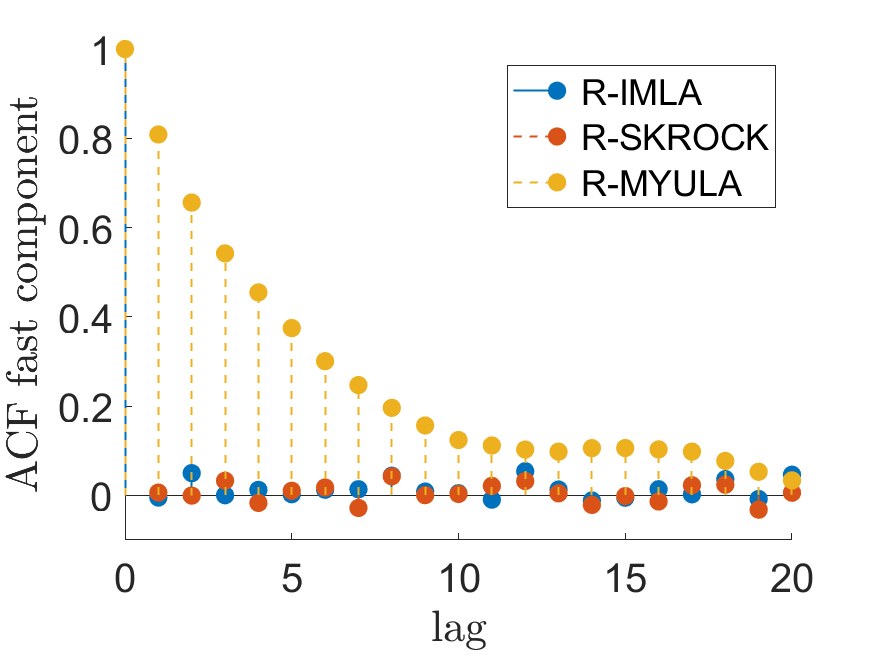}
    \end{minipage}
    \end{minipage}

\begin{minipage}{\textwidth}   
    \begin{minipage}{0.49\textwidth}
        \includegraphics[width=\linewidth]{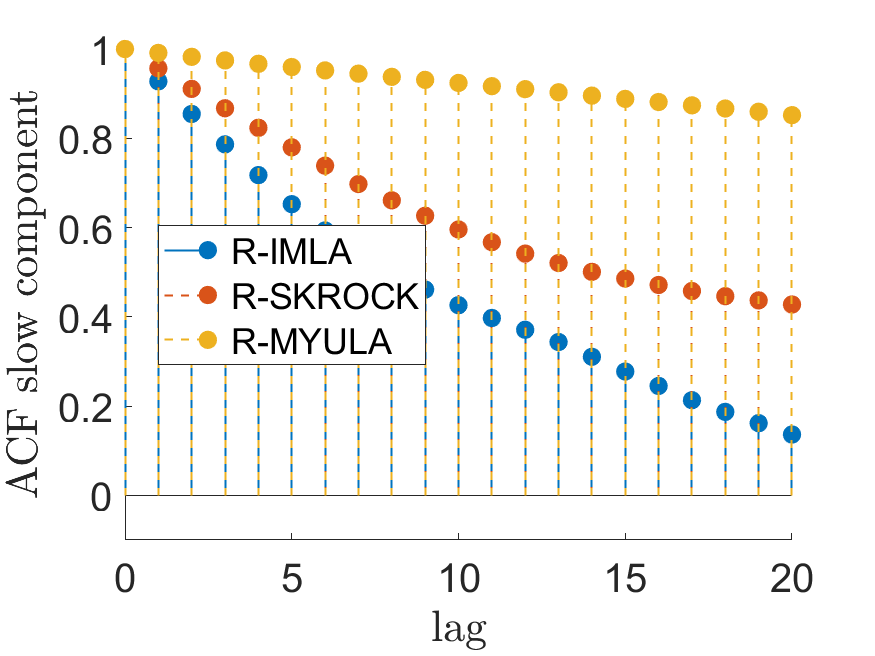}
    \end{minipage}
    \begin{minipage}{0.49\textwidth}
        \includegraphics[width=\linewidth]{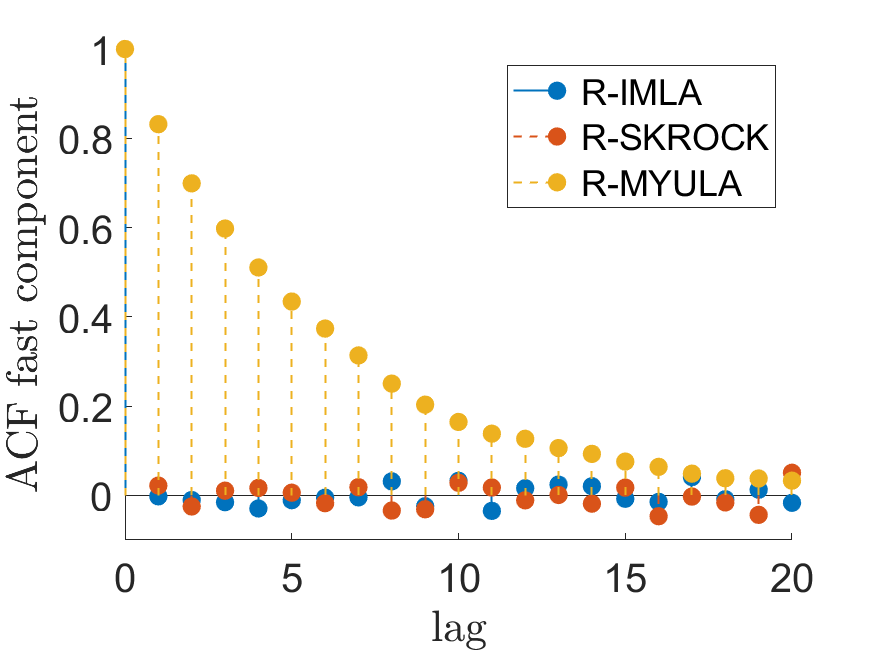}
    \end{minipage}
    \captionof{figure}{Comparison of ACF for the slowest (left) and the fastest component (right) for the Poisson deconvolution experiment \ref{sec:poiss-deconv-cost}. R-IMLA was run using $h=2.82\times 10^{-4}$ (top), and $h=1.16\times 10^{-3}$ (bottom) which is equivalent to the effective time step of R-SKROCK when $s=20$ (top) or $s=40$ (bottom), R-MYULA was run using $h=1/L$.}
    \label{fig:poisson-acf-improved}
    \end{minipage}
\end{figure}

 For this experiment, $A$ is a $5\times 5$ box blur operator, while  we use the \texttt{cameraman} image  scaled  to have  a mean intensity value (MIV) of 10 (ground truth and the observed data can be observed in Figure~\ref{fig:poisson-data-cman}). Similarly to \cite{MDAMZ}, the background noise level $\beta$ is set to $1\%$ of the MIV. Using the guidelines from \cite[Section~4.4]{MDAMZ} we set $\lambda=L_{f_y}^{-1}$ where $L_{f_y}$ is the Lipschitz constant of $x \mapsto \nabla \log p(y|x)$, and set the $\theta_{\mathrm{TV^\lambda}}$ automatically by marginal likelihood estimation by using the SAPG algorithm \cite{vidal2019maximum}.

 We apply the R-IMLA, R-SKROCK, and R-MYULA algorithms to generate Monte Carlo samples from $\pi^\lambda$. As in the previous experiment, we set the time-step for R-MYULA to half the stability barrier $h_{\rm{R-MYULA}}=1/L$, with $L = L_{f_y} + 1/\lambda$. For R-SKROCK, \cite{MDAMZ} recommends setting $s \in \{5,\ldots,15\}$ and conducts experiments with $s = 10$, so we also set $s=10$. For completeness, we also report results with $s=20$ and $s=40$ in order to assess the effect of the step size on the convergence speed and the estimation bias - the corresponding step size values can be seen in Table \ref{tab:poisson-time-table}. Again to make the comparison fair between R-MYULA and R-SKROCK we use $8\times 10^{5}$ iterations for R-MYULA and $8\times 10^{5}/s$ iterations for R-SKROCK.

With regard to IMLA, unlike most examples seen before, the posterior $\pi^{\lambda}$ is not strongly log-concave and hence we do not have an optimal choice for the time-step. Thus, we consider comparisons with R-IMLA in two different ways. In the first comparison, we choose to ignore the computational cost of solving the optimisation problem \eqref{eq:subproblem} at each iteration and use the same time-step and the same number of iterations for IMLA and SKROCK. These comparisons are reported in Section \ref{sec:poiss-deconv-int}. In contrast, in the second comparison, we keep the time steps equal but take into account the cost per iteration of each algorithm by modifying the number of iterations for IMLA so that IMLA and SKROCK have the same clock time. These experiments are reported in Section \ref{sec:poiss-deconv-cost}.

\begin{table}[t]
\centering
\def\arraystretch{1.5}
\setlength{\tabcolsep}{0.5em}
\begin{tabular}{|c|c|cc|c|cc|}
\hline
 \multicolumn{1}{|c|}{\multirow{2}{*}{stages}} & \multicolumn{1}{c|}{\multirow{2}{*}{step size}} & \multicolumn{2}{c|}{cost/iter(s)}                      & factor                    & \multicolumn{2}{c|}{iterations}                                            \\ \cline{3-7} 
\multicolumn{1}{|c|}{}                           & \multicolumn{1}{c|}{}                        & \multicolumn{1}{c|}{R-SKROCK} & \multicolumn{1}{c|}{R-IMLA} & $\frac{t(\text{R-IMLA})}{t(\text{R-SKROCK})}$          & \multicolumn{1}{c|}{R-SKROCK} & \multicolumn{1}{c|}{R-IMLA} \\ \hline\hline
1                                      & $6.65\times 10^{-6} $                                              & \multicolumn{1}{c|}{}       & \multicolumn{1}{c|}{}    &                           & \multicolumn{1}{c|}{800000} & \multicolumn{1}{c|}{}      \\ \hline
10                                  & $6.65\times 10^{-5}  $                                            & \multicolumn{1}{c|}{0.22}   & 0.22                     & \multicolumn{1}{c|}{0.99} & \multicolumn{1}{c|}{80000}  & 80000                                           \\ \hline
20                                       & $2.82\times 10^{-4}  $                                          & \multicolumn{1}{c|}{0.42}   & 0.28                     & \multicolumn{1}{c|}{0.67} & \multicolumn{1}{c|}{40000}  & 62000                                         \\ \hline
40 & $1.16\times 10^{-3}      $                                        & \multicolumn{1}{c|}{0.84}   & 0.45                     & \multicolumn{1}{c|}{0.54} & \multicolumn{1}{c|}{20000}  & 36000                                     \\ \hline
\end{tabular}
\caption{Cost in seconds per iteration for different step sizes used to run R-IMLA and the equivalent number stages for R-SKROCK. Listing the factor showing the cost improvement of R-IMLA and the number of iterations used to run the experiments. R-SKROCK with 1 stage is equivalent to R-MYULA.}
\label{tab:poisson-time-table}
\end{table}

\subsubsection{Comparison according to integration time}\label{sec:poiss-deconv-int}

For each of the algorithms R-MYULA, R-IMLA and R-SKROCK the estimated posterior means are reported in Figure~\ref{fig:post-mean-integration-time-cman} and the pixel-wise standard deviations are reported in Figure~\ref{fig:std-integration-time-cman}. In both figures, the results for R-SKROCK and R-IMLA are almost identical, which is consistent with the PSNRs obtained with the posterior means. In contrast, the reconstruction with R-MYULA is significantly worse which illustrates cases where R-MYULA has not converged.

In Figure~\ref{fig:psnr-int} we compare the PSNR of the running mean, the trace of the statistic $\log \pi^{\lambda}(x)$, and non-parametric density estimates of the statistic $\log \pi^{\lambda}(x)$ at stationarity. In each case, R-IMLA and R-SKROCK show similar convergence speeds while being significantly faster than R-MYULA. To assess the true posterior distribution of the statistics $\log\pi^{\lambda}$, Figure~\ref{fig:psnr-int} also reports the results obtained by running the MH-corrected R-PMALA method, which produces asymptotically unbiased samples from $\pi^\lambda$ (we calculated the density plots displayed in Figure~\ref{fig:psnr-int} by using $9.8 \times 10^6$ iterations of R-PMALA). We observe that all the algorithms produce accurate estimates of the posterior distribution of the statistics $\log\pi^{\lambda}$, with errors of the order of $0.5\%$ or smaller. In this specific experiment IMLA achieves a slightly smaller bias than the other methods in the case of a smaller time step ($s=10$), and the other methods are more accurate than IMLA when the step size is large ($s = 40$), but the differences are very small and this ordering could be modified by considering a marginally different MIV value or a different choice of $\theta$.

We conclude that IMLA and SKROCK perform similarly in terms of integration time, and they significantly outperform MYULA, both in situations that are strongly log-concave as well also in settings involving weak log-concavity and constraints, as illustrated in this last set of experiments.

\subsubsection{Comparison according to computational cost}\label{sec:poiss-deconv-cost}
The experiments that we have conducted so far suggest that IMLA and SKROCK perform similarly when compared without taking into account the computational cost per iteration. The computational cost per iteration of SKROCK depends directly on the number of internal stages $s$ used, whereas the computational cost of IMLA depends on the choice of solver used. To demonstrate that IMLA can provide a competitive alternative to SKROCK despite requiring implicit steps, and to highlight the importance of choosing a suitable solver, we now perform a comparison where IMLA is implemented by using a limited memory quasi-Newton solver\footnote{we use the FORTRAN implementation \url{https://github.com/stephenbeckr/L-BFGS-B-C} with a MATLAB wrapper, and set the tolerance level to $10^{-4}$.} \cite{byrd1995}. As in the previous section, we use the same time step for R-IMLA and R-SKROCK, but we now run R-IMLA for the same clock time as for R-SKROCK (see Table~\ref{tab:poisson-time-table}). We observe that for $s = 10$, IMLA and SKROCK complete the same number of iterations in the same clock time, whereas IMLA is $50\%$ more computationally efficient than SKROCK when $s=20$, and almost twice as efficient in the large step regime ($s=40$). Figure \ref{fig:poisson-acf-improved} compares the convergence speed of IMLA and SKROCK with time-normalised ACF plots, for the time steps related to $s=20$ and $s=40$. Observe that IMLA exhibits a moderately higher convergence speed than SKROCK for the moderately large step ($s=20$) and a significantly better speed when a large step is used ($s=40$).

\section{Discussion and conclusions}\label{sec:conclusion}
This paper presented IMLA, a new proximal MCMC methodology for Bayesian inference in imaging inverse problems with an underlying convex geometry. Similarly to previous proximal MCMC methods such as MYULA \cite{durmus2017} and SKROCK \cite{PVZ20}, IMLA is derived from a discrete-time approximation of an overdamped Langevin diffusion that targets the posterior distribution of interest (or a Moreau-Yosida approximation when the original target is not smooth). IMLA takes the form of a stochastic relaxed proximal-point iteration that admits two complementary interpretations. For models that are smooth or regularised by Moreau-Yosida smoothing, the approximation constructed by IMLA is equivalent to an implicit midpoint approximation targeting the posterior distribution of interest. 
This discretisation is asymptotically unbiased for Gaussian targets and shown to converge in an accelerated manner for any target that is $\kappa$-strongly log-concave, (i.e., requiring in the order of $\sqrt{\kappa}$ iterations to converge, similarly to accelerated optimisation schemes, with $\kappa$ the condition number), comparing favourably to previous proximal MCMC methods which are biased and either not provably accelerated or only provably accelerated for Gaussian targets.
For models that are not smooth, the algorithm is equivalent to a Leimkuhler–Matthews discretisation of a Langevin diffusion targeting a Moreau-Yosida approximation of the posterior distribution of interest. For targets that are $\kappa$-strongly log-concave, the provided non-asymptotic convergence analysis also identifies the optimal time step which maximizes the convergence speed. We demonstrated the proposed methodology with a range of numerical experiments, including some tractable toy models designed to illustrate different classes of models, as well as two challenging experiments related to image deconvolution with Gaussian and Poisson noise.

From a methodological perspective, future works could focus on embedding IMLA within more Bayesian inference complex schemes, such as the stochastic approximation proximal gradient schemes for empirical Bayesian estimation \cite{vidal2019maximum} and proximal nested sampling \cite{Cai2022} for Bayesian model selection. Extending IMLA to mildly non-convex problems is also an important perspective for future work. 
Moreover, with regard to applications, we believe that IMLA has great potential for low-photon scientific imaging applications, where uncertainty quantification is critical and where the combination of challenging noise statistics (e.g., Binomial and Geometric noise) and the lack of ground truth data make log-concave priors a very convenient option \cite{MDAMZ}.

\appendix
\section{Appendix}
The appendix contains collected proofs of the results from Section \ref{subsec:Convex} and details of the model used Section ~\ref{sec:gmm-model}.

\subsection{Proof of Proposition~\ref{prop:summary}}\label{proof-prop-2}

Before giving the proof of Proposition \ref{prop:n_nonlinear} we give two auxiliary lemmas which summarise the properties of the constant $C$ as a function of $\delta$ and the bias for $\theta=1$. As $C$ has the same form for both the Gaussian and the strongly log-concave setting the following Lemma holds for both cases.

\begin{lemma}\label{lem:C}
    Let $C$ be given by \eqref{eq:contb} or \eqref{eq:Cnonlinear}, then we have
    \begin{equation*}
        C=\begin{cases}
            \frac{(1-\theta)L\delta-1}{\theta L\delta+1} & \delta>\delta_*\\
            \frac{1-(1-\theta)m\delta}{1+\theta m\delta} & \delta\leq \delta_*
        \end{cases}
    \end{equation*}
    where $\delta_*$ is the value of $\delta$ which minimises $C$ and is given by
    \begin{equation*}
    \delta_{*}=\frac{(2 \theta-1)( L+m)+\sqrt{(1-2 \theta )^2 (L+m)^2+16 (1-\theta) \theta  L m}}{4 (1-\theta) \theta  L m}
\end{equation*}
\end{lemma}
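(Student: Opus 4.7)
The plan is to reduce both definitions \eqref{eq:contb} and \eqref{eq:Cnonlinear} to a single one-dimensional optimisation, namely the maximisation of $|R_1(-z)|$ over $z\in[m\delta,L\delta]$, and then to identify which endpoint achieves the maximum as a function of $\delta$. First I would compute
\begin{equation*}
\frac{d}{dz}R_1(-z) \;=\; -\frac{1}{(1+\theta z)^2},
\end{equation*}
which is strictly negative for $\theta\in(0,1]$ and $z\geq 0$. Hence $z\mapsto R_1(-z)$ is strictly decreasing on $[0,\infty)$, starts at $R_1(0)=1$, and (for $\theta<1$) crosses zero at $z=1/(1-\theta)$ before tending to $-(1-\theta)/\theta$. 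Consequently $|R_1(-z)|$ is continuous and piecewise monotone, so its maximum on the compact interval $[m\delta,L\delta]$ must be attained at one of the two endpoints.

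Next I would evaluate both candidates. In the relevant regime $R_1(-m\delta)=\frac{1-(1-\theta)m\delta}{1+\theta m\delta}\geq 0$, while $-R_1(-L\delta)=\frac{(1-\theta)L\delta-1}{1+\theta L\delta}$ becomes the relevant quantity as soon as $R_1(-L\delta)$ is negative. Setting these two expressions equal and cross-multiplying, the linear-in-$\delta$ terms $\theta(1-\theta)Lm\delta^2$ cancel on both sides and one obtains the quadratic
\begin{equation*}
2\theta(1-\theta)L m\,\delta^2 + (1-2\theta)(L+m)\,\delta - 2 \;=\; 0.
\end{equation*}
Its unique positive root is precisely the claimed expression for $\delta_*$; a sanity check is that at $\theta=1/2$ this collapses to $\delta_*=2/\sqrt{Lm}$, matching the Gaussian case of Proposition \ref{prop:summary}.

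To finish, I would use monotonicity to conclude which branch applies. The left-endpoint value $R_1(-m\delta)$ decreases monotonically to a limit $\leq 0$ in $\delta$, whereas $|R_1(-L\delta)|$ first decreases to $0$ at $\delta=1/((1-\theta)L)$ and then increases. Their difference is therefore monotone in $\delta$ and changes sign exactly once, at $\delta_*$. This yields the piecewise description: the left endpoint dominates for $\delta\leq \delta_*$ (giving the second branch of the formula), the right endpoint dominates for $\delta>\delta_*$ (giving the first branch), and $\delta_*$ is the unique minimiser of $C$ because both pieces are monotone with opposite signs of derivative.

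The main technical obstacle is the bookkeeping needed to drop the absolute values consistently: one must verify $1/((1-\theta)L)\leq \delta_* \leq 1/((1-\theta)m)$ so that $R_1(-m\delta)\geq 0$ on the first branch and $R_1(-L\delta)\leq 0$ on the second branch. Both inequalities reduce to elementary estimates on the explicit quadratic root, but care is needed at the boundary cases $\theta=1/2$ (where the linear coefficient vanishes and the formula simplifies) and $\theta\to 1$ (where $\delta_*\to\infty$, consistent with the unconditional stability of the fully implicit scheme).
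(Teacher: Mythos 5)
Your argument is correct and follows essentially the same route as the paper's proof: reduce $C$ to the maximum of $|R_1(-z)|$ over the endpoints $z=m\delta$ and $z=L\delta$ via the monotonicity of $z\mapsto R_1(-z)$, then locate $\delta_*$ as the unique crossing of the decreasing function $R_1(-m\delta)$ and the increasing function $-R_1(-L\delta)$, which gives the quadratic $2\theta(1-\theta)Lm\,\delta^2+(1-2\theta)(L+m)\,\delta-2=0$ and hence the stated $\delta_*$. (One small wording slip: in the cross-multiplication the $\theta(1-\theta)Lm\delta^2$ terms are quadratic, not linear, and they do not cancel but combine into $-2\theta(1-\theta)Lm\delta^2$; the quadratic you write down is nonetheless correct.)
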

The constant $C$ here corresponds to constant $\rho$ in \cite[Theorem~2]{hodgkinson2021implicit} with $\delta=h/2$, we include the proof of the properties of $C$ here for completeness.

\begin{proof}[Proof of Lemma \ref{lem:C}]
    We split the range of $z$ in the maximum of $C$ based on when the term $R_1(-z)$ is positive, for $z\leq (1-\theta)^{-1}$, or negative, for $z\geq (1-\theta)^{-1}$:
    \begin{equation*}
        C=\left(\max_{z\in [m\delta,(1-\theta)^{-1}\vee L\delta]}R_{1}(-z)\right)\wedge\left(\max_{z\in [(1-\theta)^{-1}\wedge m\delta,L\delta]}-R_{1}(-z)^{2}\right)
    \end{equation*}
    Here we use the notation $\wedge$ to denote the maximum and $\vee$ to denote the minimum. Since $z\mapsto R_1(-z)$ is a decreasing function we can write $C$ as
    \begin{equation*}
        C=\begin{cases}
            R_1(-m\delta) & \text{ if } L\delta \leq (1-\theta)^{-1},\\
            R_1(-m\delta)\wedge (-R_1(-L\delta)) & \text{ if } m\delta \leq (1-\theta)^{-1}\leq L\delta,\\
            -R_1(-L\delta) & \text{ if } (1-\theta)^{-1} \leq m\delta.
        \end{cases}
    \end{equation*}
    As $R_1(-m\delta)$ is decreasing in $\delta$ and $-R_1(-L\delta)$ is increasing in $\delta$ to determine the maximum we only need to determine for what value of $\delta$ they are equal. Observe that $R_1(-m\delta)=-R_1(-L\delta)$ if $\delta =\delta_*$ with $\delta_*$ as in the statement of Lemma \ref{lem:C}. Therefore we have that $C=R_1(-m\delta)$ for $\delta \leq \delta_*$ and $C=-R_1(-L\delta)$ for $\delta>\delta_*$, using the definition of $R_1$ we obtain the claim of the lemma.
\end{proof}

\begin{proposition}\label{prop:theta_1_bias}
Set $\theta = 1$ then the bias may be bounded as
\begin{equation}\label{eq:implicitbound}
W_2(\pi,\tilde \pi) \leq
\min\left(\frac{1}{2} \sqrt{d\delta}, \frac{\sqrt{d}\delta}{4\sigma_{\min}}\right).
\end{equation}
\end{proposition}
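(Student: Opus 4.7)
The plan is to leverage the explicit Gaussian structure: compute the invariant measure $\tilde\pi$ in closed form, reduce the Wasserstein distance to a sum of one-dimensional discrepancies between standard deviations, and then establish a single sharp pointwise inequality from which both branches of the stated minimum will follow.

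First, I would substitute $\theta=1$ into the stability functions \eqref{eq:R1R2}, which collapse to $R_1(z)=R_2(z)=1/(1-z)$. Plugging $z_i=-\delta/\sigma_i^2$ into the invariant variance formula \eqref{eq:numericalinvariant} and simplifying, each coordinate of $\tilde\pi$ is a centred Gaussian with variance
\begin{equation*}
\tilde\sigma_i^2 \;=\; \frac{2\sigma_i^{4}}{2\sigma_i^{2}+\delta}.
\end{equation*}
Since $\pi$ and $\tilde\pi$ are centred product Gaussians with diagonal covariances, the tensorisation of the $W_2$ distance between one-dimensional Gaussians gives $W_2(\pi,\tilde\pi)^2 = \sum_{i=1}^{d}(\sigma_i-\tilde\sigma_i)^2$, so the whole task reduces to bounding $(\sigma_i-\tilde\sigma_i)^2$ uniformly in $i$.

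The key step is to prove the one-sided inequality
\begin{equation*}
0\;\leq\;\sigma_i-\tilde\sigma_i\;\leq\;\frac{\delta}{4\sigma_i}.
\end{equation*}
For this, I would split into cases. If $\delta>4\sigma_i^2$ then $\sigma_i-\delta/(4\sigma_i)<0\leq\tilde\sigma_i$ and the inequality is immediate. Otherwise both $\sigma_i-\delta/(4\sigma_i)$ and $\tilde\sigma_i$ are non-negative and the inequality can be squared: substituting the closed form of $\tilde\sigma_i^2$ and clearing the denominator $(2\sigma_i^2+\delta)\cdot 16\sigma_i^2$, the inequality $\tilde\sigma_i^2\geq(\sigma_i-\delta/(4\sigma_i))^2$ reduces after routine cancellation to $6\sigma_i^2\delta^2\geq \delta^3$, i.e. $\delta\leq 6\sigma_i^2$, which is automatically satisfied in this branch since $\delta\leq 4\sigma_i^2\leq 6\sigma_i^2$. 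This algebraic reduction is where I expect the main bookkeeping effort to lie, but no conceptual obstacle arises.

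Equipped with this pointwise bound, the two summand estimates follow at once. Combining $\sigma_i-\tilde\sigma_i\leq \delta/(4\sigma_i)$ with the trivial bound $\sigma_i-\tilde\sigma_i\leq \sigma_i$ yields
\begin{equation*}
(\sigma_i-\tilde\sigma_i)^2\;\leq\;\frac{\delta}{4\sigma_i}\cdot \sigma_i\;=\;\frac{\delta}{4},
\end{equation*}
while squaring the pointwise bound directly yields $(\sigma_i-\tilde\sigma_i)^2\leq \delta^2/(16\sigma_i^2)\leq \delta^2/(16\sigma_{\min}^2)$. Summing over $i\in\{1,\dots,d\}$ and taking square roots gives $W_2(\pi,\tilde\pi)\leq \tfrac12\sqrt{d\delta}$ and $W_2(\pi,\tilde\pi)\leq \sqrt{d}\,\delta/(4\sigma_{\min})$ respectively, and taking the minimum concludes the proof of \eqref{eq:implicitbound}.
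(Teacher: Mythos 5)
Your proof is correct and follows essentially the same route as the paper's: both reduce $W_2(\pi,\tilde\pi)^2$ to the per-coordinate discrepancy $\sigma_i-\tilde\sigma_i$ (your closed form $\tilde\sigma_i^2=2\sigma_i^4/(2\sigma_i^2+\delta)$ is exactly the paper's $\phi(\sigma_i,\delta)=\sigma_i\bigl(1-(1+\delta/(2\sigma_i^2))^{-1/2}\bigr)$), and both branches of the minimum come from the same pair of bounds $\phi\leq\delta/(4\sigma_i)$ and $\phi\leq\sigma_i$. The only differences are that you actually verify the key pointwise inequality (the paper merely asserts it) and you obtain $\phi^2\leq\delta/4$ by multiplying the two bounds rather than splitting the sum according to whether $\delta\sigma_i^{-2}\leq 4$, a marginally cleaner step that yields the identical conclusion.
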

\begin{proof}[Proof of Proposition \ref{prop:theta_1_bias}]
Taking the limit as $n\to\infty$ in \eqref{eqn:wassersteinDistanceFinal} we have
$$
W_2(\pi,\tilde \pi)^2 = \sum_{i=1}^d \phi(\sigma_i,\delta)^2
$$
where, from \eqref{eq:numericalinvariant},
$$
\phi(\sigma,\delta) = \sigma \left(1-\frac{1}{\sqrt{1+\frac{\delta}{2\sigma^2} }}\right).
$$
It is easy to prove the bounds
\begin{equation}\label{eq:phibounds}
\phi \leq \frac{\delta}{4\sigma}, \qquad\phi \leq \sigma.
\end{equation}
Then
\begin{eqnarray*}
W_2(\pi,\tilde \pi)^2 &=& \sum_{\delta \sigma_i^{-2}\leq 4} \phi(\sigma_i,\delta)^2
+\sum_{\delta \sigma_i^{-2}> 4} \phi(\sigma_i,\delta)^2\\
&\leq & \sum_{\delta \sigma_i^{-2}\leq 4} \frac{\delta^2}{16\sigma_i^{2}}
+\sum_{\delta \sigma_i^{-2}> 4} \sigma_i^{2}\\
&\leq& \sum_{\delta \sigma_i^{-2}\leq 4} \frac{\delta}{4}
+\sum_{\delta \sigma_i^{-2}> 4} \frac{\delta}{4}
\end{eqnarray*}
so that $W_2(\pi,\tilde \pi) \leq \sqrt{d\delta}/2$. Furthermore, from the first bound in \eqref{eq:phibounds} we
readily obtain
$
W_2(\pi,\tilde \pi) \leq \frac{\sqrt{d}\delta}{4\sigma_{\min}}. 
$
\end{proof}

\begin{proof}[Proof of Proposition~\ref{prop:summary}]
    We first consider the case $\theta=1/2$. Formula \eqref{eq:numericalinvariant} shows that in this case there is no bias: $\tilde \pi = \pi$. From
\eqref{eq:bbound}, in order to ensure that $W_2(\pi, Q_n)\leq \epsilon$ we have to take $n$ such that
$$
    n |\log(C)| \geq  \log\big(W_2(\pi,Q_0)\big)-\log(\epsilon).
$$
The best performance will correspond to the value  of $\delta$ that minimizes $C$ in \eqref{eq:contb}.
By Lemma \ref{lem:C}, $C$ is minimized when $\delta=\delta_\ast = 2\sigma_{\max}\sigma_{ \min}$, and 
$$
C_{\theta=1/2}^{\rm opt} = \frac{1-\sigma_{\max}^{-2}\delta_\ast/2}{1+\sigma_{\max}^{-2}\delta_\ast/2}.
$$
Since we are interested in case where $\sigma_d$ is small, leading to small $\delta_*$,
$C_{\theta=1/2}^{\rm opt}  \approx 1-\sigma_{\max}^{-2}\delta_*=1-2/\sqrt{\kappa}$ and $|\log C| \approx 2/\sqrt{\kappa}$. This yields
\begin{equation}\label{eq:n1/2}
n\approx \frac{\sqrt{\kappa}}{2}\big[\log\big(W_2(\pi,Q_0)\big)-\log(\epsilon)\big]
\end{equation}
where we note that in the righthand side there is \emph{no dependence on $d$, and the dependence on $\epsilon$ is logarithmic}.

Now consider the case $\theta=1$. By \eqref{eq:bbound} in order to ensure that $W_2(\pi,Q_n)\leq \epsilon$ we first take $\delta$ small enough to get
$W_2(\pi,\tilde \pi)\leq \epsilon/2$ and then $n$ large enough to get $C^n W_2(\pi,Q_0)\leq \epsilon/2$, i.e.
$$
n|\log(C)| \geq \log(W_2(\pi,Q_0))-\log(2^{-1}\epsilon).
$$
 From \eqref{eq:implicitbound},  the bias has the upper bound $\sqrt{d\delta}/2$, which is independent of $\kappa$. This independence is achieved at the prize of the bound being proportional to $\sqrt{\delta}$ rather than to $\delta$, as one may naively expect for a first-order integrator. For $\delta$ small, more precisely for $\delta \leq 4\sigma_{\min}^2$, \eqref{eq:implicitbound} yields the upper bound $\sqrt{d}\delta /(4\sigma_{\min})$, with a first-order, i.e. $\mathcal{O}(\delta)$, behaviour. However the regime $\delta \leq 4\sigma_{\min}^2$ is without any practical interest, because for values of $\delta$ of the order of $\sigma_{\min}^2$, one may use the explicit, $\theta=0$, integrator, with a much lower computational complexity per time-step.

According to \eqref{eq:implicitbound}, we choose the value of $\delta$
\begin{equation*}\label{eq:delta1}
\delta^\star = \max\left(\frac{\epsilon^2}{d},\frac{2\epsilon\sigma_{\min}}{\sqrt{d}}\right)
\end{equation*}
Unlike the optimal $\delta$ for $\theta = 1/2$ we see that $\delta^\star$ remains bounded away from zero as $\sigma_{\min}\downarrow 0$. On the other hand $\delta^\star$ depends on $\epsilon$ and  $d$. Since $R_1 = 1/(1+\delta \lambda)$ is a positive, monotonically decreasing function of $\lambda>0$, the contractivity constant is then
$$C_{\theta=1} = R_1(z_1) = 1/(1+\sigma_{\max}^{-2}\delta^*).$$
and, when $\delta^*$ is small, $C_{\theta=1} \approx 1-\sigma_{\max}^{-2}\delta^*$ and $|\log C| \approx\sigma_{\max}^{-2} \delta^*$
which leads to
\begin{equation}
\label{eq:n1}
n\approx \min\left( \frac{d\sigma_{\max}^{2}}{\epsilon^2}, \frac{\sqrt{d\kappa}\sigma_{\max}}{2\epsilon}\right) \big[\log(W_2(\pi,Q_0))-\log(2^{-1}\epsilon)\big].
\end{equation}
Comparison with \eqref{eq:n1/2} shows that now there is a dependence of $n$ on $d$ and, in addition, the dependence  on $\epsilon$ is now much worse than logarithmic. On the other hand the right hand side of \eqref{eq:n1} remains bounded as $\kappa\uparrow \infty$. The bounds show that $\theta =1/2$ will outperform $\theta = 1$ unless $d/(\sqrt{\kappa} \epsilon^2)$ is small.
\end{proof}

\subsection{Proof of Theorem \ref{thm:nonasymconv}}\label{proof-theorem}

\begin{proof}[Proof of Theorem \ref{thm:nonasymconv}]
    We will estimate $W_2(Q_n,\pi)$ by setting up a coupling between the numerical scheme $X_k$ and a Langevin SDE $L_t$ which is in stationarity with $\pi$ as its invariant distribution. Let $L_0\sim \pi$ and set
\begin{equation*}
L_t=L_0-\int_0^t \nabla \pot(L_s) ds +\sqrt{2} W_t.
\end{equation*}
Let $(X_0,L_0)$ be such that $\mathbb{E}[\lvert X_0-L_0\rvert^2]=W_2(Q_0,\pi)$. Define $L_s^{k}=L_{k\delta+s}$ and
\begin{equation*}
    D_k=X_k-L_0^{k}.
\end{equation*}
We will set up a synchronous coupling by setting $\xi_k=\delta^{-1/2}(W_{k\delta}-W_{(k-1)\delta})$ then
\begin{equation*}
X_k=X_{k-1}-\delta\nabla \pot(\theta X_k+(1-\theta)X_{k-1}) +\sqrt{2}(W_{k\delta}-W_{(k-1)\delta}) +E_k.
\end{equation*}
Here $E_k$ is given by
\begin{align*}
    E_k&=X_k-X_{k-1}+\nabla f(\theta X_k+(1-\theta)X_{k-1})\delta -\sqrt{2}(W_{k\delta}-W_{(k-1)\delta})\\
    &=\delta\nabla F(X_k;X_{k-1}, \xi_k).
\end{align*}
By assumption we have $\lVert E_k\rVert \leq \delta \varepsilon$.
Subtracting the expression for $L_t$ from $X_k$ we have 
\begin{align}\label{eq:Dk}
D_{k+1} =D_k -\delta\nabla \pot\left(\theta D_{k+1}+(1-\theta)D_k+\theta L_{\delta}^{k}+(1-\theta)L_0^k\right)  +\int_0^\delta \nabla \pot(L_s^k) ds+E_k.
\end{align}

By the mean value theorem there exists a matrix $\overline{H}=\overline{H}(L_k,L_{k+1},D_k,D_{k+1})$ with $\sigma(\overline{H})\subseteq [m,L]$ such that
\begin{equation}\label{eq:MVT}
\nabla \pot(\theta D_{k+1}+(1-\theta)D_k+\theta L_{\delta}^{k}+(1-\theta)L_0^k) =\nabla \pot(\theta L_{\delta}^{k}+(1-\theta)L_0^k) +\theta \overline{H}D_{k+1}+(1-\theta)\overline{H}D_k.
\end{equation}
Substituting \eqref{eq:MVT} into \eqref{eq:Dk} we have
\begin{align*}
(1+\theta\delta \overline{H})D_{k+1} =(1-(1-\theta)\delta \overline{H})D_k+\int_0^\delta [\nabla \pot(L_s^k) -\nabla \pot(\theta L_{\delta}^{k}+(1-\theta)L_0^k) ]ds+E_k.
\end{align*}
Observe that this now has a similar form to the Gaussian case since the $D_k$ terms correspond to the numerical contraction of the Gaussian scheme with  $\Sigma^{-1}=\overline{H}$, the integral term controls the bias and $E_k$ corresponds to the error incurred by not solving the implicit step exactly.

Taking $L^2$-norms (with respect to $\mathbb{E}$) we have
\begin{align*}
\lVert D_{k+1} \rVert_{L^2}&\leq \lVert R_1(-\delta \overline{H})D_k\rVert_{L^2} \\
&+ \int_0^\delta \lVert R_2(-\delta \overline{H})(\nabla \pot(L_s^k) -\nabla \pot(\theta L_\delta^k+(1-\theta)L_0^k)) \rVert_{L^2} ds+\lVert R_2(-\delta \overline{H})E_k\rVert_{L^2}.
\end{align*}
Recall $R_1,R_2$ were defined as functions on $\R$ in \eqref{eq:R1R2}, we lift these to functions on matrices. Note that $R_1(-\delta \overline{H})$ is bounded by C, where $C$ is given by \eqref{eq:Cnonlinear}, and $R_2(-\delta\overline{H})\leq R_2(-m\delta)$. Using these bounds and that $\lVert E_k\rVert\leq \delta \varepsilon$ we have
\begin{align}
\lVert D_{k+1} \rVert_{L^2}\leq & C\lVert D_k\rVert_{L^2} + \frac{1}{1+\theta\delta m}\int_0^\delta \lVert (\nabla \pot(L_s^k) -\nabla \pot(\theta L_\delta^k+(1-\theta)L_0^k)) \rVert_{L^2} ds+\frac{\varepsilon\delta}{1+\theta\delta m}. \label{eq:Dkexp}
\end{align}

Now consider
\begin{equation*}
    \int_0^t \lVert (\nabla \pot(L_s^k) -\nabla \pot(\theta L_\delta^k+(1-\theta)L_0^k)) \rVert_{L^2} ds
\end{equation*}
We will now follow the procedure of \cite{Dalalyan} to bound this term. Using that $\pot$ is $L$-Lipschitz we have
\begin{align}\label{eq:midpoint}
\lVert \nabla \pot(L_s^k) -\nabla \pot(\theta L_\delta^k+(1-\theta)L_0^k) \rVert_{L^2} 
\leq L \theta\lVert (L_s^k- L_\delta^k) \rVert_{L^2}+ L(1-\theta) \lVert (L_s^k-L_0^k) \rVert_{L^2}.
\end{align}
By the definition of $L_s$ we have
\begin{align*}
\lVert (L_s-L_0) \rVert_{L^2} &= \lVert -\int_0^s \nabla \pot(L_r) dr +\sqrt{2}W_s\rVert_{L^2}\\
&\leq \lVert -\int_0^s \nabla \pot(L_r) dr\rVert_{L^2} +\sqrt{2}\lVert W_s\rVert_{L^2}\\
&\leq \int_0^s\lVert  \nabla \pot(L_r) \rVert_{L^2}dr +\sqrt{2sd}.
\end{align*}
Since $L_r$ is stationary we have
\begin{align*}
\lVert (L_s-L_0) \rVert_{L^2} \leq s \lVert  \nabla \pot(L_0) \rVert_{L^2} +\sqrt{2sd}.
\end{align*}
By stationarity we also have
\begin{align*}
\lVert (L_\delta-L_s) \rVert_{L^2} \leq (\delta-s) \lVert  \nabla \pot(L_0) \rVert_{L^2} +\sqrt{2(\delta-s)d}.
\end{align*}
By \cite[Lemma 3]{Dalalyan} we have $\lVert  \nabla \pot(L_0) \rVert_{L^2} \leq \sqrt{Ld}$. Combining these estimates with \eqref{eq:midpoint}
\begin{align*}
\lVert \nabla \pot(L_s) &-\nabla \pot(\theta L_1+(1-\theta)L_0) \rVert_{L^2} \leq \\ L \theta((\delta-s) \sqrt{Ld} &+\sqrt{2(\delta-s)d})+ L(1-\theta)(s \sqrt{Ld} +\sqrt{2sd}).
\end{align*}
Integrating gives
\begin{align}\label{eq:bias}
\int_0^\delta\lVert \nabla \pot(L_s) -\nabla \pot(\theta L_1+(1-\theta)L_0) \rVert_{L^2} ds&\leq L \theta(\frac{1}{2}\delta^2 \sqrt{Ld} +\frac{2}{3}\delta^{\frac{3}{2}}\sqrt{2d})\\+ L(1-\theta)(\frac{1}{2}\delta^2 \sqrt{Ld} +\frac{2}{3}\delta^{\frac{3}{2}}\sqrt{2d})\nonumber
&= \frac{1}{2}\delta^2 L^{\frac{3}{2}}\sqrt{d} +\frac{2}{3}L\delta^{\frac{3}{2}}\sqrt{2d}.
\end{align}

Combining \eqref{eq:Dkexp} and \eqref{eq:bias} we have
\begin{align*}
\lVert D_{k+1} \rVert_{L^2}\leq & C\lVert D_k\rVert_{L^2} + \frac{\frac{1}{2}\delta^2 L^{\frac{3}{2}}\sqrt{d} +\frac{2}{3}L\delta^{\frac{3}{2}}\sqrt{2d}+\varepsilon\delta}{1+\theta\delta m}.
\end{align*}
Therefore we have
\begin{align*}
\lVert D_{k+1} \rVert_{L^2}\leq & C^{k+1}\lVert D_0\rVert_{L^2} + \frac{1-C^{k+2}}{1-C}\frac{\frac{1}{2}\delta^2 L^{\frac{3}{2}}\sqrt{d} +\frac{2}{3}L\delta^{\frac{3}{2}}\sqrt{2d}+\varepsilon\delta}{1+\theta\delta m}.
\end{align*}
Using that $W_2(Q_n,\pi) \leq D_n$ and $D_0=W_2( Q_0,\pi)$ we have
\begin{align*}
W_2(Q_n,\pi)\leq & C^{n}W_2( Q_0,\pi) + \frac{1-C^{n+1}}{1-C}\frac{\frac{1}{2}\delta^2 L^{\frac{3}{2}}\sqrt{d} +\frac{2}{3}L\delta^{\frac{3}{2}}\sqrt{2d}+\varepsilon\delta}{1+\theta\delta m}.
\end{align*}
\end{proof}

\subsection{Proof of Proposition \ref{prop:n_nonlinear}}\label{proof-prop-4}

\begin{proof}[Proof of Proposition \ref{prop:n_nonlinear}]
    Fix $\epsilon>0$, then we ensure that $W_2(Q_n,\pi)\leq \epsilon$ holds by first finding $\delta$ such that the bias $\lim_{n\to\infty}W_2(Q_n,\pi)\leq \epsilon/2$ and then finding $n$ such that the contraction term $C^nW_2(Q_0,\pi)\leq \epsilon/2$. If these hold we have by \eqref{eq:non_asm_bound} that $W_2(Q_n,\pi)\leq\epsilon$ using that $1-C^{n+1}\leq 1$.

    Setting $\delta=\delta_*$ gives the optimal contraction rate and hence we either have that $\delta=\delta_*$ gives a bias smaller than $\epsilon/2$ or we take $\delta$ to be the largest time step such that the bias is less than $\epsilon/2$. For this reason we calculate the smallest value of $\epsilon$ such that the bias is less than $\epsilon/2$ when $\delta=\delta_*$. By taking the limit $n\to \infty$ in \eqref{eq:non_asm_bound} we have the following estimate of the bias:
    \begin{equation}\label{eq:biasest}
        \lim_{n\to\infty}W_2(Q_n,\pi)\leq \frac{1}{1-C}\frac{\frac{1}{2}\delta^2 L^{\frac{3}{2}}\sqrt{d} +\frac{2}{3}L\delta^{\frac{3}{2}}\sqrt{2d}}{1+\frac{1}{2} \delta m}.
    \end{equation}
    Setting $\delta=\delta_*$ in  \eqref{eq:biasest} and using Lemma \ref{lem:C} to determine $C$ we have
    \begin{align*}
        \lim_{n\to\infty}W_2(Q_n,\pi)&\leq \frac{2+m\delta_*}{2m\delta_*}\frac{\frac{1}{2}\delta_*^2 L^{\frac{3}{2}}\sqrt{d} +\frac{2}{3}L\delta_*^{\frac{3}{2}}\sqrt{2d}}{1+\frac{1}{2} \delta_* m}\\
        &=\frac{\sqrt{d}\kappa}{\sqrt{m}}\left( 1 +\frac{4}{3}\kappa^{-\frac{1}{4}}\right).
    \end{align*}
    Hence if $\varepsilon\geq \frac{\sqrt{d}\kappa}{\sqrt{m}}\left( 1 +\frac{4}{3}\kappa^{-\frac{1}{4}}\right)$ then the optimal choice of time step $\delta=\delta_*$ is valid.

    On the other hand if $\varepsilon<\frac{\sqrt{d}\kappa}{\sqrt{m}}\left( 1 +\frac{4}{3}\kappa^{-\frac{1}{4}}\right)$ then we need to determine the largest $\delta$ for which the bias is less than $\epsilon/2$.

    Case 1: If $\delta>\frac{32}{9}L^{-1} $ then we have that 
    \begin{equation*}
        \lim_{n\to\infty}W_2(Q_n,\pi)\leq \frac{1}{1-C}\frac{\delta^2 L^{\frac{3}{2}}\sqrt{d}}{1+\frac{1}{2} \delta m}.
    \end{equation*}
    Since $\delta<\delta_*$ we have by Lemma \ref{lem:C}
    \begin{equation*}
        \lim_{n\to\infty}W_2(Q_n,\pi)\leq \sqrt{L}\kappa \delta \sqrt{d}.
    \end{equation*}
    Therefore we can ensure that the bias is less than $\epsilon/2$ by using 
    \begin{equation*}
         \delta = \frac{\epsilon}{2\sqrt{Ld}\kappa}.
    \end{equation*}
    Note this is permissable provided $\delta>\frac{32}{9}L^{-1} $ which requires
    \begin{equation*}
        \epsilon>\frac{64}{9}\sqrt{d}\kappa L^{-\frac{1}{2}}. 
    \end{equation*}

    Case 2: If $\delta<\frac{32}{9}L^{-1} $, then as above we have
    \begin{equation*}
        \lim_{n\to\infty}W_2(Q_n,\pi)\leq \frac{1}{m}\frac{4}{3}L\delta^{\frac{1}{2}}\sqrt{2d}.
    \end{equation*}
    This is less than $\epsilon/2$ if we set
    \begin{equation*}
        \delta= \frac{9\epsilon^2}{128d\kappa^2 }.
    \end{equation*}
    Which gives a permissable value of $\delta$ for
    \begin{equation*}
        \epsilon^2<d\kappa^2\frac{4096}{81}L^{-1} = \left(\frac{64}{9}\sqrt{d}\kappa L^{-\frac{1}{2}}\right)^2.
    \end{equation*}

It remains to find the value of $n$ such that $C^nW_2(\pi_0,\pi)\leq \varepsilon/2$, that is we require 
\begin{equation*}
    n\geq \frac{\log(\epsilon/2)-\log(W_2(\pi,Q_0))}{|\log C|}.
\end{equation*} 
Since we always use $\delta\leq \delta_*$ we have $C=1-4m\delta/(2+m\delta)$ and we use the approximation $\log(C) \approx 2m\delta$.

From the above analysis we have that the best time step to obtain accuracy $\epsilon$ is
\begin{equation*}
    \delta = \min\left\{\frac{2}{\sqrt{Lm}}, \frac{\epsilon}{2\kappa\sqrt{Ld}}, \frac{9}{128}\frac{\epsilon^2}{d\kappa}\right\}.
\end{equation*}
Therefore we have
\begin{equation*}
    n\approx \frac{1}{2m}\max\left\{\frac{\sqrt{Lm}}{2}, \frac{2\kappa\sqrt{Ld}}{\epsilon}, \frac{128}{9}\frac{d\kappa}{\epsilon^2}\right\} [\log(\epsilon/2)-\log(W_2(\pi,Q_0))].
\end{equation*}
\end{proof}

\subsection{Details of illustrative experiment~\ref{sec:gmm-model}}\label{gmm-appendix}

The posterior distribution $\pi(x)$ for the denoising problem using a Gaussian mixture prior model can be written for $(x)_{i=1}^d \in \R^d$ and data $(y)_{i=1}^d \in \R^d$ as follows

\begin{equation*}
    \pi(x)= \prod_{i=1}^d \left[\omega(y_i) f(x_i, y_i, \mu_0, \delta_0^2)+ (1-\omega(y_i)) f(x_i, y_i,\mu_1, \delta_1^2)\right]
\end{equation*}
\begin{equation*}
    f(x_i, y_i, \mu_k, \delta_k^2) = \exp\left( - \frac{(x_i-\mu_k(y_i))^2}{2 \delta_k^2}\right) 
\end{equation*}
\begin{equation*}
    \mu_k(y_i) = \left(\frac{y_i}{\sigma^2} + \frac{m_k}{\sigma_k^2} \right) \delta_k^2; \,\,\,\,\, \delta_k^2 = \frac{\sigma^2 \sigma_k^2}{\sigma^2 + \sigma_k^2}
\end{equation*}
\begin{equation*}
    \omega(y_i) = \frac{C_0(y_i) \tilde{\omega}}{C}; \,\,\,\,\,  C = \tilde{\omega}C_0(y_i) + (1-\tilde{\omega})C_1(y_i).
\end{equation*}
\begin{equation*}
    C_k(y_i) = \frac{1}{\sqrt{2\pi(\sigma_k^2 + \sigma^2)}} \exp \left( - \frac{(m_k - y_i)^2} {2(\sigma_k^2 + \sigma^2)} \right)
\end{equation*}

Refer to Table \ref{tab:model_param} for the parameter values $(m_{0},m_{1},\sigma^{2}_{0},\sigma^{2}_{1},\tilde{\omega})$ and noise variance $\sigma^{2}$ used in the experiment.

\subsection{\review{Proximal operator for Cauchy distribution}\label{app:cauchy_prox}}

\review{
In Section \ref{sec:1d-distributions} we calculate the proximal operator for the negative log density function of 4 different distributions. For the Laplace, uniform and light-tailed case this is straightforward since the negative log density function is convex. On the other hand, for the Cauchy distribution the negative log density $f(x)=\log(1+x^2)$ is not convex so it is not clear that the proximal operator for this function is well-defined. The proximal operator of $f$ is well-defined if and only if the function 
$$
F(y;x) = f(y) +\frac{1}{2\lambda} \lVert x-y\rVert^2
$$
has a unique minimiser. For the Cauchy setting since $f$ grows slower than the quadratic term we see that $F(y;x)\to\infty$ as $\lVert y\rVert$ tends to infinity and therefore as $F$ is smooth the minimiser of $F$ exists and occurs when $\nabla F(y;x)=0$. Rearranging the equation  $\nabla F(y;x)=0$ we see that any minimiser $y$ of $F$ must satisfy the cubic equation
\begin{equation}\label{eq:cauchy_prox_cubic}
    y^3-xy^2+(1+2\lambda)y-x=0.
\end{equation}
By the fundamental theorem of algebra there are at most $3$ distinct complex roots and at least one real root by the intermediate value theorem. In order to verify how many real roots there are we observe that when $\lambda =0$ the roots are at $y=x$ and $y=\pm i$, therefore for $\lambda$ sufficiently small there is a complex conjugate pair of roots close to $i$ and $-i$ and the unique real root must be close to $x$. Hence for $\lambda$ small the proximal operator of $f$ is well-defined and is given by the unique real solution to \eqref{eq:cauchy_prox_cubic}.  
}

\bibliographystyle{plain}  
\bibliography{references}

\end{document}